\def\Csp{{\mathbb{C}}}
\def\Rsp{{\mathbb{R}}}
\def\px{P_{\Xmatsc}}
\def\pz{P_{\Zmatsc}}
\def\qx{Q^{\left(u\right)}_{\Xmatsc}}
\def\uGausss{u_{\Gausssc}}
\newcommand{\Gausssc}{{\mbox{\tiny${\rm{G}}$}}}
\newcommand{\avec}{{\bf{a}}}
\newcommand{\bvec}{{\bf{b}}}
\newcommand{\dvec}{{\bf{d}}}
\newcommand{\yvec}{{\bf{y}}}
\newcommand{\xvec}{{\bf{x}}}
\newcommand{\gvec}{{\bf{g}}}
\newcommand{\etavec}{{\bf{\eta}}}
\newcommand{\zerovec}{{\bf{0}}}
\newcommand{\Amat}{{\bf{A}}}
\newcommand{\Bmat}{{\bf{B}}}
\newcommand{\Cmat}{{\bf{C}}}
\newcommand{\Dmat}{{\bf{D}}}
\newcommand{\Gmat}{{\bf{G}}}
\newcommand{\Hmat}{{\bf{H}}}
\newcommand{\Imat}{{\bf{I}}}
\newcommand{\Smat}{{\bf{S}}}
\newcommand{\Rmat}{{\bf{R}}}
\newcommand{\Umat}{{\bf{U}}}
\newcommand{\Vmat}{{\bf{V}}}
\newcommand{\Wmat}{{\bf{W}}}
\newcommand{\Xmat}{{\bf{X}}}
\newcommand{\Xmatsc}{{\mbox{\boldmath \tiny $\Xmat$}}}
\newcommand{\deltavec}{{\mbox{\boldmath  $\delta$}}}
\newcommand{\Zmatsc}{{\mbox{\boldmath \tiny $\Zmat$}}}
\newcommand{\Smatsc}{{\mbox{\boldmath \tiny $\Smat$}}}
\newcommand{\Wmatsc}{{\mbox{\boldmath \tiny $\Wmat$}}}
\newcommand{\Zmat}{{\bf{Z}}}
\newcommand{\Psimat}{\mbox{\boldmath $\Psi$}}
\newcommand{\bzeta}{\mbox{\boldmath $\zeta$}}
\def\bzeta{{\mbox{\boldmath $\zeta$}}}
\def\bxi{{\mbox{\boldmath $\xi$}}}
\def\bSigma{{\mbox{\boldmath $\Sigma$}}}
\def\bxi{{\mbox{\boldmath $\xi$}}}
\def\psivec{{\mbox{\boldmath $\psi$}}}
\def\bSigmasc{{\mbox{\boldmath \tiny{$\bSigma$}}}}
\def\etavec{{\mbox{\boldmath $\eta$}}}
\def\XCalsc{{\mbox{\tiny $\mathcal{X}$}}}
\def\XCal{{\mbox{$\mathcal{X}$}}}
\def\muvec{{\mbox{\boldmath $\mu$}}}
\newcommand{\be}{\begin{equation}}
\newcommand{\ee}{\end{equation}}
\newcommand{\beqna}{\begin{eqnarray}}
\newcommand{\eeqna}{\end{eqnarray}}
\begin{document}
\title{Robust Multiple Signal Classification via Probability Measure Transformation}
\author{{Koby Todros, Member, IEEE, and Alfred O. Hero, Fellow, IEEE}
\thanks{Koby Todros is with the Dept. of ECE, Ben-Gurion University of the Negev, Beer-Sheva 84105, Israel. Email: todros@ee.bgu.ac.il.}
\thanks{Alfred O. Hero is with the Dept. of EECS, University of Michigan, Ann-Arbor 48105, MI, U.S.A. Email: hero@eecs.umich.edu.}}

\maketitle
\newtheorem{Theorem}{Theorem}
\newtheorem{Lemma}{Lemma}
\newtheorem{Corollary}{Corollary}
\newtheorem{Conclusion}{Conclusion}
\newtheorem{Proposition}{Proposition}
\newtheorem{Definition}{Definition}
\newtheorem{Remark}{Remark}
\newtheorem{Identity}{Identity}
\begin{abstract}
In this paper, we introduce a new framework for robust multiple signal classification (MUSIC). The proposed framework, called robust measure-transformed (MT) MUSIC, is based on applying a transform to the probability distribution of the received signals, i.e., transformation of the probability measure defined on the observation space. In robust MT-MUSIC, the sample covariance is replaced by the empirical MT-covariance. By judicious choice of the transform we show that:
\begin{inparaenum}[\upshape(1\upshape)]
\item
the resulting empirical MT-covariance is B-robust, with bounded influence function that takes negligible values for large norm outliers, and
\item
under the assumption of spherically contoured noise distribution, the noise subspace can be determined from the eigendecomposition of the MT-covariance. 
\end{inparaenum}
Furthermore, we derive a new robust measure-transformed minimum description length (MDL) criterion for estimating the number of signals, and
extend the MT-MUSIC framework to the case of coherent signals. The proposed approach is illustrated in simulation examples that show its advantages as compared to other robust MUSIC and MDL generalizations. 
\end{abstract}
\begin{keywords}
Array processing, DOA estimation, probability measure transform, robust estimation, signal subspace estimation.
\end{keywords}
\section{Introduction}
\label{sec:intro}
The multiple signal classification (MUSIC) algorithm \cite{Schmidt}, \cite{Viberg} is a well established technique for estimating direction-of-arrivals (DOAs) of signals impinging on an array of sensors. It operates by finding DOAs with corresponding array steering vectors that have minimal projections onto the empirical noise subspace, whose spanning vectors are obtained via eigendecomposition of the sample covariance matrix (SCM) of the array outputs.

In the presence of outliers, possibly caused by heavy-tailed impulsive noise, the SCM poorly estimates the covariance of the array outputs, resulting in unreliable DOAs estimates. In order to overcome this limitation, several MUSIC generalizations have been proposed in the literature that replace the SCM with robust association or scatter matrix estimators, for which the empirical noise subspace can be determined from their eigendecomposition. 

Under the assumption that the signal and noise components are jointly $\alpha$-stable processes \cite{AlphaStable}, it was proposed in \cite{ROCMUSIC} to replace the SCM with empirical covariation matrices that involve fractional lower-order statistics. Although $\alpha$-stable processes are appropriate for modelling impulsive noise  \cite{AlphaStable2}, the assumption that the signal and noise components are jointly $\alpha$-stable is restrictive. In \cite{LIU}, a less restrictive approach considering circular signals contaminated by additive $\alpha$-stable noise was developed that replaces the SCM with matrices comprised of empirical fractional-lower-order-moments. Although this approach is less restrictive than the one proposed in \cite{ROCMUSIC}, violation of the signal circularity assumption, e.g., in the case of BPSK signals, results in poor DOA estimation performance \cite{LIU}.  

In \cite{swami1997}-\cite{swami2002} it was proposed to apply MUSIC after passing the data through a zero-memory non-linear (ZMNL) function that suppresses outliers by clipping the amplitude of the received signals. The ZMNL approach has simple implementation having low computational complexity, and unlike the methods proposed in \cite{ROCMUSIC}, \cite{LIU} it does not require restrictive assumptions on the signal and noise probability distributions. Although the ZMNL preprocessing may result in more accurate DOA estimation than the methods in \cite{ROCMUSIC}, \cite{LIU}, it may not preserve the noise subspace which can lead to performance degradation \cite{Sadler}. 

Under the assumption of normally distributed signals in heavy tailed noise, a similar approach was proposed in \cite{Zoubir} that is based on successive outlier trimming until the remaining data is Gaussian. Normality  of the data is tested using the Shapiro-Wilk's test. Similarly to the ZMNL preprocessing, the noise subspace may not be preserved after the trimming procedure. Moreover, the key assumption that the signals are Gaussian may not be satisfied in some practical scenarios. 

In \cite{Visuri}, a different MUSIC generalization was proposed that replaces the SCM with empirical sign or rank covariances. Using only the assumption of spherically distributed noise, it was shown that convergent estimates of the noise subspace can be obtained from their eigendecomposition.  The influence functions \cite{Hampel} of the empirical sign and rank covariance matrices, that measure their sensitivity to an outlier, are bounded \cite{croux2002sign}. In other words, these estimators are B-robust \cite{Hampel}. However, it can be shown that the Frobenius norms of their matrix valued influence functions do not approach zero as the magnitude of the outlier approaches infinity, i.e., they do not reject large outliers. Indeed, the empirical sign and rank covariance matrices have influence functions with constant Frobenius norms for spherically symmetric distributions. 

In \cite{Visa}, robust M-estimators of scatter \cite{Marrona}, \cite{Huber}, such as the maximum-likelihood, Huber's \cite{Huber}, and Tyler's \cite{Tyler} M-estimators, extended to complex elliptically symmetric (CES) distributions, were proposed as alternatives to the SCM. Under the class of CES distributions having finite second-order moments, these estimators provide consistent estimation of the covariance up to a positive scalar, resulting in consistent estimation of the noise subspace. Although this approach can provide robustness against outliers with negligible loss in efficiency when the observations are normally distributed, it may suffer from the following drawbacks. First, when the observations are not elliptically distributed, M-estimators may lose asymptotic consistency \cite{Hallin}, which may lead to poor estimation of the noise subspace. Second, M-estimators of scatter are often computed using an iterative fixed-point algorithm that converges to a unique solution under some regularity conditions. Each iteration involves matrix inversion which may be computationally demanding in high dimensions, or unstable when the scatter matrix is close to singular. Moreover, although the influence functions of M-estimators may be bounded, they may not behave well for large norm outliers that can negatively affect estimation performance. Indeed, similarly to the method of \cite{Visuri}, Tyler's scatter M-estimator does not reject large outliers and its matrix valued influence function \cite{Visa} has constant Frobenius norm for spherically symmetric distributions.

In \cite{LP_MUSIC}, a robust MUSIC generalization called $l_{p}$-MUSIC was proposed that estimates the noise subspace by minimizing the $l_{p}$-norm ($1<p<2$) of the residual between the data matrix and its low-rank representation. The $l_{p}$-norm with $p<2$ is less sensitive to outliers than the $l_{2}$-norm. Therefore, $l_{p}$-MUSIC is more robust against impulsive noise as compared to MUSIC that is based on $l_{2}$-norm minimization of the data fitting error matrix \cite{LP_MUSIC}. However, unlike MUSIC and other robust generalizations, in $l_{p}$-MUSIC 
the empirical noise subspace is not determined  by solving a simple eigendecomposition problem. Indeed, in \cite{LP_MUSIC} the non-convex $l_{p}$-norm minimization is performed by alternating convex optimization scheme that may converge to undesired local minima.  

In this paper, we introduce a new framework for robust MUSIC. The proposed framework, called robust measure-transformed MUSIC (MT-MUSIC), is inspired by a measure transformation approach that was recently applied to canonical correlation analysis \cite{MTCCA} and independent component analysis \cite{MTICA}. Robust MT-MUSIC is based on applying a transform to the probability distribution of the data. The proposed transform is defined by a non-negative function, called the MT-function, and maps the probability distribution into a set of new probability measures on the observation space. By modifying the MT-function, classes of measure transformations can be obtained that have useful properties. Under the proposed transform we define the measure-transformed (MT) covariance and derive its strongly consistent estimate, which is also shown to be Fisher consistent \cite{Cox}. Robustness of the empirical MT-covariance is established in terms of boundedness of its influence function. A sufficient condition on the MT-function that guarantees B-robustness of the empirical MT-covariance is also obtained. 

In robust MT-MUSIC, the SCM is replaced by the empirical MT-covariance. The MT-function is selected such that the resulting empirical MT-covariance is B-robust, and the noise subspace can be determined from the eigendecomposition of the MT-covariance. By modifying the MT-function such that these conditions are satisfied a class of robust MT-MUSIC algorithms can be obtained. 

Selection of the MT-function under the family of zero-centered Gaussian functions, parameterized by a scale parameter, results in a new algorithm called Gaussian MT-MUSIC. We show that the empirical Gaussian MT-covariance is B-robust with influence function that approaches zero as the outlier magnitude approaches infinity. Under the additional assumption that the noise component has a spherically contoured distribution, we show that the noise subspace can be determined from the eigendecomposition of Gaussian MT-covariance. Note that this spherically contoured noise distribution assumption is weaker than the standard i.i.d. Gaussian noise assumption. We propose a data-driven procedure for selecting the scale parameter of the Gaussian MT-function. This procedure has the property that it prevents significant transform-domain Fisher-information loss when the observations are normally distributed. 

In this paper, a robust estimate of the number of signals is proposed that is based on minimization of a measure-transformed version of the minimum description length (MDL) criterion \cite{Wax}. This criterion, called MT-MDL, is obtained by replacing the eigenvalues of the SCM with the eigenvalues of the empirical MT-covariance. We show that under some mild conditions, minimization of the MT-MDL criterion results in strongly consistent estimation of the number of signals regardless the underlying distribution of the data. These conditions are satisfied when the Gaussian MT-function is implemented and the noise component has a spherically contoured distribution. The MT-MDL criterion with the Gaussian MT-function is called the Gaussian MT-MDL.

The proposed Gaussian MT-MUSIC algorithm is extended to the case of coherent signals impinging on a uniform linearly spaced array (ULA) \cite{VanTrees}. This extension is carried out through forward-backward spatial smoothing of the empirical Gaussian MT-covariance matrix.  

The Gaussian MT-MUSIC algorithm and the Gaussian MT-MDL criterion are evaluated by simulations to illustrate their advantages relative to other robust MUSIC and MDL generalizations. We examine scenarios of non-coherent and coherent signals contaminated by several types of spherically contoured noise distributions arising from the compound Gaussian (CG) family. This family encompasses common heavy-tailed distributions, such as the $t$-distribution, the $K$-distribution, and the CG-distribution with inverse Gaussian texture, and have been widely adopted for modeling radar clutter \cite{CG1}-\cite{CG4}. 

The paper is organized as follows. In Section \ref{MUSIC_ALG}, the robust MT-MUSIC framework is presented. In Section \ref{GMTMUSIC}, the Gaussian MT-MUSIC algorithm is derived. In Section \ref{MTMDLEST}, we propose a measure-transformed generalization of the MDL criterion for estimating the number of signals.  In Section \ref{CMTMUSIC}, a spatially smoothed version of the Gaussian MT-MUSIC algorithm for coherent signals is developed. The proposed approach is illustrated by simulation in Section \ref{Examples}. In Section \ref{Conclusions}, the main points of this contribution are summarized. The proofs of the propositions and theorems stated throughout the paper are given in the Appendix.
\section{Robust measure-transformed MUSIC}
\label{MUSIC_ALG}
In this section, the robust MT-MUSIC procedure is presented. First, the sensor array model is introduced. Second, a general transformation on probability measures is established. Under the proposed transform, we define the MT-covariance matrix and derive its strongly consistent estimate. Robustness of the empirical MT-covariance is studied by analyzing its influence function. Finally, based on the assumed array model, we propose a robust MT-MUSIC procedure that replaces the SCM with the empirical MT-covariance of the received signals. 
\subsection{Array model}
Consider an array of $p$ sensors that receive signals generated by $q<p$ narrowband incoherent far-field point sources with distinct azimuthal DOAs $\theta_{1},\ldots,\theta_{q}$. Under this model, the array output satisfies \cite{Viberg}:
\begin{equation}  
\label{ArrayModel}
\Xmat\left(n\right)=\Amat\Smat\left(n\right)+\Wmat\left(n\right),
\end{equation}
where $n\in\mathbb{N}$ is a discrete time index, $\Xmat\left(n\right)\in\Csp^{p}$ is the vector of received signals, $\Smat\left(n\right)\in\Csp^{q}$ is a zero-mean latent random vector, comprised of the emitted signals, with non-singular covariance, and $\Wmat\left(n\right)\in\Csp^{p}$ is an additive spatially white noise with zero location parameter. The matrix $\Amat\triangleq\left[\avec\left(\theta_{1}\right),\ldots,\avec\left(\theta_{q}\right)\right]\in\Csp^{{p}\times{q}}$ is the array steering matrix, where $\avec\left(\theta\right)$ is the steering vector of the array toward direction $\theta$. We assume that the array is unambiguous, i.e., any collection of $p$ steering vectors corresponding to distinct DOAs forms a linearly independent set. Therefore, $\Amat$ has full column rank, and identification of its column vectors is equivalent to the problem of identifying the DOAs. We also assume that $\Smat\left(n\right)$ and $\Wmat\left(n\right)$ are statistically independent and first-order stationary. To simplify notation, the time index $n$ will be omitted in the sequel except where noted.
\subsection{Probability measure transform}
We define the measure space $\left(\XCal,\mathcal{S}_{\XCalsc},\px\right)$, where $\XCal$ is the observation space of a random vector $\Xmat\in\Csp^{p}$, $\mathcal{S}_{\XCalsc}$ is a $\sigma$-algebra over $\XCal$, and $\px$ is a probability measure on $\mathcal{S}_{\XCalsc}$. Let $g:\XCal\rightarrow\Csp$ denote an integrable scalar function on $\XCal$. The expectation of $g\left(\Xmat\right)$ under $\px$ is defined as
\begin{equation}
\label{ExpDef}
{\rm{E}}\left[g\left(\Xmat\right);\px\right]\triangleq\int\limits_{\XCalsc}g\left(\xvec\right)d\px\left(\xvec\right),
\end{equation}
where $\xvec\in\XCal$. The empirical probability measure $\hat{P}_{\Xmatsc}$ given a sequence of samples $\Xmat\left(n\right)$, $n=1,\ldots,N$ from $\px$ is specified by
\begin{equation}
\label{EmpProbMes}
\hat{P}_{\Xmatsc}\left(A\right)=\frac{1}{N}\sum\limits_{n=1}^{N}\delta_{\Xmatsc(n)}\left(A\right),
\end{equation}
where $A\in\mathcal{S}_{\XCalsc}$, and $\delta_{\Xmatsc(n)}\left(\cdot\right)$ is the Dirac probability measure at $\Xmat\left(n\right)$ \cite{Folland}.
\begin{Definition}
\label{Def1}
Given a non-negative function $u:\Csp^{p}\rightarrow\Rsp_{+}$ satisfying 
\begin{equation} 
\label{Cond}
0<{{{\rm{E}}}\left[u\left(\Xmat\right);\px\right]}<\infty,
\end{equation}
a transform on $\px$ is defined via the relation:
\begin{equation}
\label{MeasureTransform} 
\qx\left(A\right)\triangleq{\rm{T}}_{u}\left[\px\right]\left(A\right)=\int\limits_{A}\varphi_{u}\left(\xvec\right)d\px\left(\xvec\right),
\end{equation}
where $A\in\mathcal{S}_{\XCalsc}$, $\xvec\in\XCal$, and
\begin{equation}
\label{VarPhiDef} 
\varphi_{u}\left(\xvec\right)\triangleq\frac{u\left(\xvec\right)}{{{\rm{E}}}\left[u\left(\Xmat\right);\px\right]}.
\end{equation}
The function $u\left(\cdot\right)$, associated with the transform ${\rm{T}}_{u}\left[\cdot\right]$, is called the MT-function.
\end{Definition}  
\begin{Proposition}[Properties of the transform]
\label{Prop1}
Let $\qx$ be defined by relation (\ref{MeasureTransform}). 
Then
\begin{enumerate}
\item
\label{P1}
$\qx$ is a probability measure on $\mathcal{S}_{\XCalsc}$.
\item
\label{P2}
$\qx$ is absolutely continuous w.r.t. $\px$, with Radon-Nikodym derivative \cite{Folland}:
\begin{equation}
\label{MeasureTransformRadNik}     
\frac{d\qx\left(\xvec\right)}{d\px\left(\xvec\right)}=\varphi_{u}\left(\xvec\right).
\end{equation}
\item 
\label{P3} 
Assume that the MT-function $u\left(\cdot\right)$ is strictly positive, and let $\gvec:\XCal\rightarrow\Csp^{m}$ denote an integrable function over $\XCal$.
If the covariance of $\gvec\left(\Xmat\right)$ under $\px$ is non-singular, then it is non-singular under the transformed probability measure $\qx$. 
\end{enumerate} 
[A proof is given in Appendix \ref{Prop1Proof}]
\end{Proposition}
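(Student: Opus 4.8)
The plan is to treat the three assertions in turn; parts~\ref{P1} and~\ref{P2} are routine, and the substance lies in part~\ref{P3}. For part~\ref{P1}, I would verify the axioms of a probability measure directly from \eqref{MeasureTransform}: non-negativity holds because $u\ge0$ and $0<{\rm{E}}[u(\Xmat);\px]<\infty$ make $\varphi_u$ a non-negative integrable function, so $\qx(A)=\int_A\varphi_u\,d\px\ge0$; countable additivity is inherited from that of the integral, applying the monotone convergence theorem to the partial sums $\sum_k\varphi_u\,\mathbbm{1}_{A_k}$ along a disjoint sequence $\{A_k\}$; and normalization is the identity $\qx(\XCal)=\int_{\XCalsc}\varphi_u\,d\px={\rm{E}}[u(\Xmat);\px]/{\rm{E}}[u(\Xmat);\px]=1$ secured by \eqref{Cond} and \eqref{VarPhiDef}. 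For part~\ref{P2}, $\px(A)=0$ immediately gives $\qx(A)=\int_A\varphi_u\,d\px=0$, so $\qx\ll\px$, and \eqref{MeasureTransform} exhibits the non-negative integrable $\varphi_u$ as a density of $\qx$ against $\px$, which is exactly the defining property of the Radon--Nikodym derivative, yielding \eqref{MeasureTransformRadNik}.

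For part~\ref{P3}, the key is that strict positivity of $u$ upgrades the one-sided absolute continuity of part~\ref{P2} to full equivalence of the two measures: since $u$ is real-valued and strictly positive with ${\rm{E}}[u(\Xmat);\px]\in(0,\infty)$, the density satisfies $\varphi_u(\xvec)\in(0,\infty)$ for every $\xvec$, so $\qx(A)=\int_A\varphi_u\,d\px=0$ forces $\px(A)=0$ as well; hence $\px$ and $\qx$ have the same null sets. I would then use the elementary fact that, writing $\Cmat_\mu$ for the covariance matrix of $\gvec(\Xmat)$ under a measure $\mu$, one has $\avec^H\Cmat_\mu\avec={\rm{Var}}_\mu[\avec^H\gvec(\Xmat)]$ for every $\avec\in\Csp^{m}$, so $\Cmat_\mu$ is singular if and only if there is a nonzero $\avec$ for which $\avec^H\gvec(\Xmat)$ is $\mu$-almost surely equal to a constant. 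Since this property depends on $\mu$ only through its collection of null sets, and $\px$, $\qx$ share that collection, the covariance under $\px$ is singular if and only if the covariance under $\qx$ is; the hypothesis that the former is non-singular therefore delivers the latter.

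There is no deep obstacle: the only steps needing care are the equivalence-of-measures argument, where strict positivity of $u$ is genuinely used (without it $\qx$ could suppress directions along which $\gvec(\Xmat)$ varies and the conclusion would fail), and the tacit requirement that $\gvec(\Xmat)$ be square-integrable under $\qx$ so that $\Cmat_\mu$ under $\qx$ is even defined --- the word ``integrable'' in the statement should be read in that stronger sense, or this taken as a standing assumption. Granting this, the argument closes with no further estimates.
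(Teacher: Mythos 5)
Your proposal is correct, and parts \ref{P1} and \ref{P2} coincide in substance with the paper's proof (the paper simply cites textbook results for the measure axioms and the Radon--Nikodym property where you verify them directly). For part \ref{P3} you take a genuinely different route. The paper argues by contraposition with a direct computation: if $\avec^{H}\bSigma^{(u)}_{\gvec(\xvec)}\avec=0$ for some $\avec\neq\zerovec$, strict positivity of $u$ forces ${\rm{E}}\bigl[\bigl|\avec^{H}\bigl(\gvec(\Xmat)-\muvec^{(u)}_{\gvec(\xvec)}\bigr)\bigr|^{2};\px\bigr]=0$, and writing this unweighted second moment as the sum of the two nonnegative terms $\avec^{H}\bSigma_{\gvec(\xvec)}\avec$ and $\bigl|\avec^{H}\bigl(\muvec_{\gvec(\xvec)}-\muvec^{(u)}_{\gvec(\xvec)}\bigr)\bigr|^{2}$ then kills the $\px$-covariance quadratic form, giving singularity under $\px$. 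You instead upgrade the one-sided absolute continuity of part \ref{P2} to equivalence of $\px$ and $\qx$ (using $\varphi_{u}>0$) and invoke the null-set characterization of degenerate directions, namely that $\avec^{H}\Cmat_{\mu}\avec=0$ iff $\avec^{H}\gvec(\Xmat)$ is $\mu$-a.s.\ constant. Your argument buys the stronger two-sided conclusion (singularity under $\px$ iff singularity under $\qx$) and makes transparent exactly where strict positivity of $u$ is used; the paper's buys a self-contained quadratic-form computation that never needs the notion of equivalent measures. Both proofs tacitly require the second moment of $\gvec(\Xmat)$ under $\qx$ to be finite so that $\bSigma^{(u)}_{\gvec(\xvec)}$ is defined; the paper assumes this implicitly, so your flagging it is a point in your favor rather than a gap.
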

The probability measure $\qx$ is said to be generated by the MT-function $u\left(\cdot\right)$. By modifying $u\left(\cdot\right)$, such that the condition (\ref{Cond}) is satisfied, virtually any probability measure on $\mathcal{S}_{\XCalsc}$ can be obtained. 
\subsection{The measure-transformed covariance}
According to (\ref{MeasureTransformRadNik}) the covariance of $\Xmat$ under $\qx$ is given by  
\begin{equation} 
\label{MTCovZ}   
\bSigma^{\left(u\right)}_{\Xmatsc}\triangleq{\rm{E}}\left[\left(\Xmat-\muvec^{\left(u\right)}_{\Xmatsc}\right)
\left(\Xmat-\muvec^{\left(u\right)}_{\Xmatsc}\right)^H\varphi_{u}\left(\Xmat\right);\px\right],
\end{equation}
where 
\begin{equation} 
\label{MTMean} 
\muvec^{\left(u\right)}_{\Xmatsc}\triangleq{\rm{E}}\left[\Xmat\varphi_{u}\left(\Xmat\right);\px\right]
\end{equation}
is the expectation of $\Xmat$ under $\qx$. Equation (\ref{MTCovZ}) implies that $\bSigma^{\left(u\right)}_{\Xmatsc}$ is a weighted covariance matrix of $\Xmat$ under $\px$, with weighting function $\varphi_{u}\left(\cdot\right)$. Hence, $\bSigma^{\left(u\right)}_{\Xmatsc}$ can be estimated using only samples from the distribution $\px$. By modifying the MT-function $u\left(\cdot\right)$, such that the condition (\ref{Cond}) in definition \ref{Def1} is satisfied,  the MT-covariance matrix under $\qx$ is modified. In particular, by choosing $u\left(\xvec\right)\equiv{1}$, we have $\qx=\px$, for which the standard covariance matrix $\bSigma_{\Xmatsc}$  is obtained. 

In the following proposition, a strongly consistent estimate of $\bSigma^{\left(u\right)}_{\Xmatsc}$ is constructed, based on $N$ i.i.d. samples of $\Xmat$.
Unlike the empirical MT-covariance proposed in \cite{MTCCA}, the construction is based on complex observations and its almost sure convergence conditions are different.
\begin{Proposition}[Strongly consistent estimate of the MT-covariance]
\label{ConsistentEst}
Let $\Xmat\left(n\right)$, $n=1,\ldots,N$ denote a sequence of i.i.d. samples from $\px$, and define the empirical covariance estimate
\begin{equation}   
\label{Rx_u_Est}   
\hat{\bSigma}^{\left(u\right)}_{\Xmatsc}\triangleq\sum\limits_{n=1}^{N}\left(\Xmat\left(n\right)-\hat{\muvec}^{\left(u\right)}_{\Xmatsc}\right)\left(\Xmat\left(n\right)-\hat{\muvec}^{\left(u\right)}_{\Xmatsc}\right)^{H}\hat{\varphi}_{u}\left(\Xmat\left(n\right)\right)
\end{equation}
where 
\begin{equation}   
\label{Mu_u_Est} 
\hat{\muvec}^{\left(u\right)}_{\Xmatsc}\triangleq\sum_{n=1}^{N}\Xmat\left(n\right)\hat{\varphi}_{u}\left(\Xmat\left(n\right)\right)
\end{equation}
and
\begin{equation}
\label{hat_varphi}  
\hat{\varphi}_{u}\left(\Xmat\left(n\right)\right)\triangleq\frac{u\left(\Xmat\left(n\right)\right)}{\sum\limits_{n=1}^{N}u\left(\Xmat\left(n\right)\right)}.
\end{equation}
If 
\begin{equation}
\label{Cond11}
{\rm{E}}\left[\left\|\Xmat\right\|^{2}_{2}u\left(\Xmat\right);\px\right]<\infty,
\end{equation}
where ${\|\cdot\|}_{2}$ denotes the $l_{2}$-norm, then $\hat{\bSigma}^{\left(u\right)}_{\Xmatsc}\rightarrow{\bSigma}^{\left(u\right)}_{\Xmatsc}$ almost surely (a.s.) as $N\rightarrow\infty$. 
[A proof is given in Appendix \ref{Prop2Proof}]
\end{Proposition}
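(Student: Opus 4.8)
The plan is to reduce the assertion to a few applications of the strong law of large numbers (SLLN) for i.i.d.\ sums, together with a continuous-mapping argument that copes with the self-normalization in $\hat{\varphi}_{u}$. The first step is to remove the explicit centering in (\ref{Rx_u_Est}). Expanding the outer product and using the two identities $\sum_{n=1}^{N}\hat{\varphi}_{u}(\Xmat(n))=1$ and $\sum_{n=1}^{N}\Xmat(n)\hat{\varphi}_{u}(\Xmat(n))=\hat{\muvec}^{\left(u\right)}_{\Xmatsc}$, which are immediate from (\ref{hat_varphi}) and (\ref{Mu_u_Est}), one obtains the algebraic identity
\begin{equation}
\label{eq:expand}
\hat{\bSigma}^{\left(u\right)}_{\Xmatsc}=\sum_{n=1}^{N}\Xmat(n)\Xmat(n)^{H}\hat{\varphi}_{u}(\Xmat(n))-\hat{\muvec}^{\left(u\right)}_{\Xmatsc}\left(\hat{\muvec}^{\left(u\right)}_{\Xmatsc}\right)^{H}.
\end{equation}
After dividing the numerator and denominator of $\hat{\varphi}_{u}$ in (\ref{hat_varphi}) by $N$, every sum appearing here is a ratio of ordinary sample means of the i.i.d.\ variables $u(\Xmat(n))$, $\Xmat(n)u(\Xmat(n))$, and $\Xmat(n)\Xmat(n)^{H}u(\Xmat(n))$.

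Second, I would invoke the SLLN for each of these three sample means (applied coordinatewise to real and imaginary parts for the vector- and matrix-valued ones). The required integrability is available in each case: $u(\Xmat)$ is $\px$-integrable by (\ref{Cond}); $\Xmat u(\Xmat)$ is integrable because $\|\Xmat\|_{2}\,u(\Xmat)\le u(\Xmat)+\|\Xmat\|_{2}^{2}\,u(\Xmat)$ and the right-hand side is integrable by (\ref{Cond}) and (\ref{Cond11}); and $\Xmat\Xmat^{H}u(\Xmat)$ is integrable since the magnitude of each of its entries is at most $\|\Xmat\|_{2}^{2}\,u(\Xmat)$, which is precisely the hypothesis (\ref{Cond11}). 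Hence, on an event of probability one (the intersection of the finitely many a.s.\ convergence events), $N^{-1}\sum_{n}u(\Xmat(n))\to{\rm{E}}[u(\Xmat);\px]$, $N^{-1}\sum_{n}\Xmat(n)u(\Xmat(n))\to{\rm{E}}[\Xmat u(\Xmat);\px]$, and $N^{-1}\sum_{n}\Xmat(n)\Xmat(n)^{H}u(\Xmat(n))\to{\rm{E}}[\Xmat\Xmat^{H}u(\Xmat);\px]$.

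Third, I would finish by the continuous-mapping theorem. The common denominator converges a.s.\ to ${\rm{E}}[u(\Xmat);\px]$, which is strictly positive by (\ref{Cond}); division is continuous at that point, so $\hat{\muvec}^{\left(u\right)}_{\Xmatsc}\to{\rm{E}}[\Xmat u(\Xmat);\px]/{\rm{E}}[u(\Xmat);\px]={\rm{E}}[\Xmat\varphi_{u}(\Xmat);\px]=\muvec^{\left(u\right)}_{\Xmatsc}$ a.s., and likewise $\sum_{n}\Xmat(n)\Xmat(n)^{H}\hat{\varphi}_{u}(\Xmat(n))\to{\rm{E}}[\Xmat\Xmat^{H}\varphi_{u}(\Xmat);\px]$ a.s. Substituting these limits into (\ref{eq:expand}) yields $\hat{\bSigma}^{\left(u\right)}_{\Xmatsc}\to{\rm{E}}[\Xmat\Xmat^{H}\varphi_{u}(\Xmat);\px]-\muvec^{\left(u\right)}_{\Xmatsc}\left(\muvec^{\left(u\right)}_{\Xmatsc}\right)^{H}$ a.s.; expanding the quadratic form in (\ref{MTCovZ}) and using ${\rm{E}}[\varphi_{u}(\Xmat);\px]=1$ together with (\ref{MTMean}) shows this limit equals $\bSigma^{\left(u\right)}_{\Xmatsc}$, as required.

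I do not expect a genuine obstacle here: the content of the proof is careful bookkeeping rather than a single hard step. The one place that needs attention — and where the argument differs from the real-valued construction in \cite{MTCCA} — is the moment bookkeeping: the second-moment hypothesis (\ref{Cond11}) is exactly what the matrix-valued SLLN requires, and one must separately notice that it also guarantees integrability of $\Xmat u(\Xmat)$ (via the elementary bound above) so that $\hat{\muvec}^{\left(u\right)}_{\Xmatsc}$ converges; combined with the strict positivity ${\rm{E}}[u(\Xmat);\px]>0$ from (\ref{Cond}), this legitimizes passing the a.s.\ limits through the ratio.
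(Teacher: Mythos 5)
Your proposal is correct and follows essentially the same route as the paper's proof: decompose the self-normalized estimator into ratios of the three sample means of $u(\Xmat(n))$, $\Xmat(n)u(\Xmat(n))$ and $\Xmat(n)\Xmat^{H}(n)u(\Xmat(n))$, verify their integrability from (\ref{Cond}) and (\ref{Cond11}) (your entrywise bound plays the role of the paper's Cauchy--Schwarz step), apply the strong law of large numbers, and pass to the limit by continuous mapping (the paper's Mann--Wald argument), with your coordinatewise treatment of real and imaginary parts matching the paper's formal real--imaginary decomposition lemmas. No gaps.
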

Note that for $u(\Xmat)\equiv{1}$ the estimator $\frac{N}{N-1}\hat{\bSigma}^{(u)}_{\Xmatsc}$ reduces to the standard unbiased SCM. Also notice that $\hat{\bSigma}^{(u)}_{\Xmatsc}$ can be written as a statistical functional ${\Psimat}_{\Xmatsc}^{(u)}[\cdot]$ of the empirical probability measure $\hat{P}_{\Xmatsc}$ defined in (\ref{EmpProbMes}), i.e.,
\begin{equation}
\label{StatFunc} 
\hat{\bSigma}^{\left(u\right)}_{\Xmatsc}=\frac{{\rm{E}}[(\Xmat-\etavec_{\Xmatsc}^{(u)}[\hat{P}_{\Xmatsc}])(\Xmat-\etavec_{\Xmatsc}^{(u)}[\hat{P}_{\Xmatsc}])^{H};\hat{P}_{\Xmatsc}]}{{\rm{E}}[{u}\left(\Xmat\right);\hat{P}_{\Xmatsc}]}
\triangleq{\Psimat}_{\Xmatsc}^{(u)}[\hat{P}_{\Xmatsc}],
\end{equation} 
where 
\begin{equation}
\label{StatFuncMean}
\etavec_{\Xmatsc}^{(u)}[\hat{P}_{\Xmatsc}]\triangleq\frac{{\rm{E}}[\Xmat{u}\left(\Xmat\right);\hat{P}_{\Xmatsc}]}{{\rm{E}}[{u}\left(\Xmat\right);\hat{P}_{\Xmatsc}]}. 
\end{equation}
By (\ref{VarPhiDef}), (\ref{MTCovZ}) and  (\ref{StatFunc}), when $\hat{P}_{\Xmatsc}$ is replaced by ${P}_{\Xmatsc}$ we have ${\Psimat}_{\Xmatsc}^{(u)}[P_{\Xmatsc}]=\bSigma^{(u)}_{\Xmatsc}$, which implies that $\hat{\bSigma}^{(u)}_{\Xmatsc}$ is Fisher consistent \cite{Cox}. 
\subsection{Robustness of the empirical MT-covariance}
Here, we study the robustness of the empirical MT-covariance $\hat{\bSigma}^{(u)}_{\Xmatsc}$  using its matrix valued influence function \cite{Hampel}.
Define the probability measure $P_{\epsilon}\triangleq(1-\epsilon)\px+\epsilon\delta_{\yvec}$, where $0\leq\epsilon\leq1$, $\yvec\in\Csp^{p}$, and
$\delta_{\yvec}$ is the Dirac probability measure at $\yvec$. The influence function of a Fisher consistent estimator with statistical functional $\textrm{H}[\cdot]$ at probability distribution $\px$ is defined as \cite{Hampel}:
\begin{equation}
\label{IFDef}
{\rm{IF}}_{\textrm{H}}\left(\yvec;\px\right)\triangleq\lim\limits_{\epsilon\rightarrow{0}}\frac{\textrm{H}\left[P_{\epsilon}\right]-\textrm{H}\left[\px\right]}{\epsilon}
=\left.\frac{\partial{\textrm{H}}\left[{P}_{\epsilon}\right]}{\partial{\epsilon}}\right|_{\epsilon=0}.
\end{equation}
The influence function describes the effect on the estimator of an infinitesimal contamination at the point $\yvec$. An estimator is said to be B-robust if its influence function is bounded \cite{Hampel}. Using (\ref{StatFunc}) and (\ref{IFDef}) one can verify that the influence function of $\hat{\bSigma}^{(u)}_{\Xmatsc}$ is given by
\begin{equation}
\label{MT_COV_INF}
{\rm{IF}}_{{\tiny{\Psimat}}_{\xvec}^{(u)}}\left(\yvec;\px\right)
=\frac{{u\left(\yvec\right)}
[(\yvec-\muvec^{(u)}_{\Xmatsc})(\yvec-\muvec^{(u)}_{\Xmatsc})^{H} -  {\bSigma}^{(u)}_{\Xmatsc}]}
{{\rm{E}}[u(\Xmat);\px]}.
\end{equation}
The following proposition states a sufficient condition for boundedness of (\ref{MT_COV_INF}).  This condition is satisfied for the Gaussian MT-function proposed in Section \ref{GMTMUSIC}.
\begin{Proposition}
\label{RobustnessConditions}
The influence function (\ref{MT_COV_INF}) is bounded if the MT-function $u(\yvec)$ and the product $u(\yvec)\|\yvec\|^{2}_{2}$ are bounded over $\Csp^p$. [A proof is given in Appendix \ref{InfBoundProof}]
\end{Proposition}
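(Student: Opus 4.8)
The plan is to bound directly the Frobenius norm of the right-hand side of (\ref{MT_COV_INF}) as a function of $\yvec\in\Csp^{p}$. First I would note that the scalar ${\rm{E}}[u(\Xmat);\px]$ in the denominator is a fixed, strictly positive and finite constant by (\ref{Cond}), so it suffices to bound the numerator uniformly in $\yvec$.

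Before doing so I would check that the quantities $\muvec^{(u)}_{\Xmatsc}$ and ${\bSigma}^{(u)}_{\Xmatsc}$ entering (\ref{MT_COV_INF}) are themselves finite, since otherwise the assertion is not even meaningful. This follows from the hypotheses: if $u(\yvec)\|\yvec\|^{2}_{2}\le C$ for all $\yvec\in\Csp^{p}$, then ${\rm{E}}[\|\Xmat\|^{2}_{2}u(\Xmat);\px]\le C<\infty$, which is exactly condition (\ref{Cond11}); together with (\ref{Cond}) this guarantees that the transformed mean (\ref{MTMean}) and covariance (\ref{MTCovZ}) are well defined. In particular ${\bSigma}^{(u)}_{\Xmatsc}$ is a fixed matrix not depending on $\yvec$, so the term $u(\yvec){\bSigma}^{(u)}_{\Xmatsc}$ has Frobenius norm $u(\yvec)\,\|{\bSigma}^{(u)}_{\Xmatsc}\|_{\rm{F}}$, which is bounded because $u$ is bounded.

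For the remaining, rank-one term I would use $\|\vvec\vvec^{H}\|_{\rm{F}}=\|\vvec\|^{2}_{2}$ with $\vvec\triangleq\yvec-\muvec^{(u)}_{\Xmatsc}$ together with the elementary inequality $\|\yvec-\muvec^{(u)}_{\Xmatsc}\|^{2}_{2}\le 2\|\yvec\|^{2}_{2}+2\|\muvec^{(u)}_{\Xmatsc}\|^{2}_{2}$, so that
\[
u(\yvec)\,\|\yvec-\muvec^{(u)}_{\Xmatsc}\|^{2}_{2}\;\le\;2\,u(\yvec)\|\yvec\|^{2}_{2}+2\,\|\muvec^{(u)}_{\Xmatsc}\|^{2}_{2}\,u(\yvec) .
\]
The first summand is bounded by the hypothesis on $u(\yvec)\|\yvec\|^{2}_{2}$, and the second is bounded since $u$ is bounded and $\muvec^{(u)}_{\Xmatsc}$ is a fixed vector. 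Combining this with the previous bound via the triangle inequality for $\|\cdot\|_{\rm{F}}$ applied to the numerator of (\ref{MT_COV_INF}), and dividing by the positive constant ${\rm{E}}[u(\Xmat);\px]$, yields a finite bound on $\|{\rm{IF}}_{{\Psimat}_{\xvec}^{(u)}}(\yvec;\px)\|_{\rm{F}}$ that is uniform in $\yvec\in\Csp^{p}$, i.e., B-robustness.

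There is no deep obstacle here; the computation is routine once (\ref{MT_COV_INF}) is available. The only point that deserves care is the one flagged above: the transformed moments appearing in (\ref{MT_COV_INF}) must first be shown to be finite, and it is precisely for this (not merely for bounding the influence function itself) that the stronger hypothesis on $u(\yvec)\|\yvec\|^{2}_{2}$ — rather than mere boundedness of $u$ — is needed.
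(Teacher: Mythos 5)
Your proposal is correct and takes essentially the same route as the paper's proof in Appendix \ref{InfBoundProof}: there the influence function is rewritten as $c\,u\left(\yvec\right)\left(\left\|\yvec-\muvec^{\left(u\right)}_{\Xmatsc}\right\|^{2}_{2}\,\psivec\left(\yvec\right)\psivec^{H}\left(\yvec\right)-\bSigma^{\left(u\right)}_{\Xmatsc}\right)$ with $\left\|\psivec\left(\yvec\right)\right\|_{2}=1$, which is the same device as your $\left\|\vvec\vvec^{H}\right\|_{\rm{Fro}}=\left\|\vvec\right\|^{2}_{2}$ bound, and both arguments reduce boundedness to that of $u\left(\yvec\right)$ and $u\left(\yvec\right)\left\|\yvec\right\|^{2}_{2}$. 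Your explicit observation that the hypothesis implies (\ref{Cond11}) and hence finiteness of $\muvec^{\left(u\right)}_{\Xmatsc}$ and $\bSigma^{\left(u\right)}_{\Xmatsc}$ is a small refinement left implicit in the paper, but it does not change the approach.
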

\subsection{The robust MT-MUSIC procedure}
\label{RobMTMUS}
In robust MT-MUSIC the measure transformation (\ref{MeasureTransform}) is applied to the probability distribution $\px$ of the array output $\Xmat\left(n\right)$ (\ref{ArrayModel}).  
The MT-function $u\left(\cdot\right)$ is selected such that the following conditions are satisfied:
\begin{enumerate}
\item
\label{Cond1} 
The resulting empirical MT-covariance $\hat{\bSigma}^{\left(u\right)}_{\Xmatsc}$ is B-robust.
\item
\label{Cond2}
Let $\lambda^{\left(u\right)}_{1}\geq\cdots\geq\lambda^{\left(u\right)}_{p}$ denote the eigenvalues of  ${\bSigma}^{(u)}_{\Xmatsc}$. The $p-q$ smallest eigenvalues of ${\bSigma}^{(u)}_{\Xmatsc}$ satisfy:
\begin{equation}
\label{EigCond}
\lambda_{q}^{(u)}>\lambda^{\left(u\right)}_{q+1}=\cdots=\lambda^{\left(u\right)}_{p},
\end{equation}
and their corresponding eigenvectors span the null-space of $\Amat^{H}$, also called the noise subspace.
\end{enumerate} 

Let $\hat{\Vmat}^{(u)}\in\Csp^{p\times{(p-q)}}$ denote the matrix comprised of $p-q$ eigenvectors of $\hat{\bSigma}^{(u)}_{\Xmatsc}$ corresponding to its smallest eigenvalues. The DOAs are estimated by finding the $q$ highest maxima of the measure-transformed pseudo-spectrum:
\begin{equation}
\label{MTEmpSpectrum}
\hat{P}^{(u)}(\theta)\triangleq{\left\|\hat{\Vmat}^{(u)H}\avec(\theta)\right\|}^{-2}_{2}.
\end{equation}
By modifying the MT-function $u(\cdot)$ such that the stated conditions \ref{Cond1} and \ref{Cond2}  are satisfied a family of robust MT-MUSIC algorithms can be obtained. 
In particular, by choosing $u\left(\xvec\right)\propto\left\|\xvec\right\|^{-2}_{2}$ one can verify using (\ref{MTCovZ}) that for zero-centered symmetric distributions the resulting MT-covariance is proportional to the sign-covariance, proposed for the robust MUSIC generalization in \cite{Visuri}. Another particular choice of MT-function leading to the Gaussian MT-MUSIC algorithm is discussed in the following section.
\section{The Gaussian MT-MUSIC}
\label{GMTMUSIC}
In this section, we parameterize the MT-function $u\left(\cdot;\tau\right)$, with scale parameter $\tau\in\Rsp_{++}$
under the Gaussian family of functions centered at the origin. This results in a B-robust empirical MT-covariance matrix that rejects large outliers. Under the assumption of spherically contoured noise distribution, we show that the noise subspace can be determined from the eigendecomposition of the MT-covariance. Choice of the scale parameter $\tau$ is also discussed.
\subsection{The Gaussian MT-function}
\label{GaussMTFunc}
We define the Gaussian MT-function $u_{\rm{G}}\left(\cdot;\cdot\right)$ as 
\begin{eqnarray}
\label{GaussKernel}  
\uGausss\left(\xvec;\tau\right)\triangleq\left(\pi\tau^{2}\right)^{-p}\exp\left(-{\left\|\xvec\right\|^{2}_{2}}/{\tau^{2}}\right),\hspace{0.2cm}\tau\in\Rsp_{++}.
\end{eqnarray}
Using (\ref{VarPhiDef})-(\ref{MTCovZ}) and (\ref{GaussKernel}) one can verify that the resulting Gaussian MT-covariance always takes finite values. 
Additionally, notice that the Gaussian MT-function satisfies the condition (\ref{Cond11}) in Proposition \ref{ConsistentEst}, and therefore, the empirical Gaussian MT-covariance, based on i.i.d. samples from any probability distribution $\px$,  is strongly consistent. For any fixed scale parameter $\tau$, the Gaussian MT-function also satisfies the condition in proposition \ref{RobustnessConditions}, resulting in a B-robust empirical Gaussian MT-covariance $\hat{\bSigma}^{\left(\uGausss\right)}_{\Xmatsc}\left(\tau\right)$. The following proposition, which follows directly from (\ref{MT_COV_INF}) and (\ref{GaussKernel}), states that the Frobenius norm of the corresponding influence function approaches zero as the contamination norm approaches infinity.
\begin{Proposition}
\label{InfFuncLim}
For any fixed scale parameter $\tau$ of the Gaussian MT-function (\ref{GaussKernel}), the influence function of the resulting empirical Gaussian MT-covariance satisfies
\begin{equation}
\label{IFLim}
{\left\|{\rm{IF}}_{{\tiny{\Psimat}}_{\xvec}^{(u_{\rm{G}})}}\left(\yvec;\px\right)\right\|}_{\rm{Fro}}\rightarrow{0}\hspace{0.2cm}\textrm{as}\hspace{0.2cm}{\left\|\yvec\right\|}_{2}\rightarrow\infty, 
\end{equation}
where $\left\|\cdot\right\|_{\rm{Fro}}$ denotes the Frobenius norm. [A proof is given in Appendix \ref{InfFuncLimProof}]
\end{Proposition}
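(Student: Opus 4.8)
The plan is to read off the Frobenius norm directly from the closed form (\ref{MT_COV_INF}) specialized to the Gaussian MT-function (\ref{GaussKernel}), and then exploit the fact that a Gaussian weight kills any polynomial growth. Substituting $\uGausss(\cdot;\tau)$ into (\ref{MT_COV_INF}), applying the triangle inequality for $\|\cdot\|_{\rm{Fro}}$, and using that a rank-one matrix $\vvec\vvec^{H}$ satisfies $\|\vvec\vvec^{H}\|_{\rm{Fro}}=\|\vvec\|^{2}_{2}$, I would obtain the bound
\begin{equation}
\label{IFbound_plan}
{\left\|{\rm{IF}}_{{\tiny{\Psimat}}_{\xvec}^{(u_{\rm{G}})}}\left(\yvec;\px\right)\right\|}_{\rm{Fro}}
\le\frac{\uGausss(\yvec;\tau)}{{\rm{E}}\left[\uGausss(\Xmat;\tau);\px\right]}
\left({\left\|\yvec-\muvec^{(\uGausss)}_{\Xmatsc}\right\|}^{2}_{2}+{\left\|\bSigma^{(\uGausss)}_{\Xmatsc}\right\|}_{\rm{Fro}}\right).
\end{equation}

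Next I would note that, for a fixed scale parameter $\tau$, every factor on the right-hand side of (\ref{IFbound_plan}) other than $\uGausss(\yvec;\tau)$ is a finite quantity that does not depend on the contamination point $\yvec$. The denominator ${\rm{E}}[\uGausss(\Xmat;\tau);\px]$ is strictly positive and finite because $\uGausss(\cdot;\tau)$ is strictly positive and bounded above by $(\pi\tau^{2})^{-p}$, so condition (\ref{Cond}) holds; the MT-mean $\muvec^{(\uGausss)}_{\Xmatsc}$ and MT-covariance $\bSigma^{(\uGausss)}_{\Xmatsc}$ are finite, as already observed following (\ref{GaussKernel}), since $\|\xvec\|^{k}_{2}\uGausss(\xvec;\tau)$ is bounded on $\Csp^{p}$ for $k=1,2$. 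Writing $r=\|\yvec\|_{2}$, the term $\|\yvec-\muvec^{(\uGausss)}_{\Xmatsc}\|^{2}_{2}$ is at most $(r+\|\muvec^{(\uGausss)}_{\Xmatsc}\|_{2})^{2}$, so the bracketed factor in (\ref{IFbound_plan}) is bounded by a fixed second-degree polynomial $q(r)$, while $\uGausss(\yvec;\tau)=(\pi\tau^{2})^{-p}\exp(-r^{2}/\tau^{2})$.

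Finally I would invoke the elementary fact that $e^{-r^{2}/\tau^{2}}q(r)\to0$ as $r\to\infty$ for any fixed $\tau>0$ and any polynomial $q$, which drives the right-hand side of (\ref{IFbound_plan}), and therefore $\|{\rm{IF}}_{{\tiny{\Psimat}}_{\xvec}^{(u_{\rm{G}})}}(\yvec;\px)\|_{\rm{Fro}}$, to zero as $\|\yvec\|_{2}\to\infty$, which is (\ref{IFLim}). There is no real obstacle in this argument; the only points requiring a line of justification are the strict positivity and finiteness of ${\rm{E}}[\uGausss(\Xmat;\tau);\px]$ and the finiteness of $\muvec^{(\uGausss)}_{\Xmatsc}$ and $\bSigma^{(\uGausss)}_{\Xmatsc}$ (so that they act as $\yvec$-independent constants), after which the triangle-inequality split and the Gaussian-dominates-polynomial limit finish the proof.
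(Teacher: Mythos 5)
Your proposal is correct and follows essentially the same route as the paper's proof: both start from the explicit influence function (\ref{MT_COV_INF}), bound it via the triangle inequality by a fixed polynomial in $\left\|\yvec\right\|_{2}$ multiplied by the Gaussian weight $\uGausss\left(\yvec;\tau\right)$, and conclude from the fact that Gaussian decay dominates polynomial growth. The only cosmetic difference is that the paper works with the squared Frobenius norm and an exact expansion before bounding, whereas you bound the norm directly using $\left\|\vvec\vvec^{H}\right\|_{\rm{Fro}}=\left\|\vvec\right\|^{2}_{2}$, which is if anything slightly cleaner.
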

Thus, unlike the SCM and other robust covariance approaches, the empirical Gaussian MT-covariance rejects large outliers. This property is illustrated in Fig. \ref{INF_FRO} for a standard bivariate complex normal distribution, as compared to  the empirical sign-covariance, Tyler's scatter M-estimator, and the SCM.
\begin{figure}[htbp!]
  \begin{center}
    {\includegraphics[scale=0.35]{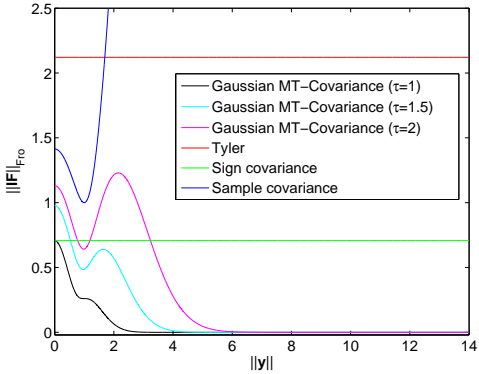}}
  \end{center}
  \caption{Frobenius norms of the influence functions associated with the empirical Gaussian MT-covariance for $\tau=1$, $\tau=1.5$ and $\tau=2$, Tyler's scatter M-estimator, the empirical sign-covariance, and the SCM, versus the contamination norm, for a bivariate standard complex normal distribution. Notice that the influence function approaches zero for large $\|\mathbf y\|$ only for the proposed Gaussian MT-covariance estimator, indicating enhanced robustness to outliers.}
\label{INF_FRO}
\end{figure}

Notice that as the scale parameter $\tau$ of the Gaussian MT-function (\ref{GaussKernel}) approaches infinity, the corresponding empirical Gaussian MT-covariance $\hat{\bSigma}^{\left(\uGausss\right)}_{\Xmatsc}\left(\tau\right)$ approaches the non-robust standard SCM $\hat{\bSigma}_{\Xmatsc}$, whose influence function is unbounded. On the other hand, as $\tau$ decreases it can be shown using the upper bound in (\ref{InfFuncGauss}) that the influence function of $\hat{\bSigma}^{\left(\uGausss\right)}_{\Xmatsc}\left(\tau\right)$ has a faster asymptotic decay, as illustrated in Fig. \ref{INF_FRO}, i.e., the empirical Gaussian MT-covariance becomes more resilient to large outliers. However, we note that this may come at the expense of information loss. The trade-off between robustness and information loss is discussed in Subsection \ref{TauChoice}.
\subsection{The Gaussian MT-covariance for spherically distributed noise}
\label{CGNoise}
We assume that the noise component in (\ref{ArrayModel}) has a complex spherically contoured distribution, also known as a spherical distribution \cite{Visa} having stochastic representation:
\begin{equation}
\label{CompGauss}
\Wmat\left(n\right)=\nu\left(n\right)\bzeta\left(n\right),
\end{equation}
where $\nu\left(n\right)\triangleq\rho\left(n\right)/\left\|\bzeta\left(n\right)\right\|_{2}$, $\rho\left(n\right)\in\Rsp_{++}$ is a first-order stationary process, and $\bzeta\left(n\right)\in\Csp^{p}$ is a proper-complex wide-sense stationary Gaussian process with zero-mean and unit covariance, which is statistically independent of $\rho\left(n\right)$. The stochastic representation (\ref{CompGauss}) is a direct consequence of the following properties \cite{Visa}:
\begin{inparaenum}
\item
Any spherically distributed complex random vector $\Wmat$ can be represented as $\Wmat=\rho\Umat$, where $\rho$ is a strictly positive random variable, and $\Umat$  is uniformly distributed on the complex unit sphere and statistically independent of $\rho$.
\item
Any random vector $\Umat$ that is uniformly distributed on the complex unit sphere can be represented as $\Umat=\bzeta/\left\|\bzeta\right\|_{2}$, where $\bzeta$ is a complex random vector with zero-mean spherically contoured distribution, for example a complex Gaussian random vector with zero-mean and unit covariance.
\end{inparaenum}

The structure of the resulting Gaussian MT-covariance of the array output is given in the following theorem. 
\begin{Theorem}
\label{CompGaussStruct}   
Under the array model (\ref{ArrayModel}) and the spherical noise assumption (\ref{CompGauss}), the Gaussian MT-covariance of $\Xmat\left(n\right)$  takes the form:
\begin{equation}
\label{GaussMTCov}
\bSigma^{\left(\uGausss\right)}_{\Xmatsc}\left(\tau\right)=\Amat\bSigma^{\left(g\right)}_{\alpha^{2}\Smatsc}\left(\tau\right)\Amat^{H}+\sigma^{2\left(h\right)}_{\alpha\Wmatsc}\left(\tau\right)\Imat,
\end{equation} 
where $\bSigma^{\left(g\right)}_{\alpha^{2}\Smatsc}\left(\tau\right)$ is a non-singular covariance matrix of the scaled signal component $\alpha^{2}\left(n\right)\Smat\left(n\right)$, $\alpha\left(n\right)\triangleq\sqrt{\frac{\tau^{2}}{\tau^{2}+\nu^{2}\left(n\right)}}$, under the transformed joint probability measure $Q^{\left(g\right)}_{\alpha,\Smatsc}$ with the MT-function  
$g\left(\alpha,\Smat;\tau\right)\triangleq{(\frac{\pi\tau^{2}}{\alpha^{2}})}^{-p}\exp(-{\alpha^{2}\|\Amat\Smat\|^{2}_{2}}/{\tau^{2}})$. The term 
$\sigma^{2\left(h\right)}_{\alpha\Wmatsc}\left(\tau\right)$, multiplying the identity matrix $\Imat$, is the variance of the scaled noise component $\alpha\left(n\right)\Wmat\left(n\right)$ under the transformed joint probability measure $Q^{\left(h\right)}_{\alpha,\Wmat}$ with the MT-function $h\left(\alpha;\tau\right)\triangleq{\rm{E}}\left[g\left(\alpha,\Smat;\tau\right);P_{\Smatsc}\right]$. [A proof is given in Appendix \ref{CompGaussStructProof}]
\end{Theorem}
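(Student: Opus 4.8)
The plan is to condition on the texture variable $\nu$ and on $\Smat$, so that the moments entering (\ref{VarPhiDef}), (\ref{MTMean}) and (\ref{MTCovZ}) become Gaussian integrals. By the Gaussian-speckle representation (\ref{CompGauss}), conditioning on $\nu$ and $\Smat$ makes the array output proper-complex Gaussian, $\Xmat\mid(\nu,\Smat)\sim\mathcal{CN}(\Amat\Smat,\nu^{2}\Imat)$, while by (\ref{GaussKernel}) the MT-function $\uGausss(\cdot;\tau)$ is, up to the constant $(\pi\tau^{2})^{-p}$, itself a $\mathcal{CN}(\zerovec,\tau^{2}\Imat)$ density. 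Using the Gaussian product rule --- a product of two proper-complex Gaussian densities is a Gaussian constant times a Gaussian density --- the weight and the conditional law combine, so that ${\rm E}[\uGausss(\Xmat;\tau)\mid\nu,\Smat]$, ${\rm E}[\Xmat\,\uGausss(\Xmat;\tau)\mid\nu,\Smat]$ and ${\rm E}[\Xmat\Xmat^{H}\uGausss(\Xmat;\tau)\mid\nu,\Smat]$ equal $g(\alpha,\Smat;\tau)$ times, respectively, $1$, the mean, and the second moment of a $\mathcal{CN}(\alpha^{2}\Amat\Smat,\alpha^{2}\nu^{2}\Imat)$ law, with $\alpha^{2}=\tau^{2}/(\tau^{2}+\nu^{2})$ and $g$ exactly the MT-function named in the theorem. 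Explicitly these are $g(\alpha,\Smat;\tau)$, then $\alpha^{2}\Amat\Smat\,g(\alpha,\Smat;\tau)$, then $g(\alpha,\Smat;\tau)\bigl(\alpha^{4}\Amat\Smat\Smat^{H}\Amat^{H}+\alpha^{2}\nu^{2}\Imat\bigr)$; keeping the normalizing constants straight, so that $(\pi\tau^{2})^{-p}$ is promoted to $(\pi\tau^{2}/\alpha^{2})^{-p}$ inside $g$, is the most tedious part of this step, but it is routine.

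With the conditional identities in hand I would de-condition. Since $\alpha$ is a function of $\nu$, hence of the noise, it is independent of $\Smat$; integrating the first identity over $(\nu,\Smat)$ gives ${\rm E}[\uGausss(\Xmat;\tau);\px]={\rm E}[g(\alpha,\Smat;\tau)]$, which is precisely the (positive and finite) normalizing constant of the transform ${\rm{T}}_{g}[\cdot]$ defining $Q^{(g)}_{\alpha,\Smatsc}$ and, by Fubini, equals ${\rm E}[h(\alpha;\tau)]$, the normalizer of $Q^{(h)}_{\alpha,\Wmat}$. Dividing the remaining two identities by this constant and invoking (\ref{VarPhiDef}), (\ref{MTMean}) and (\ref{MTCovZ}) yields $\muvec^{(\uGausss)}_{\Xmatsc}=\Amat\,{\rm E}[\alpha^{2}\Smat;Q^{(g)}_{\alpha,\Smatsc}]$ and ${\rm E}[\Xmat\Xmat^{H}\uGausss(\Xmat;\tau);\px]/{\rm E}[\uGausss(\Xmat;\tau);\px]=\Amat\,{\rm E}[\alpha^{4}\Smat\Smat^{H};Q^{(g)}_{\alpha,\Smatsc}]\Amat^{H}+{\rm E}[\alpha^{2}\nu^{2};Q^{(h)}_{\alpha,\Wmat}]\Imat$; for the last term one also uses that reweighting by $h(\alpha;\tau)$ --- a function of the texture only --- leaves $\bzeta$ distributed as $\mathcal{CN}(\zerovec,\Imat)$, so $\alpha\Wmat=\alpha\nu\bzeta$ has zero mean and isotropic covariance ${\rm E}[\alpha^{2}\nu^{2};Q^{(h)}_{\alpha,\Wmat}]\Imat=\sigma^{2(h)}_{\alpha\Wmatsc}(\tau)\Imat$. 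Subtracting $\muvec^{(\uGausss)}_{\Xmatsc}\muvec^{(\uGausss)H}_{\Xmatsc}$ as in (\ref{MTCovZ}) merges the two $\Amat(\cdot)\Amat^{H}$ contributions into $\Amat\,{\rm Cov}[\alpha^{2}\Smat;Q^{(g)}_{\alpha,\Smatsc}]\Amat^{H}=\Amat\bSigma^{(g)}_{\alpha^{2}\Smatsc}(\tau)\Amat^{H}$, which is exactly (\ref{GaussMTCov}).

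It remains to show that $\bSigma^{(g)}_{\alpha^{2}\Smatsc}(\tau)$ is non-singular. First, under $\px$ the covariance of $\alpha^{2}\Smat$ is non-singular: by independence of $\alpha$ and $\Smat$ together with Jensen's inequality ${\rm E}[\alpha^{4}]\ge({\rm E}[\alpha^{2}])^{2}>0$, one gets ${\rm Cov}[\alpha^{2}\Smat]\succeq({\rm E}[\alpha^{2}])^{2}\,{\rm Cov}[\Smat]\succ\zerovec$, using the assumed non-singular covariance of $\Smat$. Second, $g(\alpha,\Smat;\tau)$ is strictly positive, so part \ref{P3} of Proposition \ref{Prop1} carries this non-singularity over to $Q^{(g)}_{\alpha,\Smatsc}$. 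The one genuinely delicate ingredient is the very first reduction --- that $\Xmat$ given $(\nu,\Smat)$ is $\mathcal{CN}(\Amat\Smat,\nu^{2}\Imat)$: this is where the Gaussian-speckle structure of the spherically contoured noise in (\ref{CompGauss}) is really used, and it is precisely what makes the Gaussian MT-function mesh so cleanly with the array model.
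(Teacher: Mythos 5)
Your proposal is correct and follows essentially the same route as the paper's proof in Appendix \ref{CompGaussStructProof}: condition on $(\nu,\Smat)$ so that $\Xmat$ is conditionally proper complex normal with mean $\Amat\Smat$ and covariance $\nu^{2}\Imat$, evaluate the three weighted moments ${\rm{E}}[\uGausss(\Xmat;\tau)]$, ${\rm{E}}[\Xmat\uGausss(\Xmat;\tau)]$, ${\rm{E}}[\Xmat\Xmat^{H}\uGausss(\Xmat;\tau)]$ via Gaussian integration (the step the paper states as (\ref{ExpCopm1})--(\ref{ExpCopm3}) and leaves as ``one can verify,'' which your Gaussian product rule supplies), then de-condition, normalize to identify the transformed-measure mean and covariance, and invoke Property \ref{P3} of Proposition \ref{Prop1} for non-singularity of $\bSigma^{\left(g\right)}_{\alpha^{2}\Smatsc}\left(\tau\right)$. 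Your only departures are explicit fillers for details the paper leaves implicit, namely the product-rule bookkeeping and the zero-mean/Jensen argument that the covariance of $\alpha^{2}\Smat$ under $P_{\alpha,\Smatsc}$ is non-singular.
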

Thus, by the structure (\ref{GaussMTCov}) and the facts that the steering matrix $\Amat$ has full column rank and the MT-covariance $\bSigma^{\left(g\right)}_{\alpha^{2}\Smatsc}\left(\tau\right)$ is non-singular, we conclude that Condition \ref{Cond2}
in Subsection \ref{RobMTMUS} is satisfied.
\subsection{The Gaussian MT-MUSIC algorithm}
The empirical Gaussian MT-covariance is B-robust, and, under the spherical noise assumption (\ref{CompGauss}), the noise subspace can be determined from the eigendecomposition of the  Gaussian MT-covariance. The Gaussian MT-MUSIC algorithm is implemented by replacing the MT-function in (\ref{MTEmpSpectrum}) with the Gaussian MT-function (\ref{GaussKernel}). 
\subsection{Choosing the scale parameter of the Gaussian MT-function}
\label{TauChoice}
While de-emphasizing non-informative outliers, e.g., caused by heavy-tailed distributions, the empirical Gaussian MT-covariance is less informative than the standard sample-covariance when the observations are normally distributed. This is seen in the following theorem that follows from the Gaussian Fisher information formula \cite{Schreier} and elementary trace inequalities \cite{TraceIneq}.
\begin{Theorem}
\label{TauProp}
Assume that the probability distribution $\px$ of the array outputs (\ref{ArrayModel}) is proper complex normal. The ratio between the Fisher information for estimating $\theta_{k}\in\{\theta_{1},\ldots,\theta_{q}\}$ under the transformed probability measure $Q^{\left(\uGausss\right)}_{\Xmatsc}$ (with the MT-function (\ref{GaussKernel})) and the corresponding Fisher information under $\px$ satisfy:
\begin{equation}
\label{FIMRatio}
\frac{\tau^{4}}{({\lambda_{\rm{max}}}\left(\bSigma_{\Xmatsc}\right)+\tau^{2})^{2}}\leq\frac{{F}(\theta_{k};{Q^{\left(\uGausss\right)}_{\xvec}})}{F\left(\theta_{k};{P_{\Xmatsc}}\right)}  
\leq\frac{\tau^{4}}{(\lambda_{\rm{min}}\left(\bSigma_{\Xmatsc}\right)+\tau^{2})^{2}},
\end{equation}
where $\lambda_{\rm{min}}\left(\cdot\right)$ and $\lambda_{\rm{max}}\left(\cdot\right)$ are the minimum and maximum eigenvalues, respectively.
[A proof is given in Appendix \ref{TauPropProof}]
\end{Theorem}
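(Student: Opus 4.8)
The plan is to exploit the fact that, under the array model (\ref{ArrayModel}), $\Xmat$ has zero mean (the signal is zero-mean and the noise has zero location), so the hypothesis that $\px$ is proper complex normal means $\px=\mathcal{CN}(\mathbf{0},\bSigma_{\Xmatsc})$; multiplying this density by the zero-centered Gaussian MT-function (\ref{GaussKernel}) produces again a zero-mean proper complex normal law with a merely reshaped covariance, whereupon the Fisher information for $\theta_{k}$ reduces, under both measures, to the pure-covariance Slepian--Bangs term, and the ratio collapses to a trace functional controlled by the eigenvalues of a single matrix.

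First I would identify the transformed law. Write $\bSigma\triangleq\bSigma_{\Xmatsc}$, so the $\px$-density of $\Xmat$ is proportional to $\exp(-\xvec^{H}\bSigma^{-1}\xvec)$. By Proposition~\ref{Prop1} together with (\ref{VarPhiDef}) and (\ref{MeasureTransformRadNik}), the $Q^{(\uGausss)}_{\Xmatsc}$-density of $\Xmat$ equals this quantity times $\varphi_{\uGausss}(\xvec)\propto\exp(-\|\xvec\|_{2}^{2}/\tau^{2})=\exp(-\xvec^{H}\xvec/\tau^{2})$; collecting the quadratic forms shows $Q^{(\uGausss)}_{\Xmatsc}=\mathcal{CN}(\mathbf{0},\tilde\bSigma)$ with $\tilde\bSigma\triangleq(\bSigma^{-1}+\tau^{-2}\Imat)^{-1}$. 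In particular the mean vanishes under both measures and does not depend on $\theta_{k}$, so the location term of the information formula is absent throughout.

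Next I would apply the Gaussian Fisher-information formula \cite{Schreier}: for a zero-mean proper complex normal model with covariance $\Rmat(\theta)$, the diagonal Fisher information is $F(\theta_{k})=\Tr\!\big(\Rmat^{-1}(\partial\Rmat/\partial\theta_{k})\Rmat^{-1}(\partial\Rmat/\partial\theta_{k})\big)$. Under $\px$ this gives $F(\theta_{k};\px)=\Tr\big(\bSigma^{-1}\bSigma'\bSigma^{-1}\bSigma'\big)$ with $\bSigma'\triangleq\partial\bSigma/\partial\theta_{k}$. For $Q^{(\uGausss)}_{\Xmatsc}$, differentiating the identity $\tilde\bSigma^{-1}=\bSigma^{-1}+\tau^{-2}\Imat$ gives $\partial\tilde\bSigma/\partial\theta_{k}=\tilde\bSigma\,\bSigma^{-1}\bSigma'\bSigma^{-1}\,\tilde\bSigma$; substituting into the formula and cancelling a $\tilde\bSigma^{-1}\tilde\bSigma$ factor yields $F(\theta_{k};Q^{(\uGausss)}_{\Xmatsc})=\Tr\big((\bSigma^{-1}\bSigma'\bSigma^{-1}\tilde\bSigma)^{2}\big)$.

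Finally I would place both quantities in a common basis. With the Hermitian matrices $\Dmat\triangleq\bSigma^{-1/2}\bSigma'\bSigma^{-1/2}$ and $\Bmat\triangleq\bSigma^{-1/2}\tilde\bSigma\bSigma^{-1/2}=(\Imat+\tau^{-2}\bSigma)^{-1}$, cyclic rearrangement of the traces gives $F(\theta_{k};\px)=\Tr(\Dmat^{2})$ and $F(\theta_{k};Q^{(\uGausss)}_{\Xmatsc})=\Tr(\Bmat\Dmat\Bmat\Dmat)$. Since $\Bmat$ is positive definite, symmetrizing yields $\Tr(\Bmat\Dmat\Bmat\Dmat)=\|\Bmat^{1/2}\Dmat\Bmat^{1/2}\|_{\rm{Fro}}^{2}$, and submultiplicativity of the spectral norm against the Frobenius norm --- applied both ways via $\Dmat=\Bmat^{-1/2}(\Bmat^{1/2}\Dmat\Bmat^{1/2})\Bmat^{-1/2}$ --- gives $\lambda_{\rm{min}}^{2}(\Bmat)\,\Tr(\Dmat^{2})\le\Tr(\Bmat\Dmat\Bmat\Dmat)\le\lambda_{\rm{max}}^{2}(\Bmat)\,\Tr(\Dmat^{2})$, which is where the elementary trace inequalities \cite{TraceIneq} enter. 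Since $\Bmat$ has eigenvalues $\tau^{2}/(\tau^{2}+\lambda_{i}(\bSigma))$, one has $\lambda_{\rm{min}}(\Bmat)=\tau^{2}/(\tau^{2}+\lambda_{\rm{max}}(\bSigma))$ and $\lambda_{\rm{max}}(\Bmat)=\tau^{2}/(\tau^{2}+\lambda_{\rm{min}}(\bSigma))$; dividing through by $\Tr(\Dmat^{2})=F(\theta_{k};\px)$, which is positive under the standing array-regularity assumptions, produces (\ref{FIMRatio}). The only real obstacle is the matrix bookkeeping in this last step --- the differentiation of $\tilde\bSigma$ through its inverse and the rearrangement that isolates $\Bmat$ sandwiched around $\Dmat$; once $F(\theta_{k};Q^{(\uGausss)}_{\Xmatsc})=\Tr(\Bmat\Dmat\Bmat\Dmat)$ is in hand, the two-sided eigenvalue estimate is routine.
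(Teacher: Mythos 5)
Your proposal is correct and follows essentially the same route as the paper: identify $Q^{\left(\uGausss\right)}_{\Xmatsc}$ as zero-mean proper complex normal with covariance $\left(\bSigma^{-1}_{\Xmatsc}+\tau^{-2}\Imat\right)^{-1}$, apply the Gaussian Fisher-information formula, differentiate through the inverse, and bound the resulting trace by the extreme eigenvalues of $\bSigma^{-1}_{\Xmatsc}\bSigma^{\left(\uGausss\right)}_{\Xmatsc}\left(\tau\right)$. The only cosmetic difference is that you carry out the final two-sided bound by symmetrizing to $\left\|\Bmat^{1/2}\Dmat\Bmat^{1/2}\right\|_{\rm{Fro}}^{2}$ and invoking spectral-norm submultiplicativity, whereas the paper packages the same estimate as the trace inequality of Lemma \ref{Lemma1}, which is itself proved by the identical spectral-norm argument.
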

Therefore, in order to prevent a significant transform-domain Fisher information loss when the observations are normally distributed, we propose to choose the following safe-guard scale parameter:
\begin{equation}
\label{TAUSAFE}  
\tau=\sqrt{c{\lambda_{\rm{max}}}\left(\bSigma_{\Xmatsc}\right)}, 
\end{equation}
where $c$ is some positive constant that guarantees that the Fisher information ratio (\ref{FIMRatio}) is greater than $(c/(1+c))^{2}$. Since in practice $\bSigma_{\Xmatsc}$ is unknown, it is replaced by the following empirical robust estimate that is based on its relation (\ref{GaussCovGauss}) to the Gaussian MT-covariance for normally distributed observations:
\begin{equation}
\label{RobCovEst}
\hat{\bSigma}_{\Xmatsc}=\tau^{2}\hat{\bSigma}^{(\uGausss)}_{\Xmatsc}\left(\tau\right)\left(\tau^{2}\Imat-\hat{\bSigma}^{(\uGausss)}_{\Xmatsc}\left(\tau\right)\right)^{-1},
\end{equation}
where the empirical Gaussian MT-covariance $\hat{\bSigma}^{(\uGausss)}_{\Xmatsc}\left(\tau\right)$ is obtained using (\ref{Rx_u_Est}), and $\tau$ must be greater than ${\lambda_{\rm{max}}}\left(\hat{\bSigma}^{(\uGausss)}_{\Xmatsc}\left(\tau\right)\right)$ in order to guarantee positive definiteness of $\hat{\bSigma}_{\Xmatsc}$. Therefore, substitution of (\ref{RobCovEst}) into (\ref{TAUSAFE}) results in the following data-driven selection rule:
\begin{equation}
\label{TAUSAFE_EMP}
\tau=\sqrt{\left(c+1\right){\lambda_{\rm{max}}}\left(\hat{\bSigma}^{(\uGausss)}_{\Xmatsc}\left(\tau\right)\right)},
\end{equation}
which can be solved numerically, e.g., using fixed-point iteration \cite{FXP}.

In the general case, when the observations are not necessarily Gaussian, the selection rule (\ref{TAUSAFE}) controls the amount of second-order statistical information loss caused by the measure transformation. Increasing the constant $c$ increases the scale parameter $\tau$ and reduces the information loss, while on the other hand, makes the estimator more sensitive to large-norm outliers, as illustrated in Fig. \ref{INF_FRO}.
\section{Estimation of the number of signals}
\label{MTMDLEST}
We estimate the number of signals using a measure-transformed version of the minimum description length (MDL) criterion \cite{Wax}, called MT-MDL, that is obtained by replacing the eigenvalues of the SCM with the eigenvalues of the empirical MT-covariance. The MT-MDL criterion takes the form:
\begin{eqnarray}
\label{MTMDL}
{\rm{MDL}}^{\left(u\right)}\left(k\right)&=&-\log\left(\frac{\left(\prod\limits_{m=k+1}^{p}\hat{\lambda}^{\left(u\right)}_{m}\right)^{\frac{1}{p-k}}}{\frac{1}{p-k}\sum\limits_{m=k+1}^{p}\hat{\lambda}^{\left(u\right)}_{m}}\right)^{({p-k}){N}}
\\\nonumber&+&\frac{1}{2}k\left(2p-k\right)\log{N},
\end{eqnarray}
where $\hat{\lambda}^{\left(u\right)}_{1}\geq\ldots\geq\hat{\lambda}^{\left(u\right)}_{p}$ denote the eigenvalues of $\hat{\bSigma}^{\left(u\right)}_{\Xmatsc}$ and $N$ is the number of observations (snapshots). The estimated number of signals, $\hat{q}$, is obtained by minimizing (\ref{MTMDL}) over $k\in\left\{0,\ldots,p-1\right\}$. 

Under the conditions that the eigenvalues of the SCM are strongly consistent with asymptotic convergence rate of $O\left(\sqrt{N^{-1}\log\log{N}}\right)$ and that the $p-q$ smallest eigenvalues of the covariance matrix are equal and separated from its $q$ largest eigenvalues, it has been shown in \cite{Zhao} that minimization of the MDL criterion leads to strongly consistent estimates of the number of signals for any underlying probability distribution of the data. Thus, when the eigenvalues of $\hat{\bSigma}^{\left(u\right)}_{\Xmatsc}$ and ${\bSigma}^{\left(u\right)}_{\Xmatsc}$  satisfy these conditions, namely $\hat{\lambda}^{\left(u\right)}_{k}$ converges almost surely to ${\lambda}^{\left(u\right)}_{k}$ for all $k=1,\ldots,p$ with the same asymptotic convergence rate as the eigenvalues of the SCM, and the eigenvalues of ${\bSigma}^{\left(u\right)}_{\Xmatsc}$ satisfy (\ref{EigCond}), the resulting MT-MDL based estimator, $\hat{q}$, must be strongly consistent. This rationale is used for proving the following Theorem that states a sufficient condition for strong consistency of the estimator $\hat{q}$. 
\begin{Theorem}
\label{Th2}
Let $\Xmat\left(n\right)$, $n=1,\ldots,N$ denote a sequence of i.i.d. samples from the probability distribution $\px$ of the array output (\ref{ArrayModel}), with MT-covariance $\bSigma^{\left(u\right)}_{\Xmatsc}$ whose eigenvalues satisfy (\ref{EigCond}). If 
\begin{equation}
\label{Cond3}
{\rm{E}}\left[u^{2}\left(\Xmat\right);\px\right]<\infty \hspace{0.2cm} {\rm{and}}\hspace{0.2cm}{\rm{E}}\left[\left\|\Xmat\right\|^{4}_{2}u^{2}\left(\Xmat\right);\px\right]<\infty,
\end{equation}
then $\hat{q}\rightarrow{q}$ a.s. as $N\rightarrow\infty$. [A proof is given in Appendix \ref{Th2Proof}]
\end{Theorem}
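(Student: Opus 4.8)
The plan is to reduce the claim to the strong-consistency theorem for the classical MDL order-selection criterion established in \cite{Zhao}. That theorem asserts that minimizing the MDL criterion yields a strongly consistent estimate of the number of signals, for \emph{any} underlying data distribution, provided (i) the population covariance has its $p-q$ smallest eigenvalues equal and strictly separated from the $q$ largest, and (ii) the eigenvalues entering the criterion converge almost surely to the population eigenvalues at the rate $O\left(\sqrt{N^{-1}\log\log{N}}\right)$. Since the MT-MDL criterion (\ref{MTMDL}) coincides with the classical MDL criterion with the sample eigenvalues replaced by $\hat\lambda^{(u)}_{1}\geq\cdots\geq\hat\lambda^{(u)}_{p}$, and hypothesis (\ref{EigCond}) supplies (i) for $\bSigma^{(u)}_{\Xmatsc}$, the proof reduces to establishing the rate (ii) for the eigenvalues of $\hat\bSigma^{(u)}_{\Xmatsc}$ under the moment conditions (\ref{Cond3}).

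First I would express $\hat\bSigma^{(u)}_{\Xmatsc}$ as a fixed smooth function of three i.i.d. sample averages. Writing $\bar u_{N}\triangleq N^{-1}\sum_{n=1}^{N}u(\Xmat(n))$, $\bar m_{N}\triangleq N^{-1}\sum_{n=1}^{N}\Xmat(n)u(\Xmat(n))$ and $\bar S_{N}\triangleq N^{-1}\sum_{n=1}^{N}\Xmat(n)\Xmat^{H}(n)u(\Xmat(n))$, relations (\ref{Rx_u_Est})--(\ref{hat_varphi}) give $\hat\bSigma^{(u)}_{\Xmatsc}=\bar S_{N}/\bar u_{N}-\bar m_{N}\bar m_{N}^{H}/\bar u_{N}^{2}$, while $\bSigma^{(u)}_{\Xmatsc}$ has the identical form with $(\bar u_{N},\bar m_{N},\bar S_{N})$ replaced by their expectations $(\mu_{u},m_{u},S_{u})$, where $\mu_{u}={\rm{E}}[u(\Xmat);\px]>0$ by (\ref{Cond}). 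The two conditions in (\ref{Cond3}) are exactly what is needed for the centered sums to obey the law of the iterated logarithm: ${\rm{E}}[u^{2}(\Xmat);\px]<\infty$ bounds the variance of $\bar u_{N}$; ${\rm{E}}[\|\Xmat\|^{4}_{2}u^{2}(\Xmat);\px]<\infty$ dominates the squared moduli of the entries of $\bar S_{N}$; and the intermediate moment governing the entries of $\bar m_{N}$ is finite by Cauchy--Schwarz, ${\rm{E}}[\|\Xmat\|^{2}_{2}u^{2}(\Xmat);\px]\le\bigl({\rm{E}}[u^{2}(\Xmat);\px]\,{\rm{E}}[\|\Xmat\|^{4}_{2}u^{2}(\Xmat);\px]\bigr)^{1/2}<\infty$. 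Hence each of $\bar u_{N}-\mu_{u}$, $\bar m_{N}-m_{u}$, $\bar S_{N}-S_{u}$ is $O\left(\sqrt{N^{-1}\log\log{N}}\right)$ almost surely.

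Next I would propagate this rate through the rational map. Since $\mu_{u}>0$ and $\bar u_{N}\to\mu_{u}$ a.s. by the strong law, the denominators $\bar u_{N},\bar u_{N}^{2}$ are eventually bounded away from $0$, so $(\bar u_{N},\bar m_{N},\bar S_{N})\mapsto\bar S_{N}/\bar u_{N}-\bar m_{N}\bar m_{N}^{H}/\bar u_{N}^{2}$ is Lipschitz on a neighborhood of $(\mu_{u},m_{u},S_{u})$; composing with the a.s. rate of the previous step gives $\|\hat\bSigma^{(u)}_{\Xmatsc}-\bSigma^{(u)}_{\Xmatsc}\|=O\left(\sqrt{N^{-1}\log\log{N}}\right)$ a.s. in any fixed matrix norm. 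Weyl's eigenvalue perturbation inequality then yields $|\hat\lambda^{(u)}_{k}-\lambda^{(u)}_{k}|\le\|\hat\bSigma^{(u)}_{\Xmatsc}-\bSigma^{(u)}_{\Xmatsc}\|_{\rm{op}}=O\left(\sqrt{N^{-1}\log\log{N}}\right)$ a.s. for every $k=1,\ldots,p$, which is precisely the rate required in \cite{Zhao}. Together with (\ref{EigCond}) this verifies hypotheses (i)--(ii), so $\hat q\to q$ a.s. as $N\to\infty$.

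I expect the main obstacle to be the careful justification that the iterated-logarithm rate genuinely survives the nonlinear normalization by $\bar u_{N}$ (a delta-method argument carried out at the LIL rate rather than at the usual $N^{-1/2}$ rate), together with the slightly tedious variance bookkeeping for the entries of $\bar m_{N}$ and $\bar S_{N}$, where the Cauchy--Schwarz bound on the cross moment ${\rm{E}}[\|\Xmat\|^{2}_{2}u^{2}(\Xmat);\px]$ is the one nonobvious ingredient. A minor additional point: although \cite{Zhao} is stated for real-valued observations, its argument uses only the ordered eigenvalues of a Hermitian positive semidefinite matrix and their a.s. convergence rate, so it carries over verbatim to the complex array model (\ref{ArrayModel}).
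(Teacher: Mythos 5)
Your proposal is correct, and its overall strategy coincides with the paper's: both reduce the claim to the strong-consistency argument of \cite{Zhao}, which applies once the ordered eigenvalues of $\hat{\bSigma}^{\left(u\right)}_{\Xmatsc}$ converge a.s. to those of ${\bSigma}^{\left(u\right)}_{\Xmatsc}$ at rate $O\left(\sqrt{N^{-1}\log\log{N}}\right)$ and (\ref{EigCond}) holds. Where you genuinely diverge is in how that rate is established. The paper works through the real--imaginary decomposition of Appendix \ref{RealImag}, views $\hat{\bSigma}^{\left(g\right)}_{\Zmatsc}$ as a statistical functional ${\Psimat}_{\Zmatsc}^{(g)}[\hat{P}_{\Zmatsc}]$, performs a von Mises (functional Taylor) expansion about $P_{\Zmatsc}$, identifies the linear term as a zero-mean V-statistic whose kernel has finite-variance entries under (\ref{Cond3}), bounds the remainder via the LIL applied to the scalar quantities $C$ and $D$, and then invokes Theorem 6.4.2 of \cite{Serfling} to obtain the matrix rate, followed by Lemma 3.2 of \cite{Zhao} for the eigenvalues. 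You instead write $\hat{\bSigma}^{\left(u\right)}_{\Xmatsc}=\bar{S}_{N}/\bar{u}_{N}-\bar{m}_{N}\bar{m}_{N}^{H}/\bar{u}_{N}^{2}$ as a fixed smooth function of three i.i.d. sample averages, apply the Hartman--Wintner LIL entrywise (with the Cauchy--Schwarz bound ${\rm{E}}[\|\Xmat\|^{2}_{2}u^{2}(\Xmat);\px]\le({\rm{E}}[u^{2}(\Xmat);\px]\,{\rm{E}}[\|\Xmat\|^{4}_{2}u^{2}(\Xmat);\px])^{1/2}$ covering the cross term, exactly the role (\ref{Cond3}) plays in the paper), propagate the rate through the locally Lipschitz rational map using $\bar{u}_{N}\rightarrow\mu_{u}>0$ a.s., and finish with Weyl's perturbation inequality in place of Zhao's Lemma 3.2. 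Your route is more elementary and self-contained: it avoids the differentiable-statistical-functional machinery, the separate remainder analysis, and the real--imaginary bookkeeping, at the cost of slightly more explicit Lipschitz/moment verification; the paper's functional-expansion route, by contrast, plugs directly into standard robust-statistics asymptotics. Both arguments are sound, and your closing remark that only the ordered eigenvalues of a Hermitian matrix (not real-valuedness of the data) matter for \cite{Zhao} is the right observation to make the reduction legitimate.
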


Notice that the Gaussian MT-function (\ref{GaussKernel}) always satisfies the condition (\ref{Cond3}). Furthermore, as shown in subsection \ref{CGNoise}, the Gaussian MT-covariance $\bSigma^{\left(\uGausss\right)}_{\Xmatsc}\left(\tau\right)$ satisfies Condition \ref{Cond2} in Subsection \ref{RobMTMUS} for spherically distributed noise. Therefore, in this case, minimization of the MT-MDL criterion with the Gaussian MT-function (Gaussian MT-MDL) results in robust and strongly consistent estimate of the number of signals. We propose to choose the scale parameter $\tau$ of the Gaussian MT-function using the same selection rule (\ref{TAUSAFE_EMP}) that prevents significant information loss for estimation of the DOAs (model parameters), and does not require any knowledge about the number of signals (model order). The idea of estimating the DOAs and the number of signals using the same scale parameter $\tau$, i.e., under the same transformed probability measure, arises from the intuition that if there is no significant information loss for estimating the model parameters, then there will be no significant information loss for estimating the model order.   
\section{The spatially smoothed Gaussian MT-MUSIC for coherent signals}
\label{CMTMUSIC}
In this section, we consider the case of coherent signals contaminated by spherically distributed noise. In this scenario, the components of the latent vector $\Smat\left(n\right)$ in (\ref{ArrayModel}) are phase-delayed amplitude-weighted replicas of a single first-order stationary random signal $s(n)$, i.e.,  
\begin{equation}
\label{CoherentModel}
\Smat\left(n\right)=\bxi{s}\left(n\right),
\end{equation}
where $\bxi\in\Csp^{q}$ is a vector of deterministic complex attenuation coefficients. Similarly to the standard covariance $\bSigma_{\Xmatsc}$, the noise subspace cannot be determined from the eigendecomposition of  the Gaussian MT-covariance $\bSigma^{\left(\uGausss\right)}_{\Xmatsc}\left(\tau\right)$, and therefore, the Gaussian MT-MUSIC will fail in estimating the DOAs. Fortunately, similarly to \cite{Pillai},  we show that for uniform linearly spaced array (ULA) \cite{VanTrees} the DOAs can be determined using a spatially smoothed version of the Gaussian MT-covariance. 

We partition $\bSigma^{\left(\uGausss\right)}_{\Xmatsc}\left(\tau\right)$ into $L=p-r+1$ overlapping forward and backward square sub-matrices of dimension $r<q$. The entries of the $l$-th forward sub-matrix are given by
\begin{equation}
\label{FOR}
\left[\Cmat^{\left(\uGausss\right)}_{f,l}\left(\tau\right)\right]_{j,k}\triangleq\left[\bSigma^{\left(\uGausss\right)}_{\Xmatsc}\left(\tau\right)\right]_{l+j-1,l+k-1},
\end{equation}
$j,k=1,\ldots,r$, and the entries of the $l$-th backward sub-matrix are given by
\begin{equation}
\label{BCK}   
\left[\Cmat^{\left(\uGausss\right)}_{b,l}\left(\tau\right)\right]_{j,k}\triangleq\left[\bSigma^{\left(\uGausss\right)}_{\Xmatsc}\left(\tau\right)\right]^{*}_{p-l-j+2,p-l-k+2},
\end{equation}
$j,k=1,\ldots,r$, where $\left(\cdot\right)^{*}$ denotes the complex conjugate. These matrices correspond to overlapping forward and backward subarrays of size $r$, respectively. The forward spatially smoothed Gaussian MT-covariance is defined as the average over the $L$ forward sub-matrices:
\begin{equation}
\label{FGMTC}
\Cmat^{(\uGausss)}_{f}\left(\tau\right)\triangleq\frac{1}{L}\sum\limits_{l=1}^{L}\Cmat^{\left(\uGausss\right)}_{f,l}\left(\tau\right).
\end{equation}
Similarly, the backward spatially smoothed Gaussian MT-covariance is defined as the average over the $L$ backward sub-matrices:
\begin{equation}
\label{BGMTC} 
\Cmat^{(\uGausss)}_{b}\left(\tau\right)\triangleq\frac{1}{L}\sum\limits_{l=1}^{L}\Cmat^{\left(\uGausss\right)}_{b,l}\left(\tau\right).
\end{equation}
The forward/backward spatially smoothed Gaussian MT-covariance matrix is given by
\begin{equation}
\label{FBGMTC}  
\Cmat^{\left(\uGausss\right)}_{f/b}\left(\tau\right)\triangleq\frac{1}{2}\left({\Cmat^{(\uGausss)}_{f}\left(\tau\right) + \Cmat^{(\uGausss)}_{b}\left(\tau\right)}\right).
\end{equation}

We define $\Bmat\triangleq\left[\bvec\left(\theta_{1}\right),\ldots,\bvec\left(\theta_{q}\right)\right]\in\Csp^{r\times{q}}$ as the steering matrix of a ULA with $r<p$ sensors, where $\bvec\left(\theta\right)\triangleq\left[1,e^{-i2\pi{(d/\lambda)}\sin\left(\theta\right)},\ldots,e^{-i2\pi(r-1){(d/\lambda)}\sin\left(\theta\right)}\right]^{T}$ is the array steering vector toward direction $\theta$, and $d$ is the sensors spacing, i.e., $\Bmat$ is a sub-matrix of the steering matrix $\Amat$ in (\ref{ArrayModel}) comprised of its first $r$ rows.
The following Proposition states sufficient conditions under which the null-space of $\Bmat^{H}$ can be determined from the eigendecomposition of $\Cmat^{\left(\uGausss\right)}_{f/b}\left(\tau\right)$ (\ref{FBGMTC}). The same conditions were proved in  \cite{Pillai} for the spatially smoothed SCM.
\begin{Proposition}
\label{FBSTh}
Define $\Hmat\triangleq\Gmat^{-1}_{f}\Gmat_{b}$, where $\Gmat_{f}\triangleq{\rm{diag}}\left(\bxi\right)$, $\Gmat_{b}\triangleq{\rm{diag}}\left(\deltavec\right)$, $\deltavec\triangleq\left(\Dmat^{p-1}\bxi\right)^{*}$, and $\Dmat^{l}$ is the $l$-th power of the diagonal matrix 
$\Dmat\triangleq{\rm{diag}}\left(\left[v\left(\theta_{1}\right),\ldots,v\left(\theta_{q}\right)\right]\right),\hspace{0.2cm}v\left(\theta\right)\triangleq{\exp}\left({-i2\pi{(d/\lambda)}\sin\left(\theta\right)}\right)$.
Additionally, let $\lambda^{\left(\uGausss\right)}_{1}\geq\cdots\geq\lambda^{\left(\uGausss\right)}_{r}$ denote the eigenvalues of  $\Cmat^{\left(\uGausss\right)}_{f/b}\left(\tau\right)$. If the dimension $r$ of the forward and backward sub-matrices (\ref{FOR}) and (\ref{BCK}) is chosen such that the resulting number of sub-matrices in each direction (forward or backward) $L$ satisfies: 
\begin{enumerate}
\item
$L\geq{q}$, or
\item
$2L\geq{q}$ and  the largest subset of equal diagonal entries of $\Hmat$ is at most of size $L$,
\end{enumerate}
then the $r-q$ smallest eigenvalues of $\Cmat^{\left(\uGausss\right)}_{f/b}\left(\tau\right)$ satisfy $\lambda^{\left(\uGausss\right)}_{q}>\lambda^{\left(\uGausss\right)}_{q+1}=\cdots=\lambda^{\left(\uGausss\right)}_{r}$, and their corresponding eigenvectors span the null-space of $\Bmat^{H}$.  [A proof is given in Appendix \ref{FBSThProof}]
\end{Proposition}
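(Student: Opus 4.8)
The plan is to reduce the forward/backward spatially smoothed Gaussian MT-covariance to exactly the algebraic form obtained for the spatially smoothed SCM in \cite{Pillai}, and then to invoke the Vandermonde rank characterization established there. I would first specialize Theorem~\ref{CompGaussStruct} to the coherent model (\ref{CoherentModel}): its derivation of the structure (\ref{GaussMTCov}) never uses non-singularity of the latent covariance, so under (\ref{CoherentModel}) and the spherical-noise assumption (\ref{CompGauss}) one still has
\begin{equation}
\label{pfFB1}
\bSigma^{\left(\uGausss\right)}_{\Xmatsc}\left(\tau\right)=\gamma\,\Amat\bxi\bxi^{H}\Amat^{H}+\sigma^{2}\Imat_{p},
\end{equation}
where now $\bSigma^{(g)}_{\alpha^{2}\Smatsc}\left(\tau\right)=\gamma\bxi\bxi^{H}$ is the rank-one covariance of the scaled coherent signal $\alpha^{2}(n)s(n)\bxi$ under $Q^{(g)}_{\alpha,\Smatsc}$, with $\gamma>0$, and $\sigma^{2}\triangleq\sigma^{2(h)}_{\alpha\Wmatsc}\left(\tau\right)>0$. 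I would also record that $\Gmat_{f}={\rm{diag}}\left(\bxi\right)$ is invertible (non-zero attenuation coefficients) and that $\Bmat$ has full column rank $q<r$ (distinct DOAs, unambiguous array).

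Next I would exploit the ULA Vandermonde structure. Writing $v_{k}\triangleq v\left(\theta_{k}\right)$ and $\vvec_{l}\triangleq\left[v_{1}^{l},\ldots,v_{q}^{l}\right]^{T}$, the sub-block of $\Amat$ formed by its rows $l,\ldots,l+r-1$ equals $\Bmat\Dmat^{l-1}$, so substituting (\ref{pfFB1}) into (\ref{FOR}) gives $\Cmat^{\left(\uGausss\right)}_{f,l}\left(\tau\right)=\gamma\Bmat\left(\Dmat^{l-1}\bxi\right)\left(\Dmat^{l-1}\bxi\right)^{H}\Bmat^{H}+\sigma^{2}\Imat_{r}$. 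For the backward sub-matrices I would use that conjugation together with index-reversal (the identity $\overline{v_{k}^{m}}=v_{k}^{-m}$, valid since $\left|v_{k}\right|=1$) maps $\bSigma^{\left(\uGausss\right)}_{\Xmatsc}\left(\tau\right)$ to $\gamma\,\Amat\deltavec\deltavec^{H}\Amat^{H}+\sigma^{2}\Imat_{p}$ with $\deltavec=\left(\Dmat^{p-1}\bxi\right)^{*}$ as in the statement; matching this with definition (\ref{BCK}) then yields $\Cmat^{\left(\uGausss\right)}_{b,l}\left(\tau\right)=\gamma\Bmat\left(\Dmat^{l-1}\deltavec\right)\left(\Dmat^{l-1}\deltavec\right)^{H}\Bmat^{H}+\sigma^{2}\Imat_{r}$. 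Averaging over $l$ and writing $\Dmat^{l-1}\bxi=\Gmat_{f}\vvec_{l-1}$, $\Dmat^{l-1}\deltavec=\Gmat_{b}\vvec_{l-1}$ gives
\begin{equation}
\label{pfFB2}
\Cmat^{\left(\uGausss\right)}_{f/b}\left(\tau\right)=\Bmat\,\bar{\Rmat}\,\Bmat^{H}+\sigma^{2}\Imat_{r},\qquad
\bar{\Rmat}\triangleq\frac{\gamma}{2L}\left(\Gmat_{f}\Phimat\Phimat^{H}\Gmat_{f}^{H}+\Gmat_{b}\Phimat\Phimat^{H}\Gmat_{b}^{H}\right),
\end{equation}
where $\Phimat\triangleq\left[\vvec_{0},\ldots,\vvec_{L-1}\right]$ is the $q\times L$ Vandermonde matrix with the distinct nodes $v_{1},\ldots,v_{q}$.

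It then remains to show that the Hermitian positive-semidefinite $q\times q$ matrix $\bar{\Rmat}$ is non-singular precisely under conditions~(1) and~(2). Since $\bar{\Rmat}$ is a sum of two PSD terms, its range is the column span of $\left[\Gmat_{f}\Phimat\mid\Gmat_{b}\Phimat\right]$; and as $\Gmat_{b}=\Gmat_{f}\Hmat$ with $\Gmat_{f}$ invertible, ${\rm{rank}}\,\bar{\Rmat}={\rm{rank}}\left[\Phimat\mid\Hmat\Phimat\right]$. If $L\geq q$ this is $q$ because $\Phimat$ alone already has rank $q$; if only $2L\geq q$, the $q\times 2L$ matrix $\left[\Phimat\mid\Hmat\Phimat\right]$ has rank $q$ iff no level set of ${\rm{diag}}\left(\Hmat\right)$ exceeds $L$ in size -- a block of more than $L$ rows sharing a common value $m$ of $\left[\Hmat\right]_{kk}$ all take the form $\left(\wvec,m\wvec\right)$ with $\wvec$ a length-$L$ Vandermonde row and so span at most $L$ dimensions, hence are dependent; the converse is the Vandermonde argument of \cite{Pillai} -- which is exactly conditions~(1) and~(2). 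Finally, with $\bar{\Rmat}\succ0$ and $\Bmat$ of full column rank, $\Bmat\bar{\Rmat}\Bmat^{H}$ has rank $q$, range $={\rm{range}}\left(\Bmat\right)$, and null space $={\rm{null}}\left(\Bmat^{H}\right)$ of dimension $r-q$; adding $\sigma^{2}\Imat_{r}$ in (\ref{pfFB2}) shifts every eigenvalue by $\sigma^{2}$, so $\lambda^{\left(\uGausss\right)}_{q}>\lambda^{\left(\uGausss\right)}_{q+1}=\cdots=\lambda^{\left(\uGausss\right)}_{r}=\sigma^{2}$, with the eigenvectors of the repeated eigenvalue spanning ${\rm{null}}\left(\Bmat^{H}\right)$. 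I expect the main obstacle to be the backward-sub-matrix reduction -- proving that conjugation plus index-reversal sends (\ref{pfFB1}) to $\gamma\,\Amat\deltavec\deltavec^{H}\Amat^{H}+\sigma^{2}\Imat_{p}$ and aligning this with (\ref{BCK}) -- since the remaining Vandermonde rank step coincides with the SCM case treated in \cite{Pillai}.
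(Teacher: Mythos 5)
Your proposal is correct and follows essentially the same route as the paper: you first specialize the Gaussian MT-covariance structure to the coherent model, obtaining the same rank-one-plus-scaled-identity form $\gamma\,\Amat\bxi\bxi^{H}\Amat^{H}+\sigma^{2}\Imat$ that the paper derives as (\ref{SigmaCorr}), and then observe that the problem reduces to the spatially smoothed SCM setting of \cite{Pillai}. The only difference is one of explicitness: the paper stops after establishing this structure and cites \cite{Pillai} wholesale, whereas you carry out the forward/backward sub-matrix reduction and the Vandermonde rank bookkeeping yourself, still deferring to \cite{Pillai} for the sufficiency of condition (2), which is a legitimate and slightly more self-contained rendering of the same argument.
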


Hence, by proper choice of $r$, such that either one of the stated conditions above is satisfied, the spatially smoothed Gaussian MT-MUSIC is obtained by replacing the empirical Gaussian MT-covariance $\hat{\bSigma}^{\left(\uGausss\right)}_{\Xmatsc}\left(\tau\right)$ with its spatially smoothed version $\hat{\Cmat}^{\left(\uGausss\right)}_{f/b}\left(\tau\right)$. 

The number of signals is estimated using a measure-transformed version of the modified MDL (MMDL) criterion used in \cite{Xu} for cases where forward/backward spatial smoothing is performed. Similarly to (\ref{MTMDL}), this criterion, called here Gaussian MT-MMDL, is obtained by replacing the eigenvalues of the spatially smoothed SCM with the eigenvalues of $\hat{\Cmat}^{\left(\uGausss\right)}_{f/b}\left(\tau\right)$. The Gaussian MT-MMDL criterion takes the form: 
\begin{eqnarray}
\label{GMTMDL}
{\rm{MDL}}^{\left(\uGausss\right)}_{f/b}\left(k\right)&=&-\log\left(\frac{\left(\prod\limits_{m=k+1}^{r}\hat{\lambda}^{\left(\uGausss\right)}_{m}\right)^{\frac{1}{r-k}}}{\frac{1}{r-k}\sum\limits_{m=k+1}^{r}\hat{\lambda}^{\left(\uGausss\right)}_{m}}\right)^{({r-k}){N}}
\\\nonumber&+&\frac{1}{4}k\left(2r-k+1\right)\log{N},
\end{eqnarray}
where $\hat{\lambda}^{\left(\uGausss\right)}_{1}\geq\ldots\geq\hat{\lambda}^{\left(\uGausss\right)}_{r}$ denote the eigenvalues of $\hat{\Cmat}^{\left(\uGausss\right)}_{f/b}\left(\tau\right)$ and $N$ is the number of observations (snapshots). The estimated number of signals, $\hat{q}$, is obtained by minimizing (\ref{GMTMDL}) over $k\in\left\{0,\ldots,r-1\right\}$.

Finally, we choose the scale parameter $\tau$ of the Gaussian MT-function using the selection rule (\ref{TAUSAFE_EMP}) with $\hat{\Cmat}^{\left(\uGausss\right)}_{f/b}\left(\tau\right)$ instead of $\hat{\bSigma}^{\left(\uGausss\right)}_{\Xmatsc}\left(\tau\right)$. 
\section{Numerical examples}
\label{Examples}
We evaluate and compare the performance of the proposed MT-MUSIC DOA estimator and the MT-MDL order estimator. What follows is a  summary of these comparisons.
The DOAs estimation performances are evaluated under the assumption that the number of signals is known. We perform a separate evaluation of the  proposed MT-MDL estimator of the number of signals.  We examine scenarios of non-coherent and coherent signals. For non-coherent signals, the Gaussian MT-MUSIC algorithm is compared to the standard SCM-based MUSIC (SCM-MUSIC) \cite{Schmidt} and to its robust generalizations based on the ZMNL preprocessing (ZMNL-MUSIC) \cite{swami2002}, the empirical sign-covariance (SGN-MUSIC) \cite{Visuri}, Tyler's scatter M-estimator (TYLER-MUSIC) \cite{Visa}, \cite{Tyler} and 
the maximum-likelihood (ML) estimators of scatter corresponding to each of the considered non-Gaussian noise distributions (ML-MUSIC) \cite{Visa}, \cite{MestMDL}. The estimation performance of the number of signals using the MT-MDL criterion (\ref{MTMDL}) with the Gaussian MT-function is compared to estimators using the standard MDL criterion \cite{Wax} based on the standard SCM (SCM-MDL), and the MDL variants based on the SCM of the pre-processed data with the ZMNL function (ZMNL-MDL) \cite{Sadler}, the empirical sign-covariance (SGN-MDL) \cite{Visuri}, Tyler's scatter M-estimator (TYLER-MDL) \cite{MestMDL} and the ML estimators of scatter corresponding to each of the considered non-Gaussian noise distributions (ML-MDL) \cite{Visa}, \cite{MestMDL}.  For coherent signals, the spatially smoothed (SS) Gaussian MT-MUSIC algorithm, discussed in Section \ref{CMTMUSIC}, is compared to the spatially smoothed versions of the SCM-MUSIC \cite{Pillai}, ZMNL-MUSIC,  SGN-MUSIC \cite{Visuri}, TYLER-MUSIC and ML-MUSIC. Estimation performance of the number of signals using the Gaussian MT-MMDL criterion (\ref{GMTMDL}) is compared to those obtained by the modified MDL criterion for coherent signals \cite{Xu} based on the forward/backward spatially smoothed versions of the standard SCM, the SCM of the preprocessed data with the ZMNL function \cite{swami2002}, the empirical sign-covariance \cite{Visuri}, Tyler's scatter M-estimator and the ML estimators of scatter corresponding to each non-Gaussian noise distribution considered in the simulation examples. 

We consider the following $p$-variate complex spherical compound Gaussian noise distributions with zero location parameter and isotropic dispersion $\sigma^{2}_{\Wmatsc}\Imat$:  
Gaussian, Cauchy, $K$-distribution with shape parameter $\nu=0.75$, and compound Gaussian distribution with inverse Gaussian texture and shape parameter $\lambda=0.1$. Notice that unlike the Gaussian distribution, the other noise distributions are heavy tailed. Random sampling from the considered noise distributions and their applicability for modelling radar clutter are discussed in detail in \cite{Visa} and \cite{CG4}. Let $\sigma^{2}_{S_{k}}$, $k=1,\ldots,q$ denote the variances of the received signals, the generalized signal-to-noise-ratio (GSNR) is defined as $\textrm{GSNR}\triangleq{10}\log_{10}{\frac{1}{q}\sum_{k=1}^{q}\sigma^{2}_{S_{k}}}/{\sigma^{2}_{\Wmatsc}}$ and is used to index the estimation performance.

Following the approach proposed in subsection \ref{TauChoice}, for non-coherent signals, we select the scale parameter $\tau$ of the Gaussian MT-function (\ref{GaussKernel}) as the solution of (\ref{TAUSAFE_EMP}) with $c=5$. This choice of the constant $c$ guarantees relative transform-domain Fisher information loss of no more than $\approx30\%$. The solution of (\ref{TAUSAFE_EMP}) is obtained using fixed-point iteration with initial condition $\tau_{0}=5\sqrt{\frac{1}{p}\sum_{k=1}^{p}\hat{\sigma}^{2}_{X_{k}}}$, where $\hat{\sigma}^{2}_{X_{k}} = \gamma^{2}[(\textrm{MAD}(\{\textrm{Re}(X_{k,n})\}_{n=1}^{N}))^{2} + (\textrm{MAD}(\{\textrm{Im}(X_{k,n})\}_{n=1}^{N}))^{2}]$, $\gamma\triangleq{1}/\textrm{erf}^{-1}(3/4)$, is a robust median absolute deviation (MAD) estimate of variance \cite{Huber}. The maximum number of iterations and the stopping criterion were set to 100 and ${\left|\tau_{l}-\tau_{l-1}\right|}/{\tau_{l-1}}<10^{-6}$, respectively, where $l$ is an iteration index. For coherent signals, we replaced the empirical Gaussian MT-covariance in (\ref{TAUSAFE_EMP}) with its spatially smoothed version and applied the same selection procedure for $\tau$. 

The maximum number of iterations and the stopping criterion in Tyler's scatter M-estimator and the ML estimators of scatter were set to 100 and 
$${\|\hat{\bSigma}^{\textrm{Tyler/ML}}_{\Xmatsc,l}-\hat{\bSigma}^{\textrm{Tyler/ML}}_{\Xmatsc,l-1}\|_{\rm{Fro}}}/{\|\hat{\bSigma}^{\textrm{Tyler/ML}}_{\Xmatsc,l-1}\|_{\rm{Fro}}}<10^{-6},$$ respectively. 
 
In all examples, the performances versus GSNR were evaluated for $N=1000$ i.i.d. snapshots.
The performances versus the number of snapshots were evaluated at the threshold GSNR point obtained by the Gaussian MT-MUSIC algorithm for $N=1000$ i.i.d. snapshots. The parameter space $\Theta\triangleq\left[-90^{\circ},90^{\circ}\right)$ was sampled uniformly with sampling interval 
$\Delta=0.0018^{\circ}$. All performance measures were averaged over $10^{4}$ Monte-Carlo simulations.
\subsection{Non-coherent signals}
\label{NonCoherentSig} 
In this example, we considered five independent 4-QAM signals with equal power $\sigma^{2}_{\Smatsc}$ impinging on a $16$-element uniform linear array with $\lambda/2$ spacing from DOAs $\theta_{1}=-10^{\circ}$, $\theta_{2}=0^{\circ}$, $\theta_{3}=5^{\circ}$, $\theta_{4}=15^{\circ}$, and $\theta_{5}=35^{\circ}$. The average RMSEs for estimating the DOAs and the error rates for estimating the number of signals versus GSNR and the number of snapshots 
are depicted in Figs. \ref{GAUSS_NON_COHERENT}-\ref{IG_NON_COHERENT} for each noise distribution. Notice that for the Gaussian noise case, there is no significant performance gap between the compared methods. 
For the other noise distributions, the proposed Gaussian MT-MUSIC and the Gaussian MT-MDL based estimation of the number of signals outperform all other robust MUSIC and MDL generalizations in the low GSNR and low sample size regimes, with significantly lower threshold regions. This performance advantage may be  attributed to the fact that unlike the empirical sign-covariance, Tyler's scatter M-estimator, and the ML estimators of scatter corresponding to each non-Gaussian noise distribution, the influence function of the empirical Gaussian MT-covariance is negligible for large norm outliers (as illustrated in Fig. \ref{INF_FRO}), which are likely in low GSNRs and become more defective when the sample size decreases. Furthermore, unlike the ZMNL preprocessing based technique, the proposed measure-transformation approach preserves the noise subspace and effectively suppresses outliers without significant information loss for estimating the DOAs and the number of signals.
\begin{figure}[htp]
  \begin{center}
    {{\subfigure[]{\includegraphics[scale = 0.235]{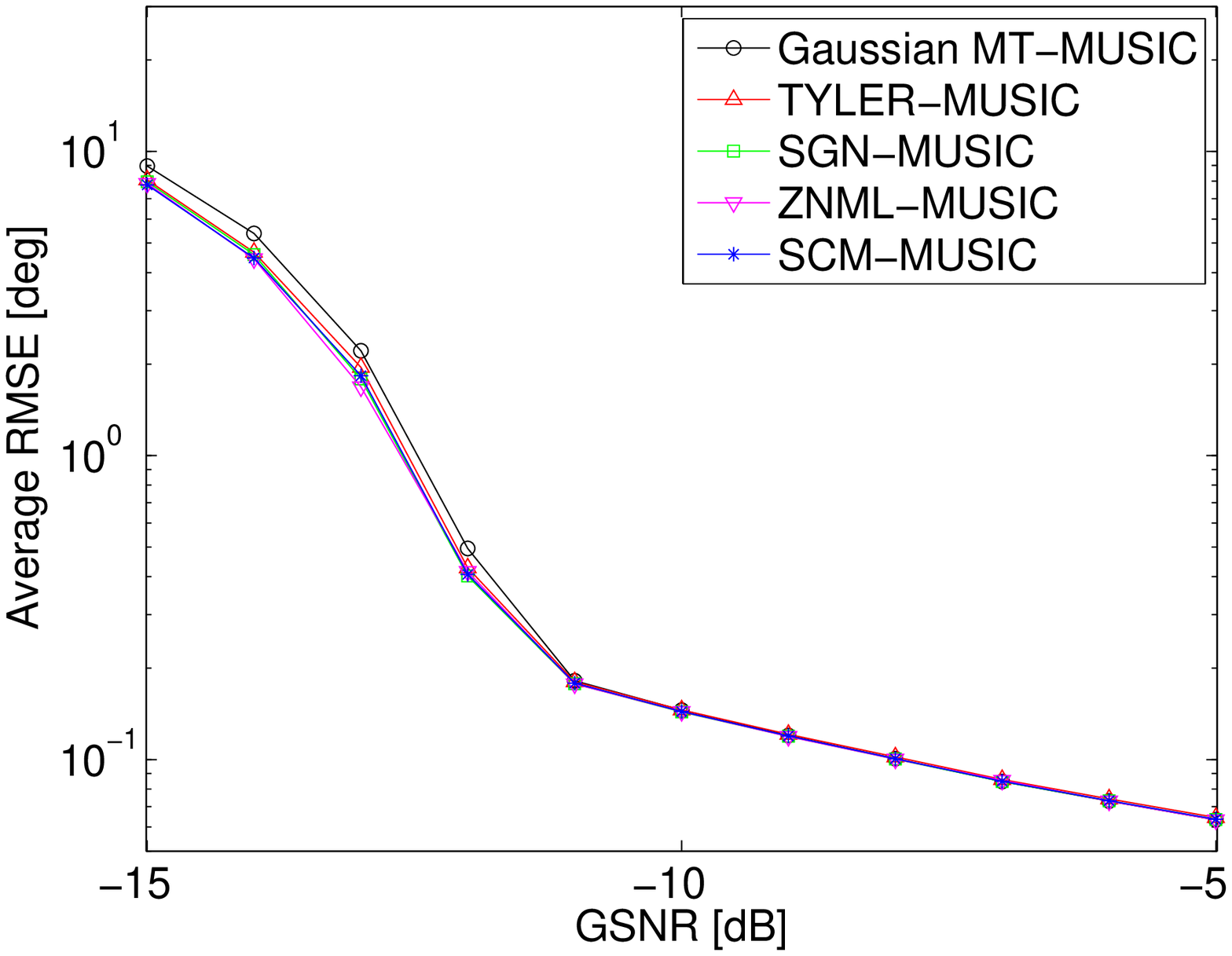}}}}
    {{\subfigure[]{\includegraphics[scale = 0.235]{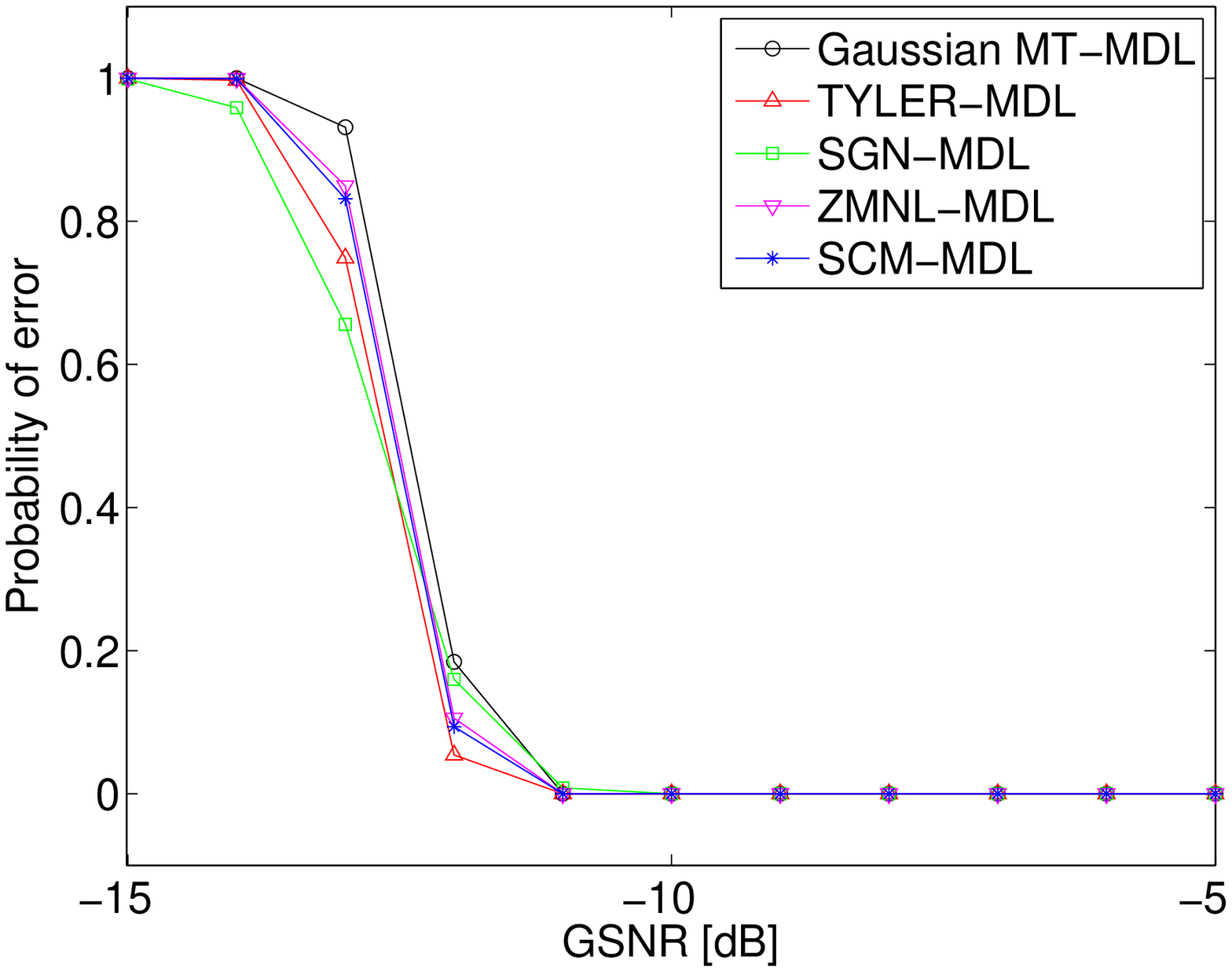}}}}
    {{\subfigure[]{\includegraphics[scale = 0.235]{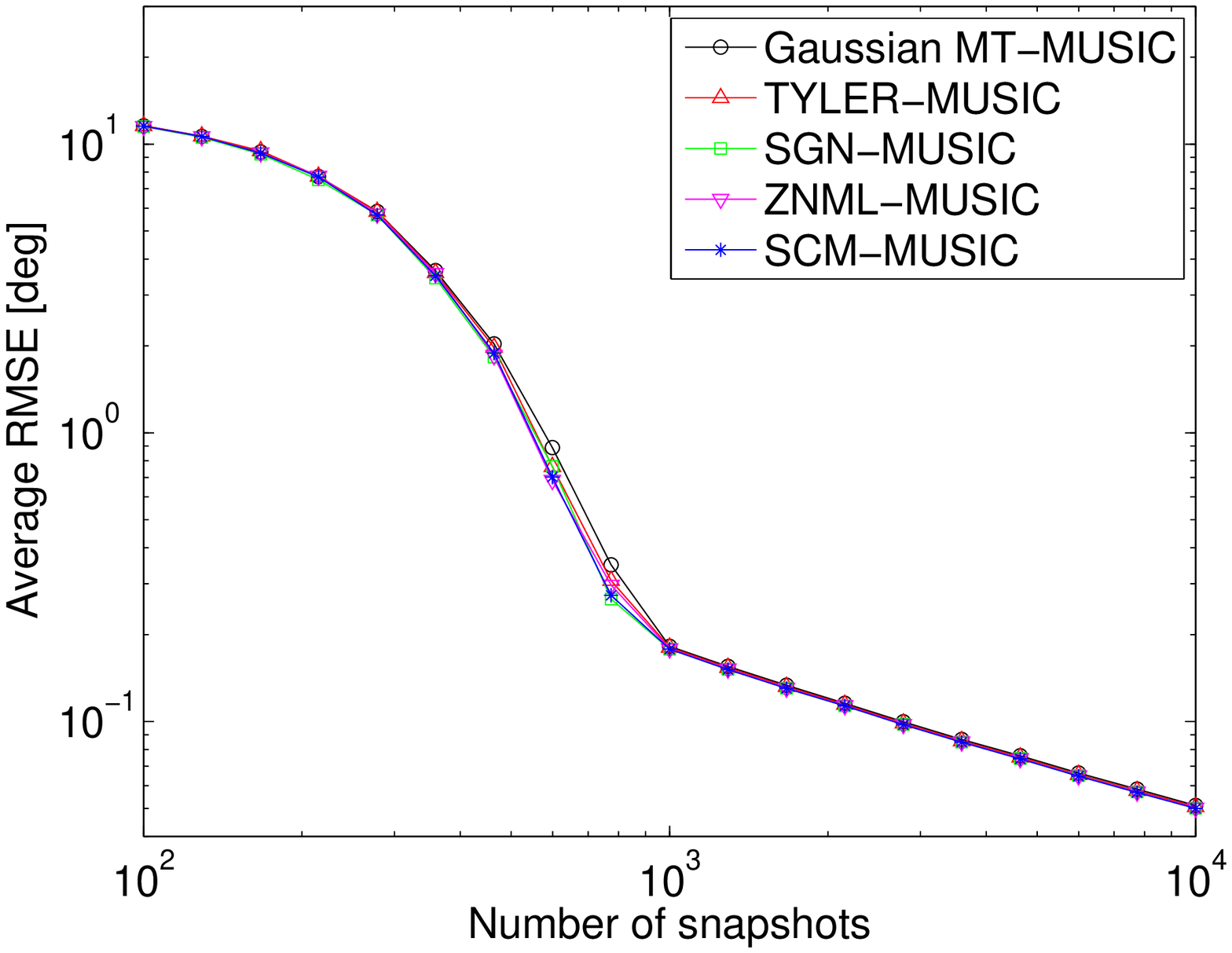}}}}
    {{\subfigure[]{\includegraphics[scale = 0.235]{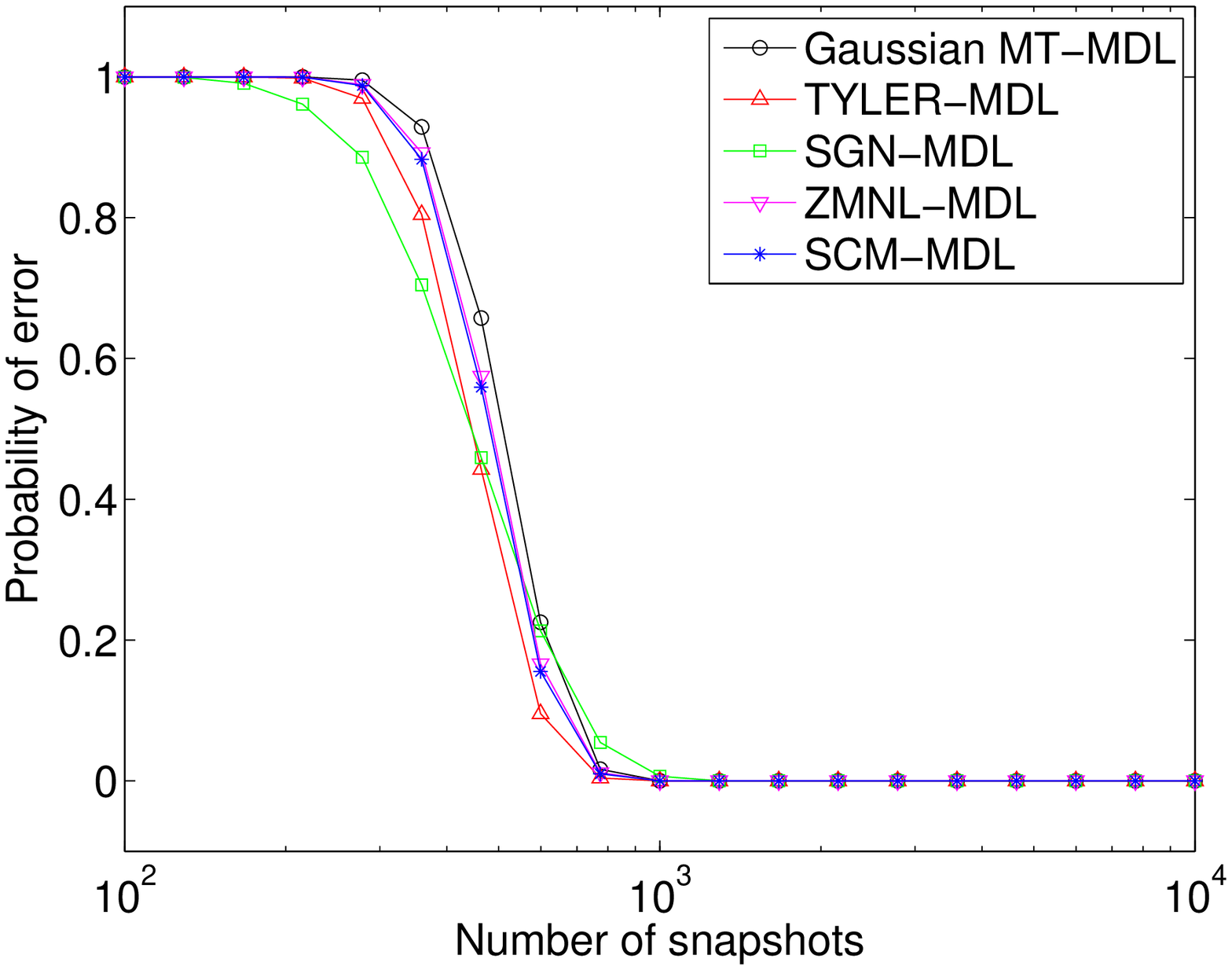}}}}
      \end{center}  
  \caption{\textbf{Non-coherent signals in Gaussian noise:}
   (a) Average RMSE versus GSNR. (b) Probability of error for estimating the number of signals versus GSNR. 
   (c) Average RMSE versus the number of snapshots. 
   (d) Probability of error for estimating the number of signals versus the number of snapshots.
   The performance measures versus GSNR were evaluated for $N=1000$ i.i.d. snapshots. The performance measures versus the number of snapshots were evaluated for ${\rm{GSNR}}=-11$ [dB].
   Notice that all algorithms perform similarly.}
\label{GAUSS_NON_COHERENT}
\end{figure}
\begin{figure}[htp]
  \begin{center}
    {{\subfigure[]{\includegraphics[scale = 0.235]{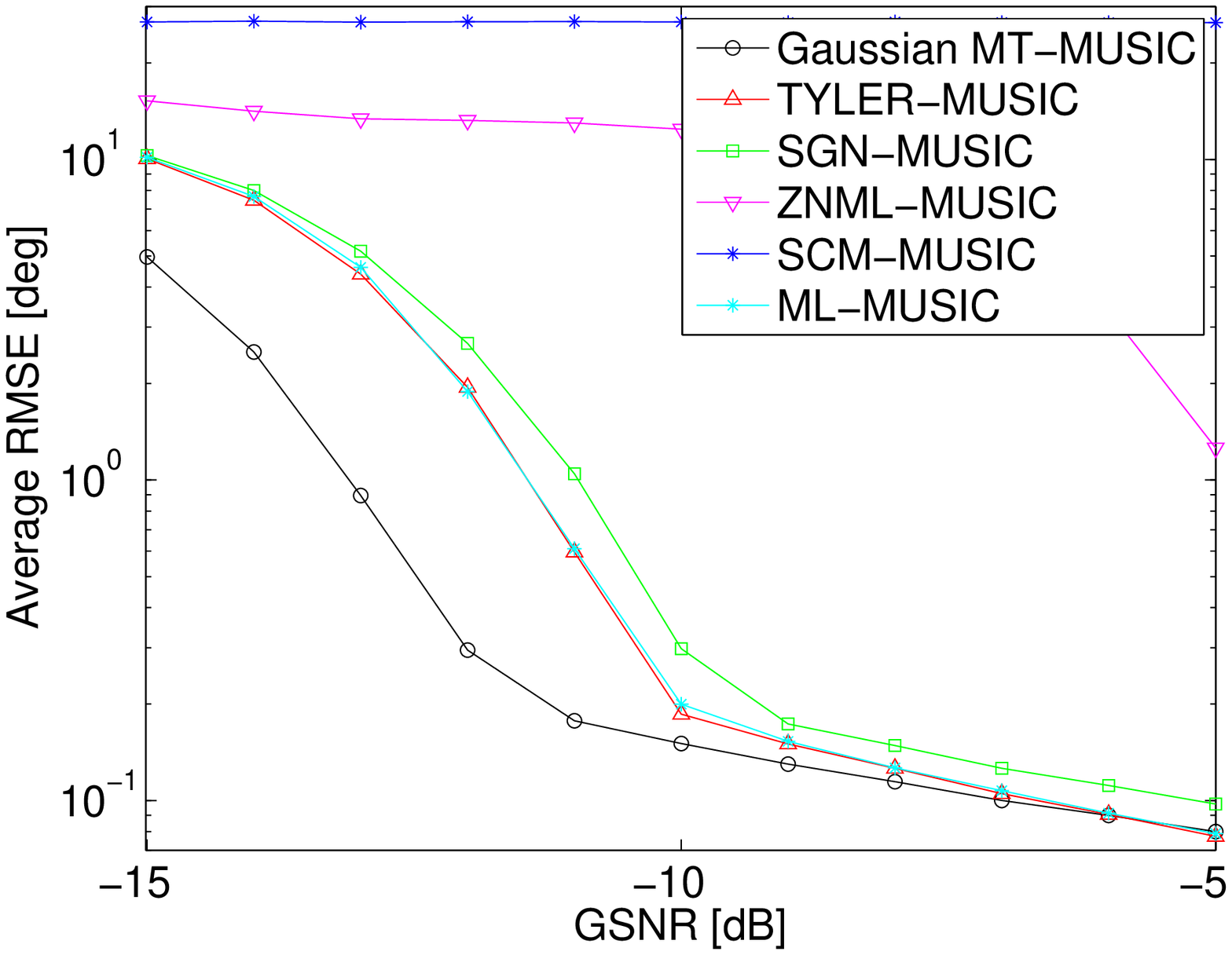}}}}
    {{\subfigure[]{\includegraphics[scale = 0.235]{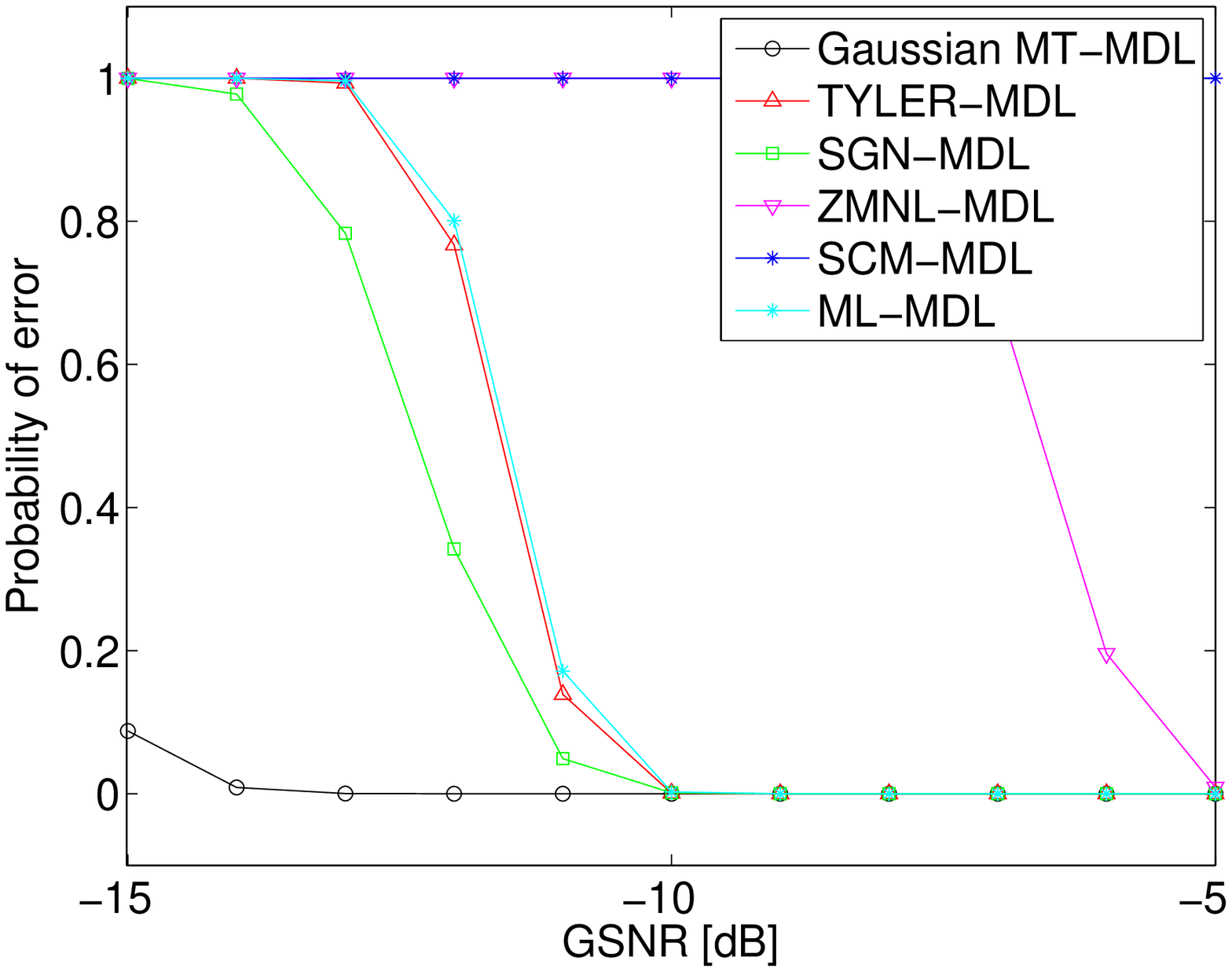}}}}
    {{\subfigure[]{\includegraphics[scale = 0.235]{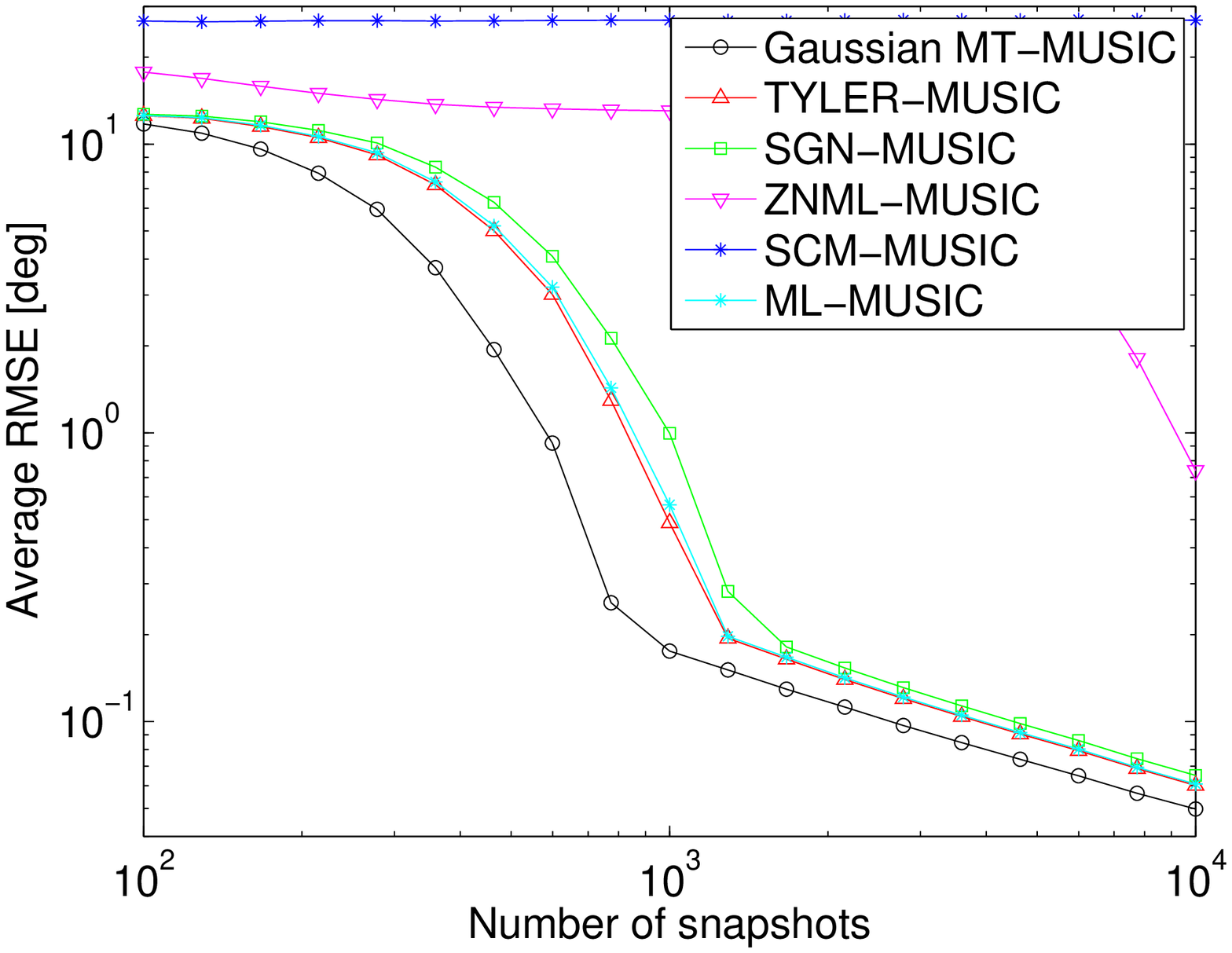}}}}
    {{\subfigure[]{\includegraphics[scale = 0.235]{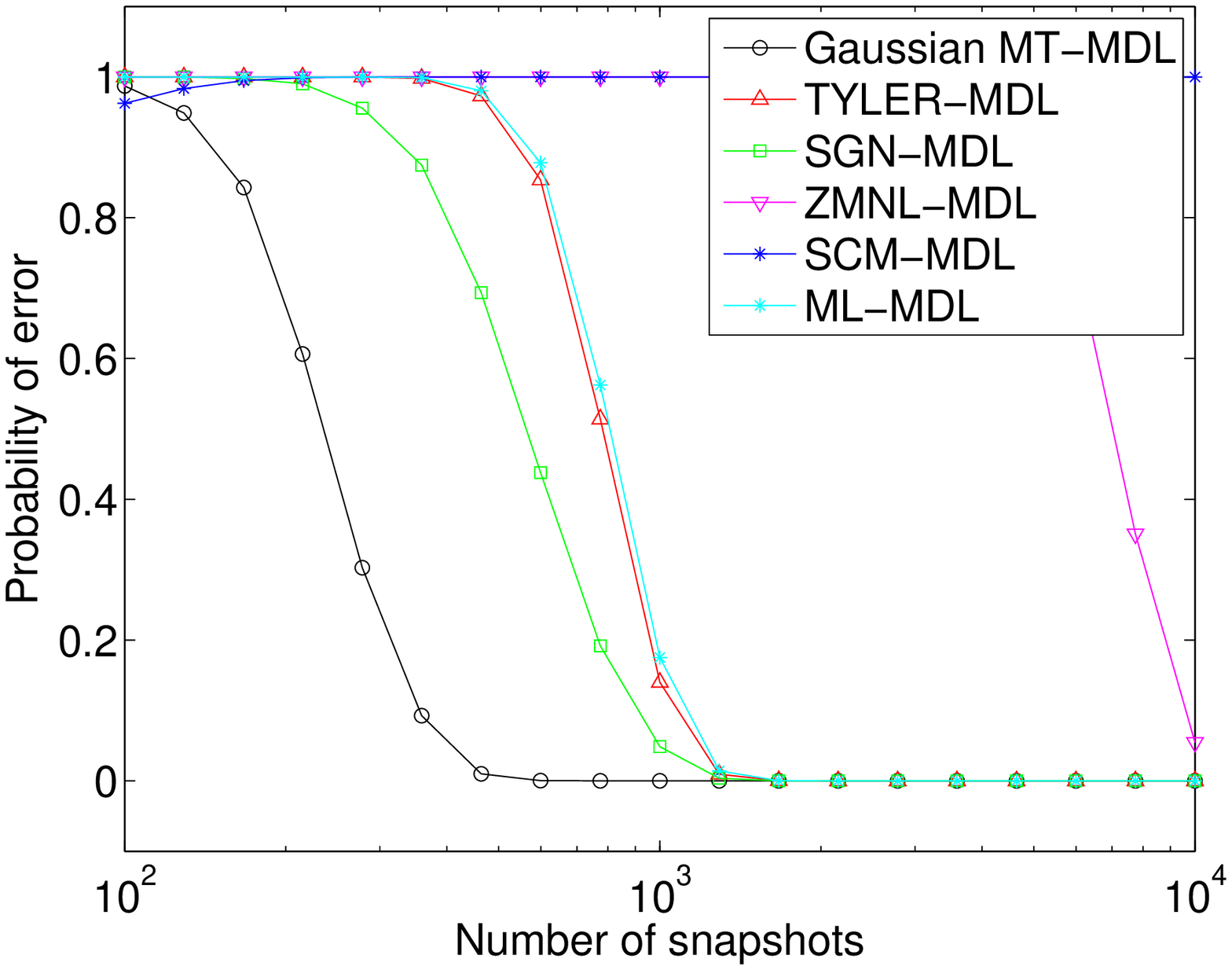}}}}
      \end{center}  
    \caption{\textbf{Non-coherent signals in Cauchy noise:}
   (a) Average RMSE versus GSNR. (b) Probability of error for estimating the number of signals versus GSNR. (c) Average RMSE versus the number of snapshots.  
   (d) Probability of error for estimating the number of signals versus the number of snapshots. The performance measures versus GSNR were evaluated for $N=1000$ i.i.d. snapshots. The performance measures versus the number of     snapshots were evaluated for ${\rm{GSNR}}=-11$ [dB].
   Notice that the Gaussian MT-MUSIC outperforms all other compared algorithms in the low GSNR and low sample size regimes.  Also notice that the Gaussian MT-MDL criterion leads to significantly lower error rates for estimating the number of signals.}
\label{CAUCHY_NON_COHERENT} 
\end{figure}
\begin{figure}[htp]
  \begin{center}
    {{\subfigure[]{\includegraphics[scale = 0.235]{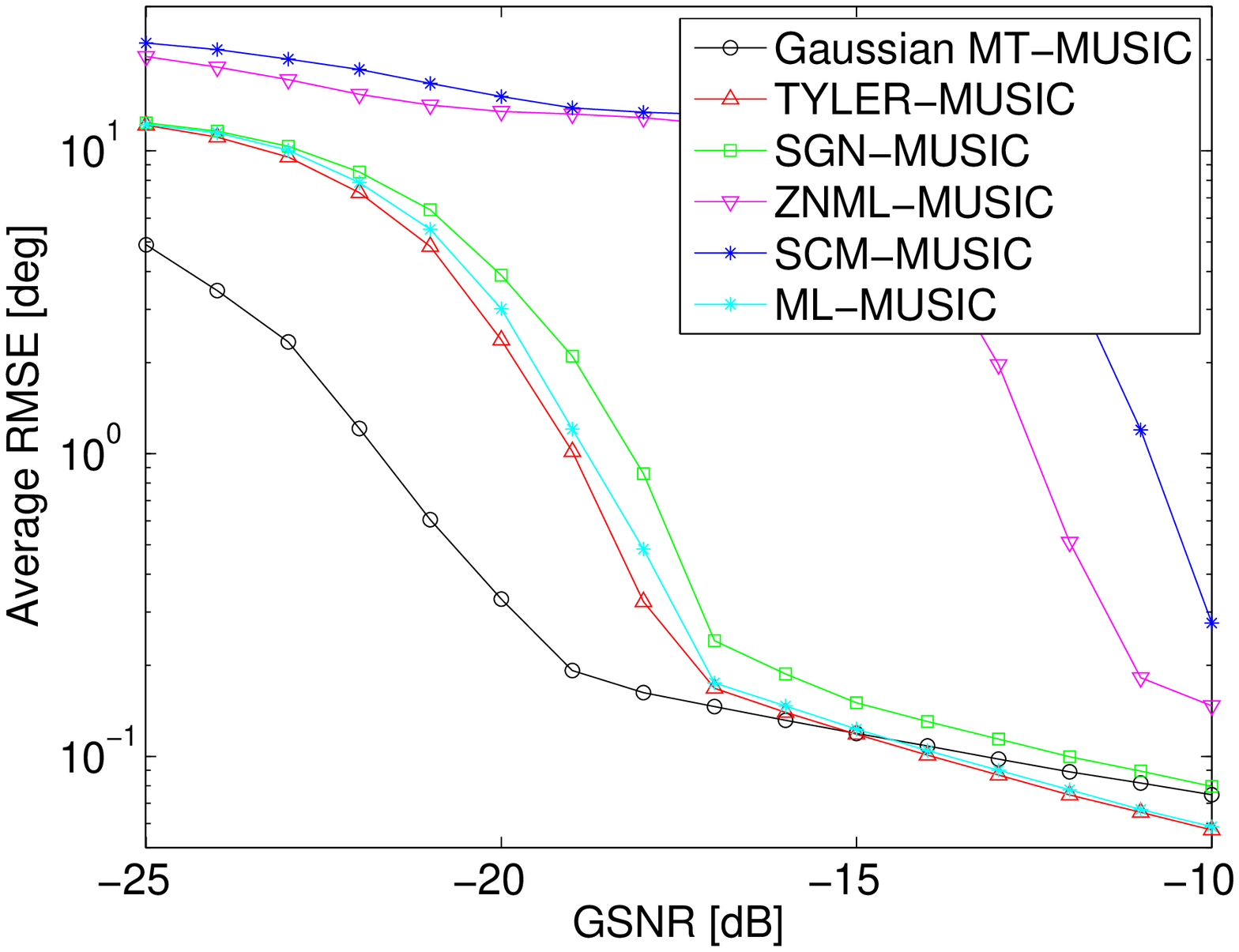}}}}
    {{\subfigure[]{\includegraphics[scale = 0.235]{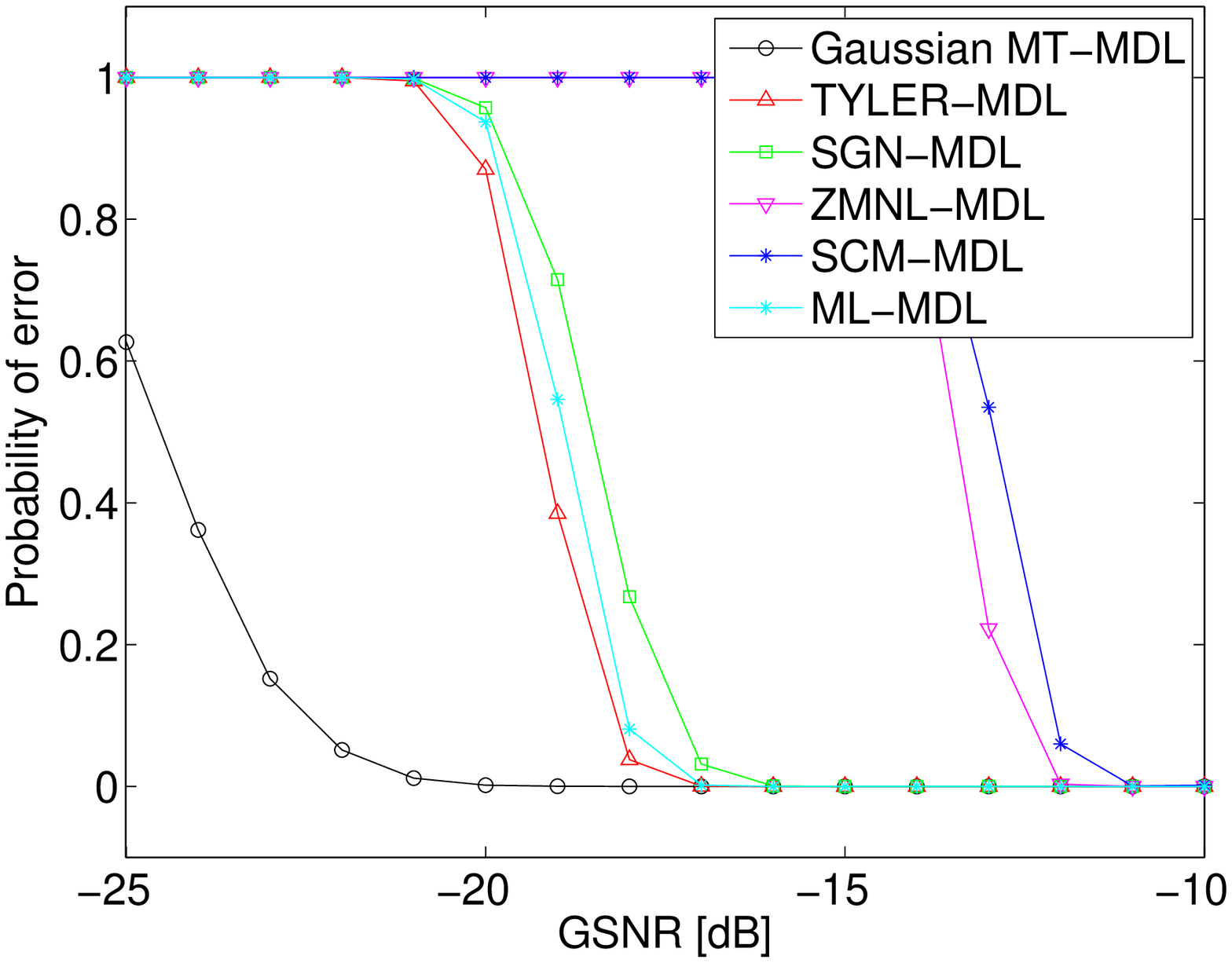}}}}
    {{\subfigure[]{\includegraphics[scale = 0.235]{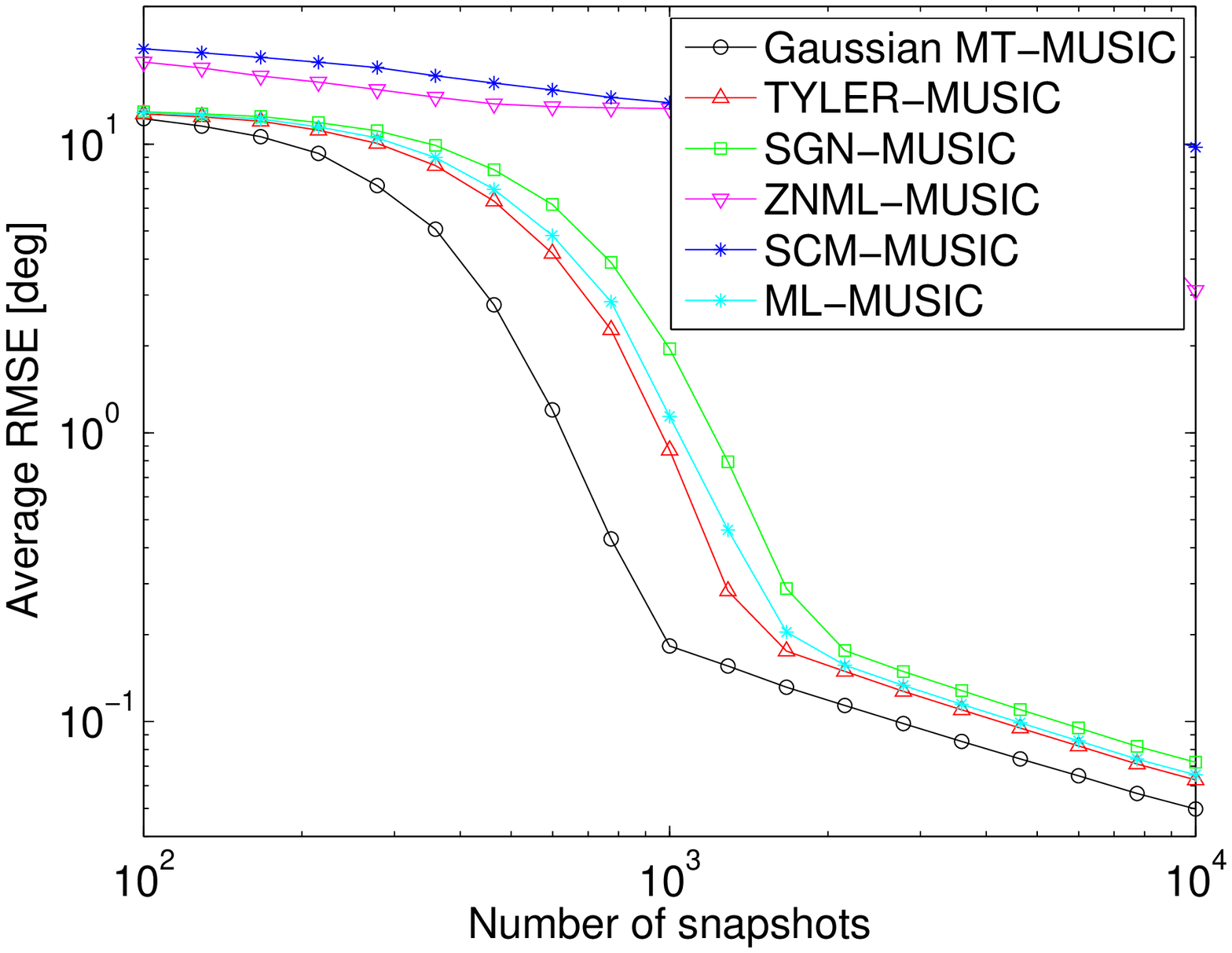}}}}
    {{\subfigure[]{\includegraphics[scale = 0.235]{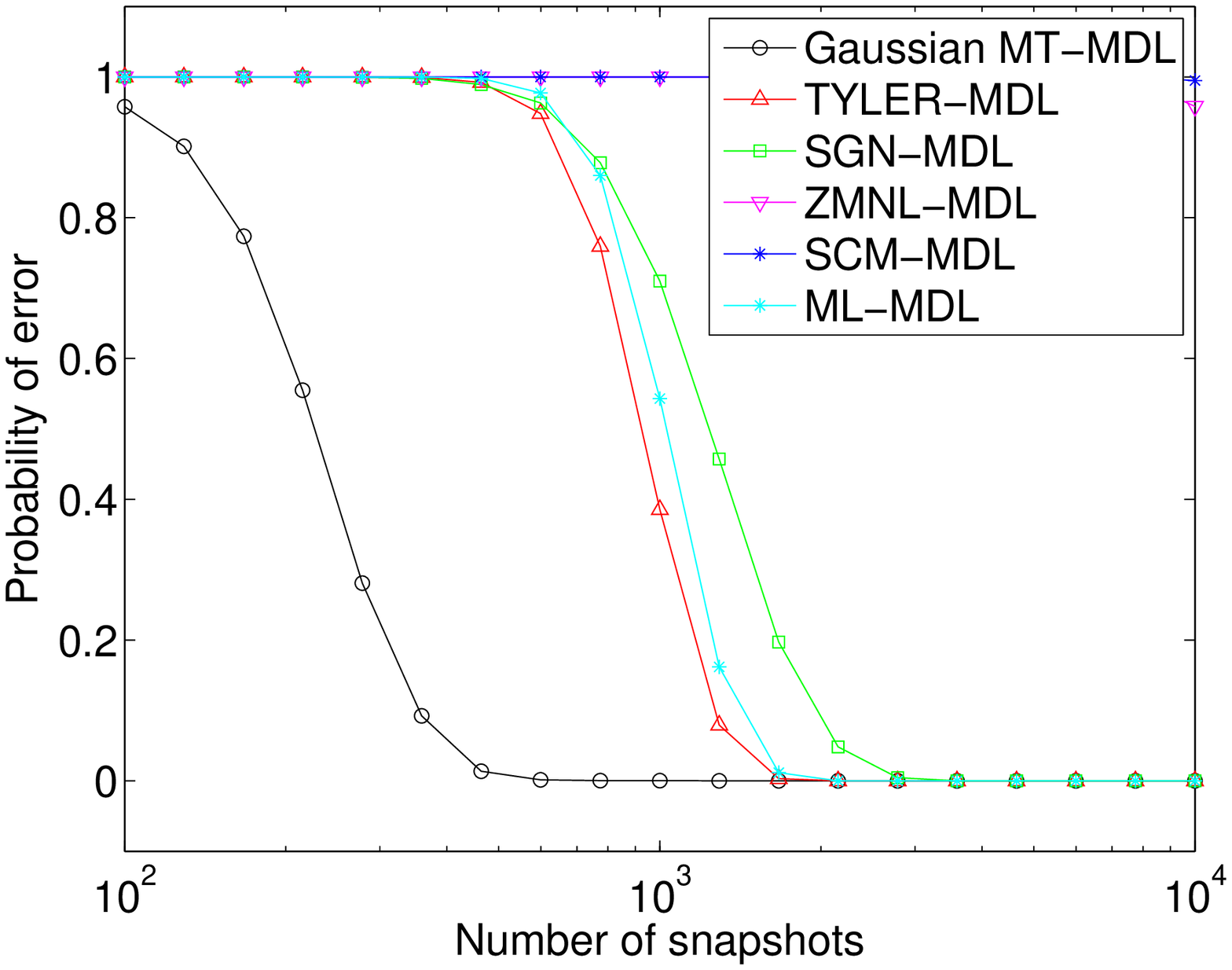}}}}
      \end{center}  
 \caption{{\textbf{Non-coherent signals in K-distributed noise with shape parameter $\nu=0.75$:}}
   (a) Average RMSE versus GSNR. (b) Probability of error for estimating the number of signals versus GSNR. 
   (c) Average RMSE versus the number of snapshots.  
   (d) Probability of error for estimating the number of signals versus the number of snapshots. The performance measures versus GSNR were evaluated for $N=1000$ i.i.d. snapshots. The performance measures versus the number of     snapshots were evaluated for ${\rm{GSNR}}=-19$ [dB]. Notice that the Gaussian MT-MUSIC outperforms all other compared algorithms in the low GSNR and low sample size regimes.  Also notice that the Gaussian MT-MDL criterion leads to significantly lower error rates for estimating the number of signals.}
\label{K_NON_COHERENT}
\end{figure}
\begin{figure}[htp]
  \begin{center}
    {{\subfigure[]{\includegraphics[scale = 0.235]{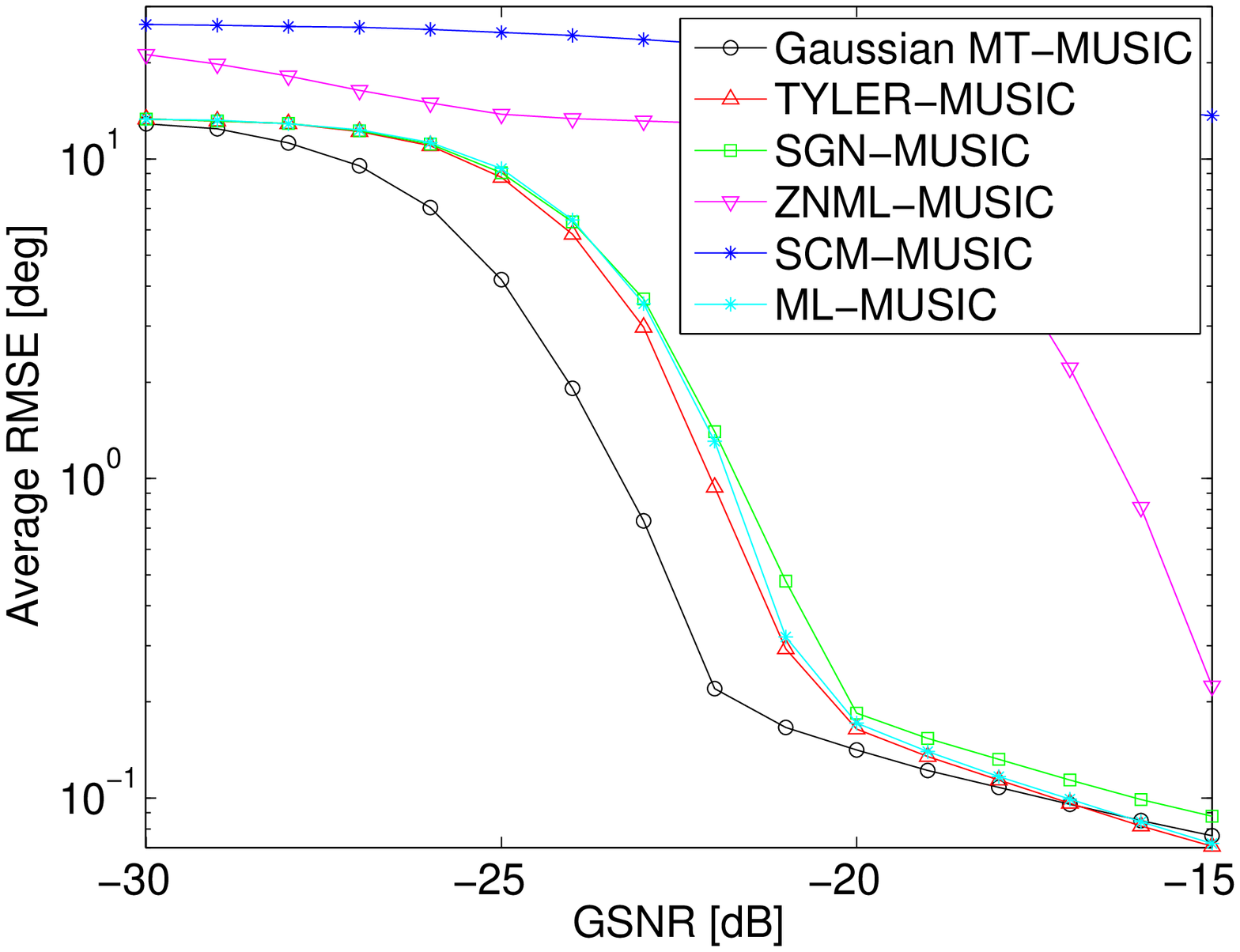}}}}
    {{\subfigure[]{\includegraphics[scale = 0.235]{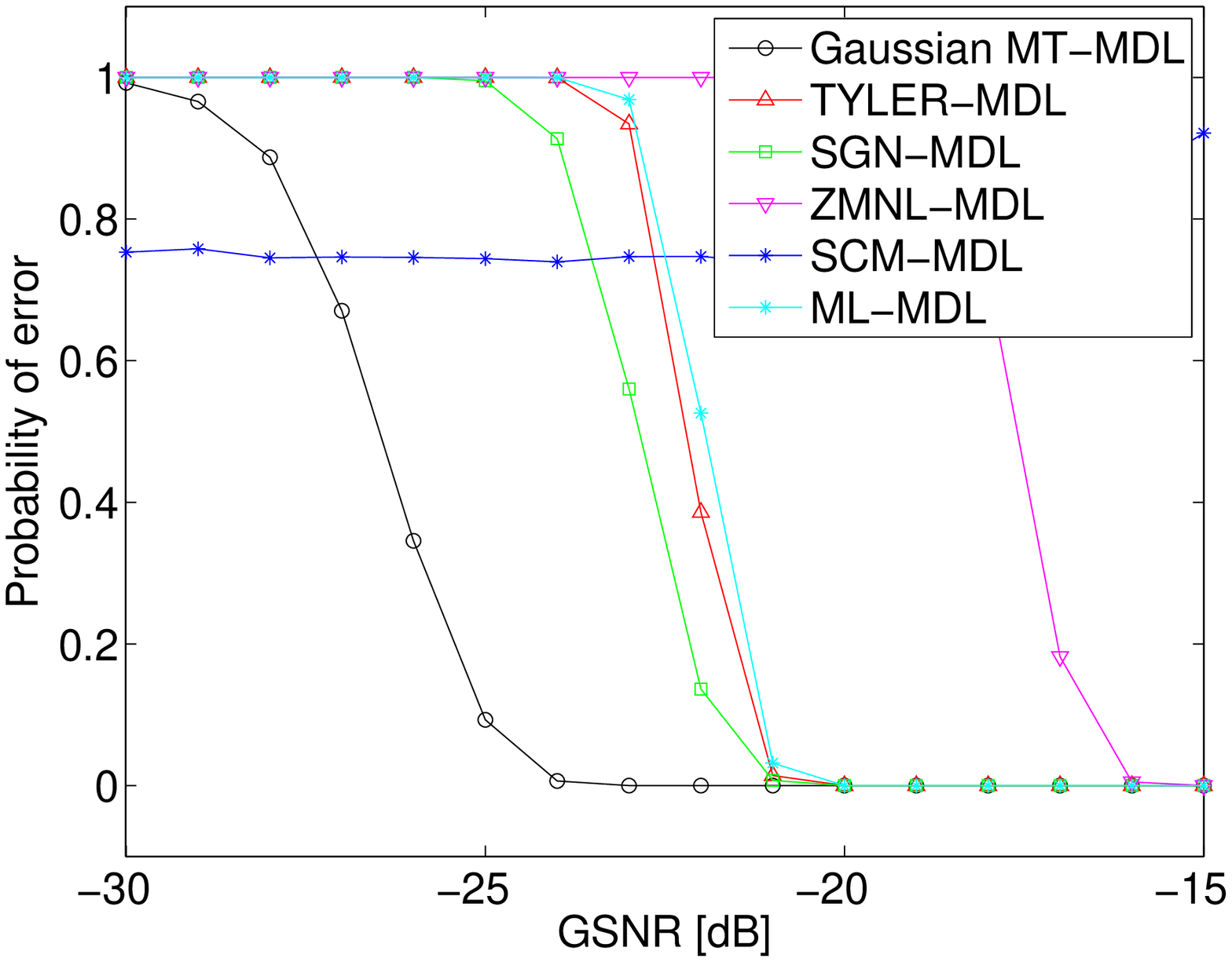}}}}
    {{\subfigure[]{\includegraphics[scale = 0.235]{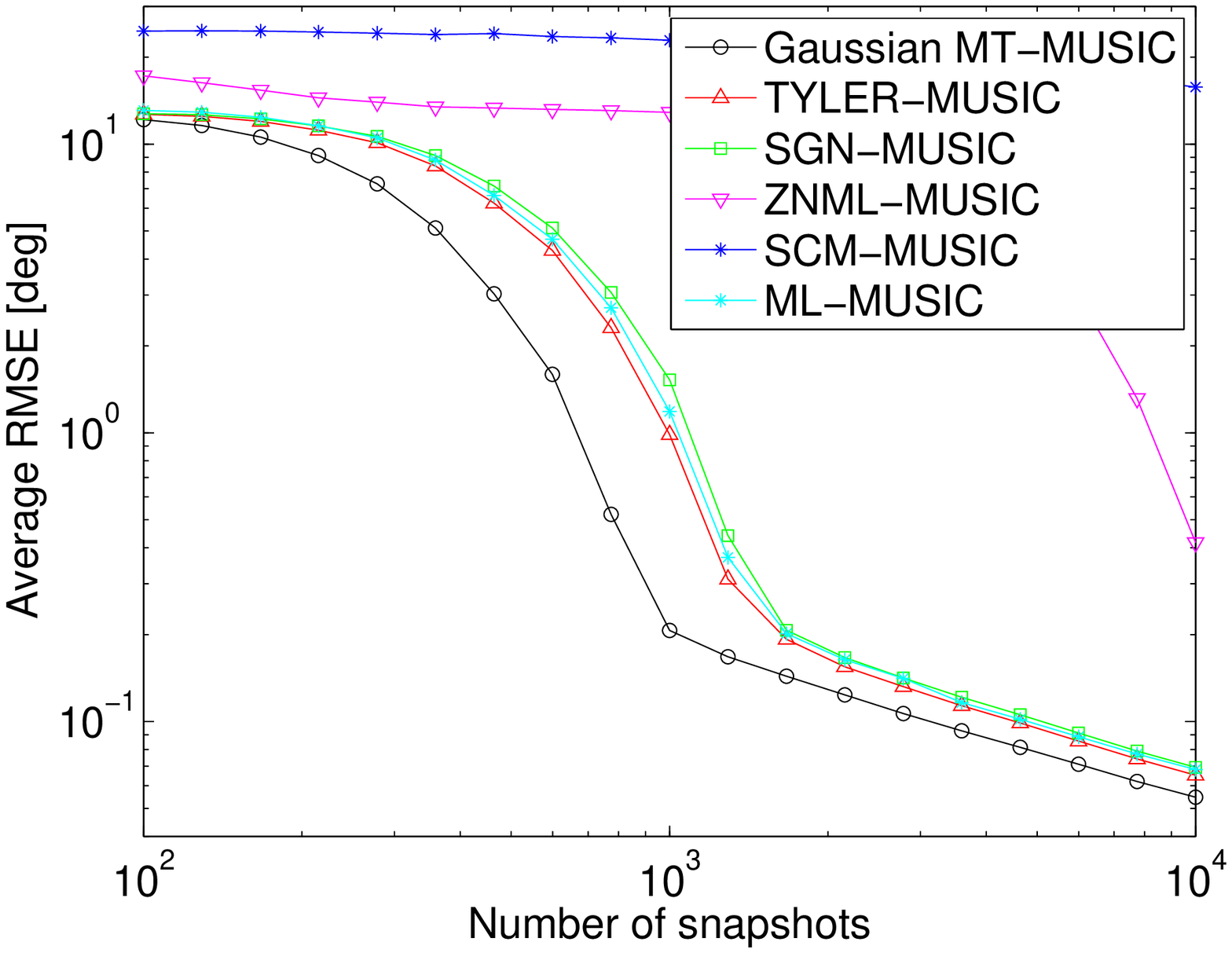}}}}
    {{\subfigure[]{\includegraphics[scale = 0.235]{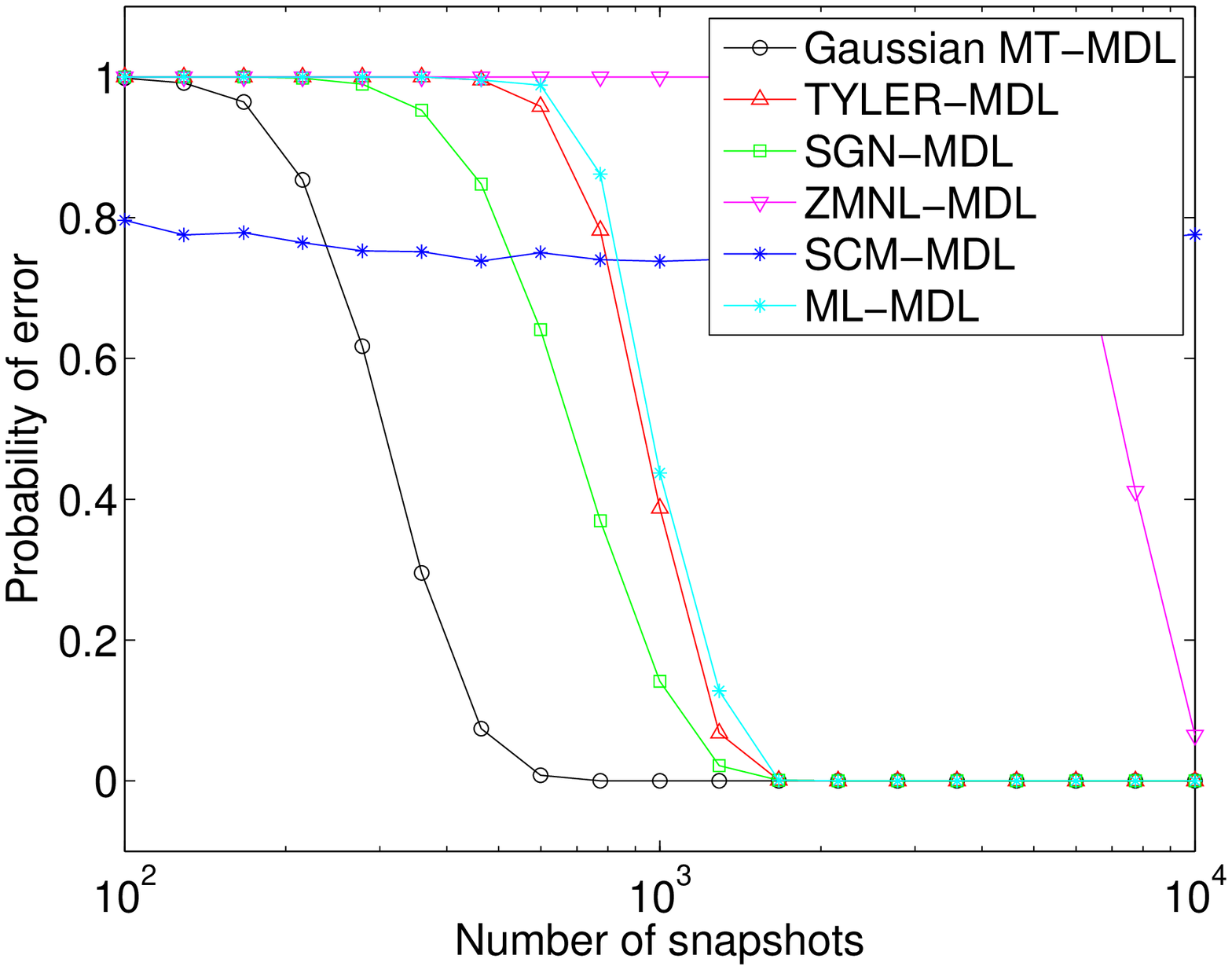}}}}
      \end{center}  
 \caption{\textbf{Non-coherent signals in spherical compound Gaussian noise with inverse-Gaussian texture and shape parameter $\lambda=0.1$:}
   (a) Average RMSE versus GSNR. (b) Probability of error for estimating the number of signals versus GSNR. (c) Average RMSE versus the number of snapshots.  
   (d) Probability of error for estimating the number of signals versus the number of snapshots. The performance measures versus GSNR were evaluated for $N=1000$ i.i.d. snapshots. The performance measures versus the number of     snapshots were evaluated for ${\rm{GSNR}}=-22$ [dB]. Notice that the Gaussian MT-MUSIC outperforms all other compared algorithms in the low GSNR and low sample size regimes.  Also notice that the Gaussian MT-MDL criterion leads to significantly lower error rates for estimating the number of signals.}
\label{IG_NON_COHERENT}  
\end{figure}
\subsection{Coherent signals}
In this example, we considered five coherent signals impinging on a $22$-element uniform linear array with $\lambda/2$ spacing from DOAs $\theta_{1}=-17^{\circ}$, $\theta_{2}=-3^{\circ}$, $\theta_{3}=2^{\circ}$, $\theta_{4}=13^{\circ}$ and $\theta_{5}=20^{\circ}$. The signals were generated according to the model (\ref{CoherentModel}), where $s\left(n\right)$ is a 4-QAM signal with power $\sigma^{2}_{S}$. The attenuation coefficients were set to $\eta_{1}=0.8\exp(i\pi/3)$, $\eta_{2}=1$, $\eta_{3}=0.9\exp(i\pi/4)$, $\eta_{4}=0.7\exp(i\pi/5)$ and $\eta_{5}=0.6\exp(i\pi/6)$. The dimension of the spatially smoothed covariance was set to $r=16$. The average RMSEs for estimating the DOAs and the error rates for estimating the number of signals versus GSNR are depicted in Figs. \ref{GAUSS_COHERENT}-\ref{IG_COHERENT} for each noise distribution. 
Notice that for the Gaussian noise case, there is no significant performance gap between the compared MUSIC algorithms. Regarding the estimation of the number of signals, the sign-covariance based modified MDL criterion results in better estimation performance. This may be attributed to the fact that in the sign-covariance based modified MDL criterion \cite{Visuri} the eigenvalues are estimated in a more stable manner. For the other noise distributions, the spatially smoothed Gaussian MT-MUSIC and the Gaussian modified MT-MDL based estimation of the number of signals outperform all other robust MUSIC and MDL generalizations in the low GSNR and low sample size regimes, with significantly lower breakdown thresholds. Again, as in the non-coherent case, this performance advantage may be attributed to the following facts. First, unlike the empirical sign-covariance, Tyler's scatter M-estimator and the ML-estimators of scatter corresponding to each non-Gaussian noise distribution, the influence function of the empirical Gaussian MT-covariance is very small for large norm outliers. Such outliers are likely in low GSNRs and become more frequent when the sample size is small. Second, unlike the ZMNL preprocessing based technique, the proposed measure-transformation approach preserves the noise subspace and effectively suppresses outliers without significant performance loss in estimating the DOAs and the number of signals.
\begin{figure}[htp]
  \begin{center}
    {{\subfigure[]{\includegraphics[scale = 0.235]{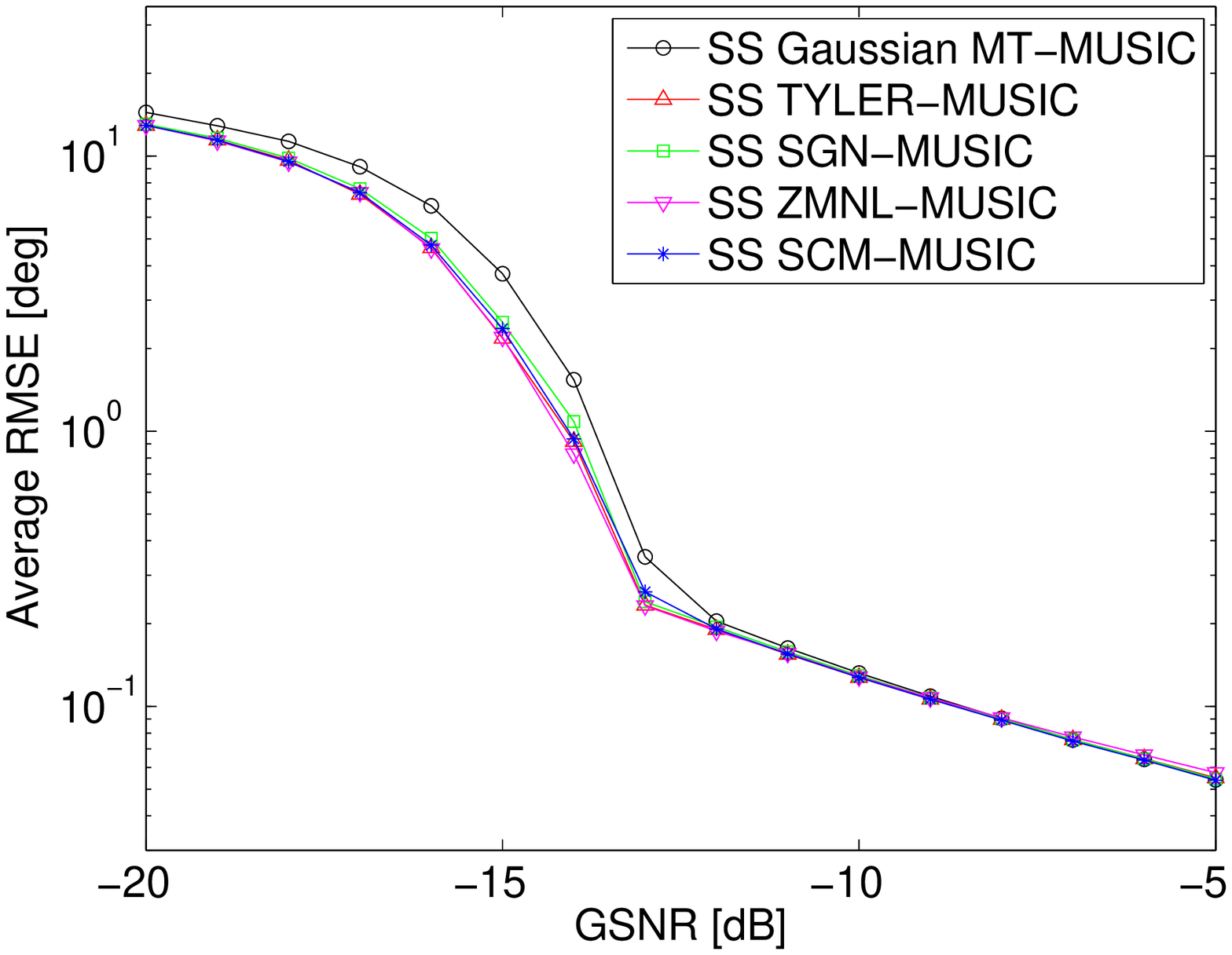}}}}
    {{\subfigure[]{\includegraphics[scale = 0.235]{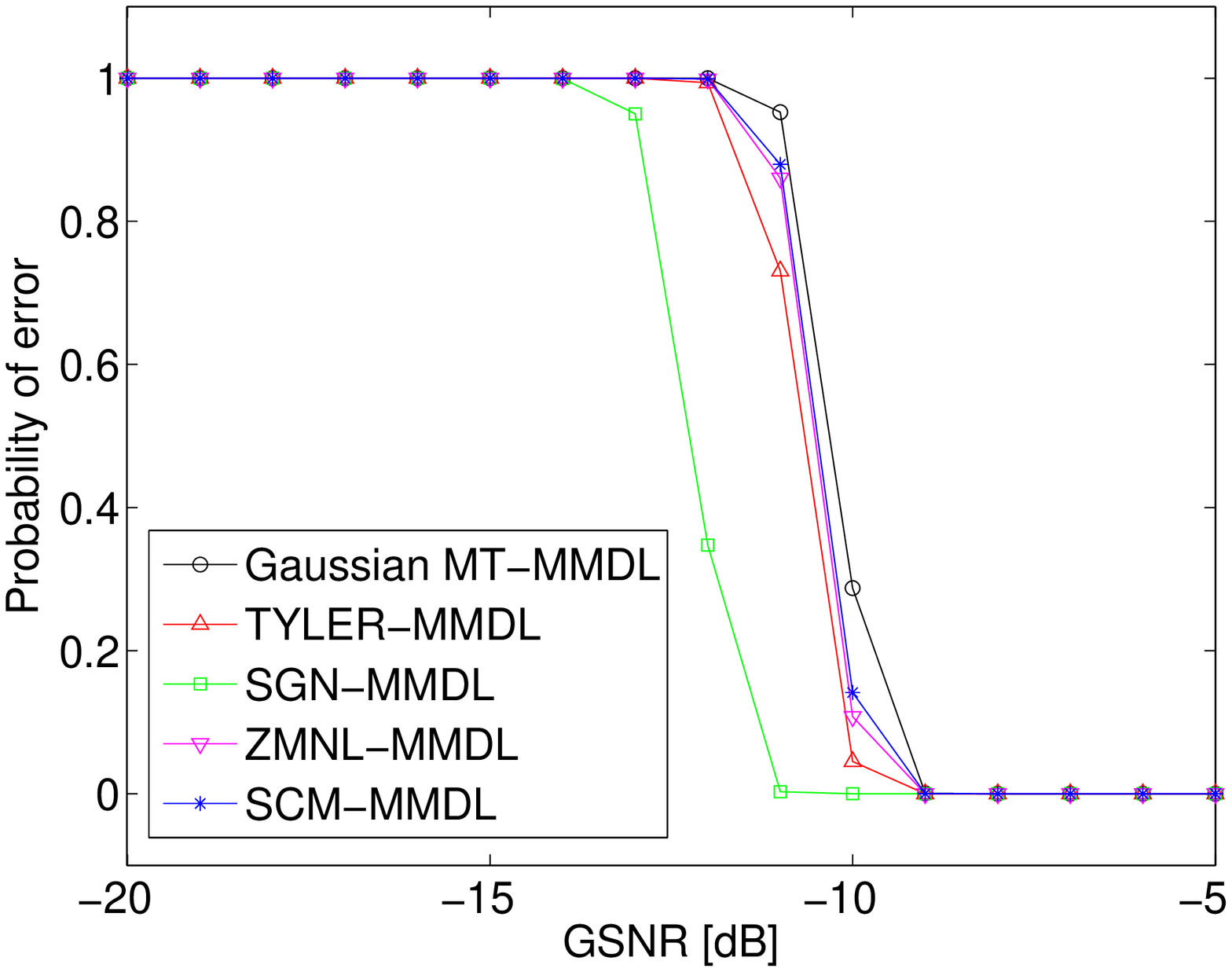}}}}
    {{\subfigure[]{\includegraphics[scale = 0.235]{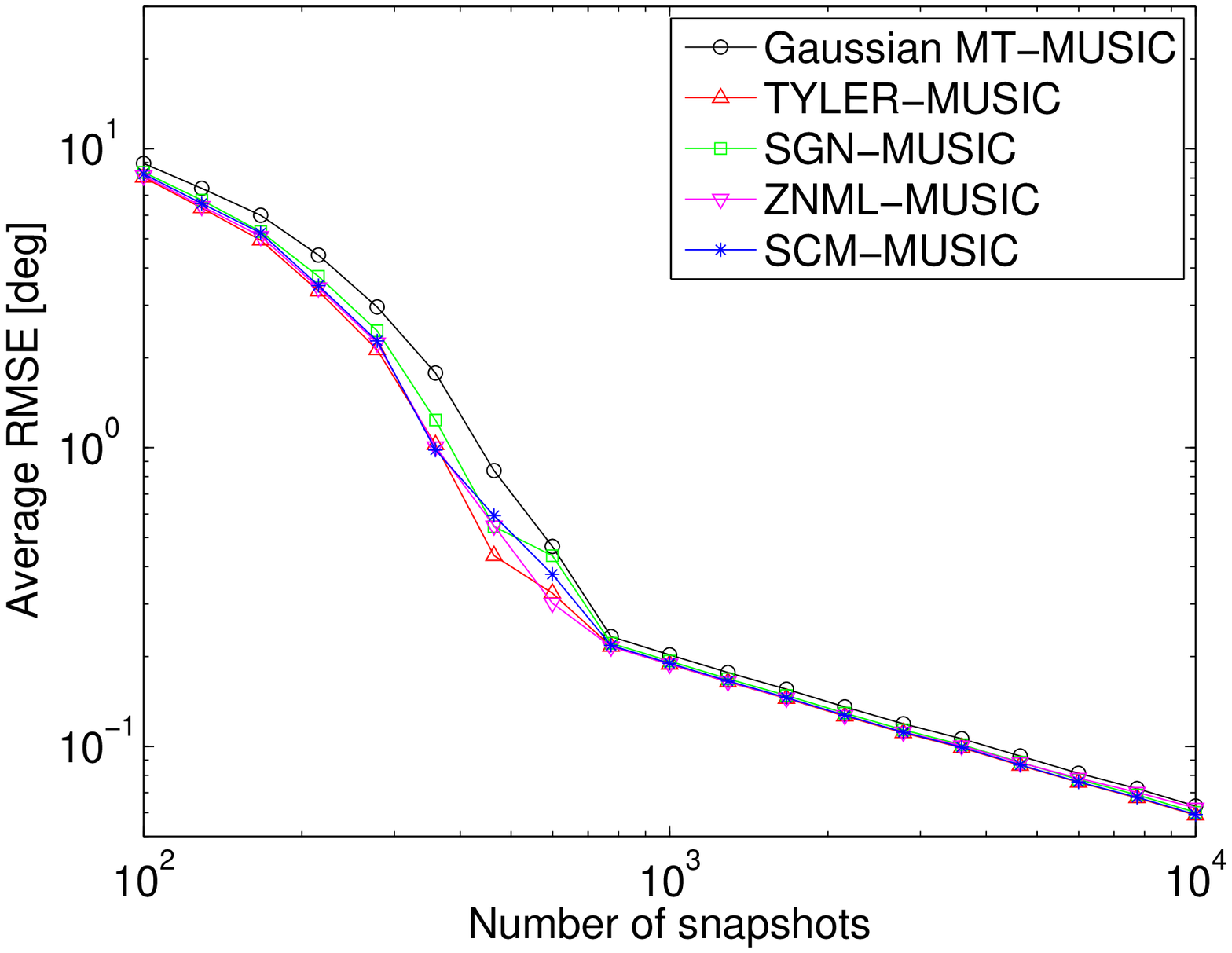}}}}
    {{\subfigure[]{\includegraphics[scale = 0.235]{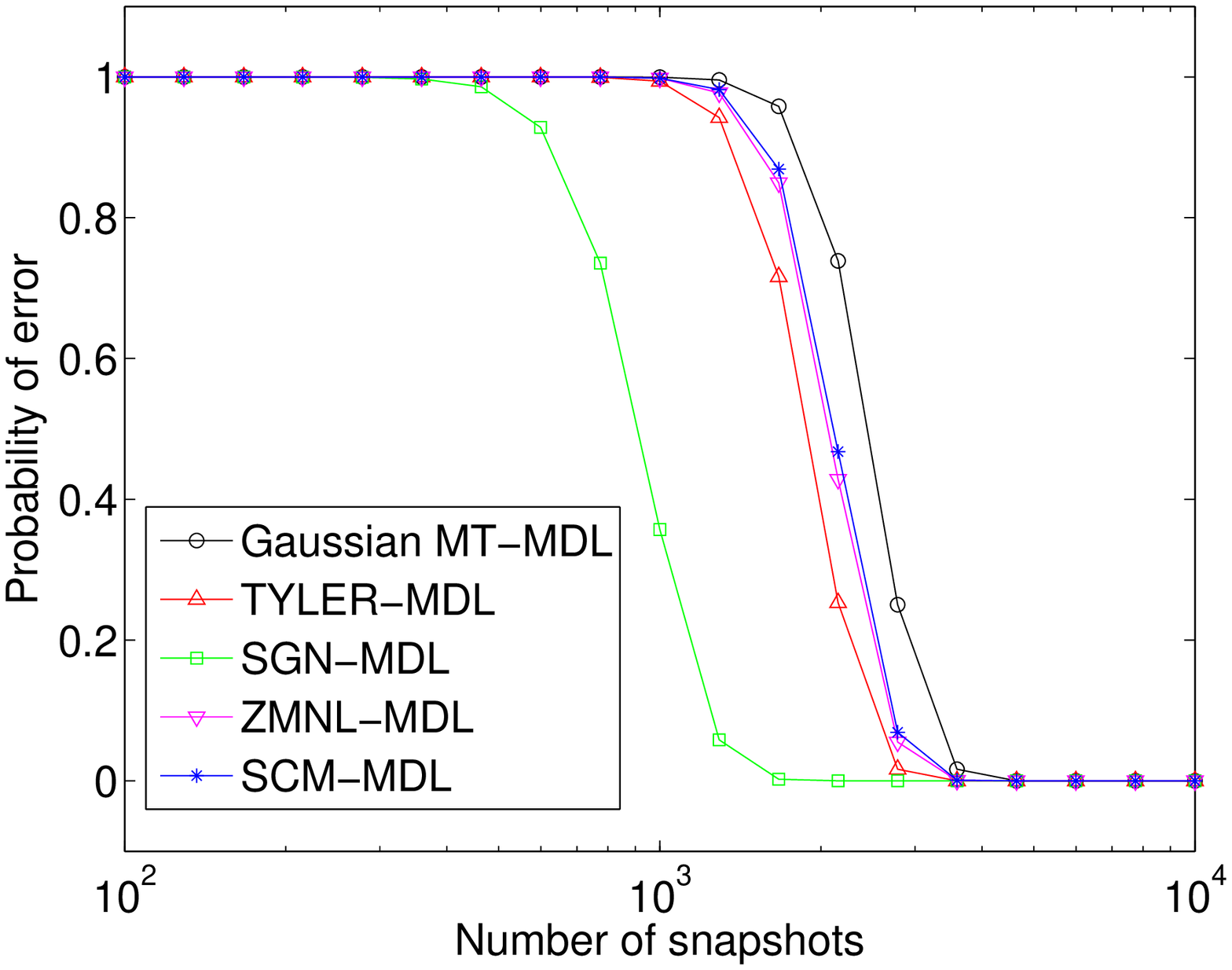}}}}
      \end{center}  
  \caption{\textbf{Coherent signals in Gaussian noise:}
   (a) Average RMSE versus GSNR. (b) Probability of error for estimating the number of signals versus GSNR. 
   (c) Average RMSE versus the number of snapshots.  
   (d) Probability of error for estimating the number of signals versus the number of snapshots. 
   The performance measures versus GSNR were evaluated for $N=1000$ i.i.d. snapshots. The performance measures versus the number of snapshots were evaluated for ${\rm{GSNR}}=-12$ [dB].
    Notice that there is no significant performance gap between the compared MUSIC algorithms. The sign-covariance based modified MDL criterion results in better estimation of the number of signals. This may be attributed to the fact that the sign-covariance based modified MDL criterion \cite{Visuri} involves more stable eigenvalues estimation.}
\label{GAUSS_COHERENT}
\end{figure}
\begin{figure}[htp]
  \begin{center}
    {{\subfigure[]{\includegraphics[scale = 0.235]{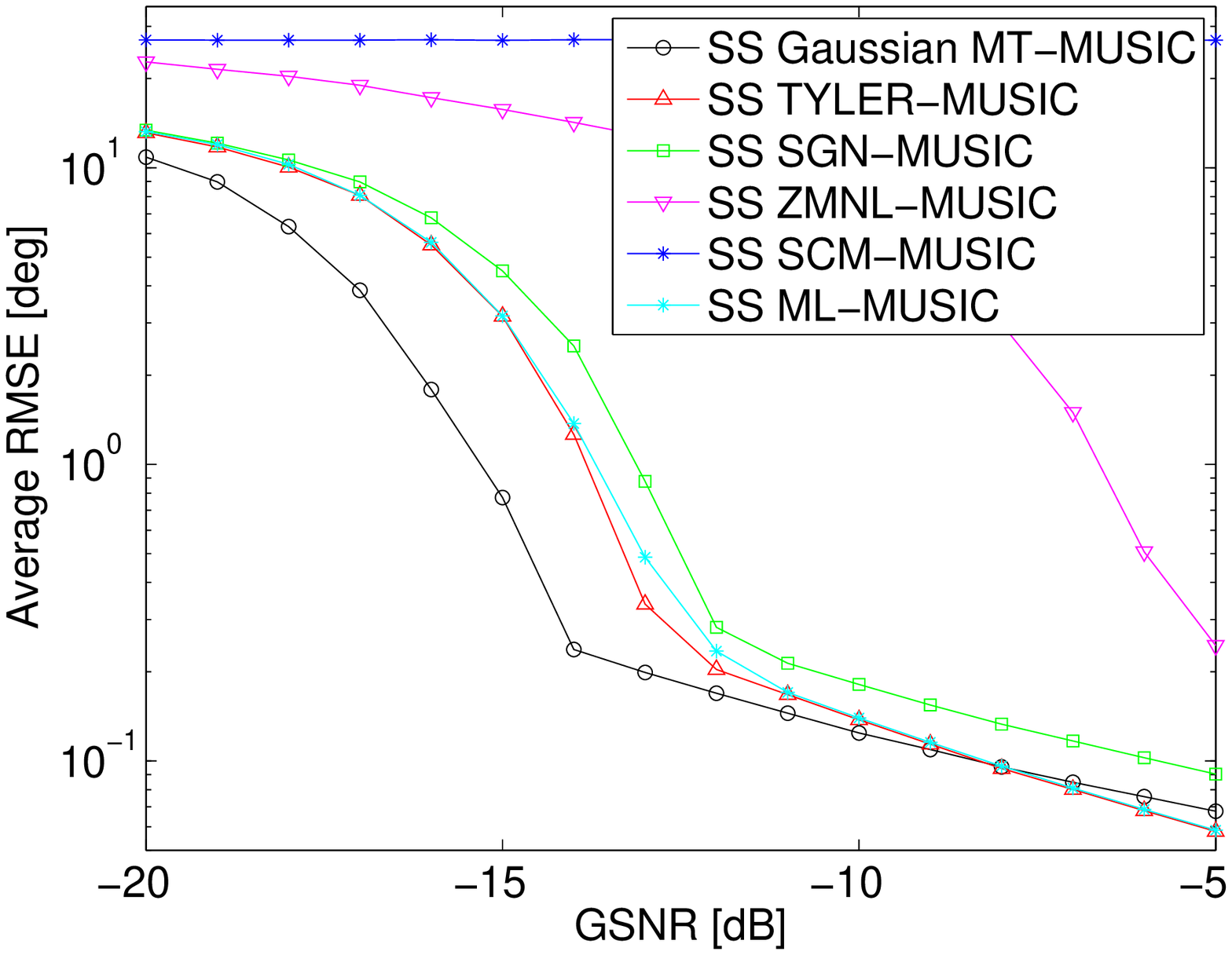}}}}
    {{\subfigure[]{\includegraphics[scale = 0.235]{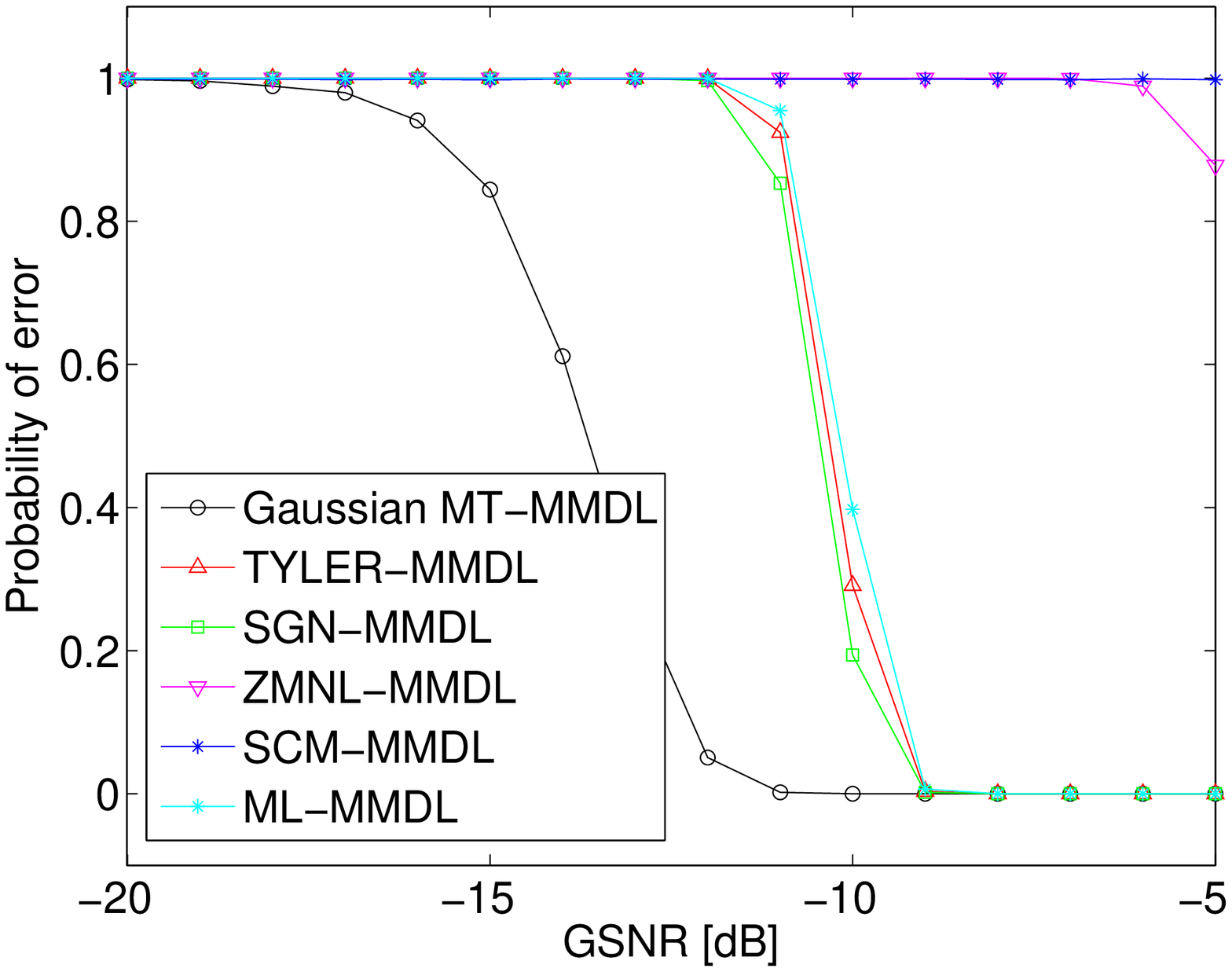}}}}
    {{\subfigure[]{\includegraphics[scale = 0.235]{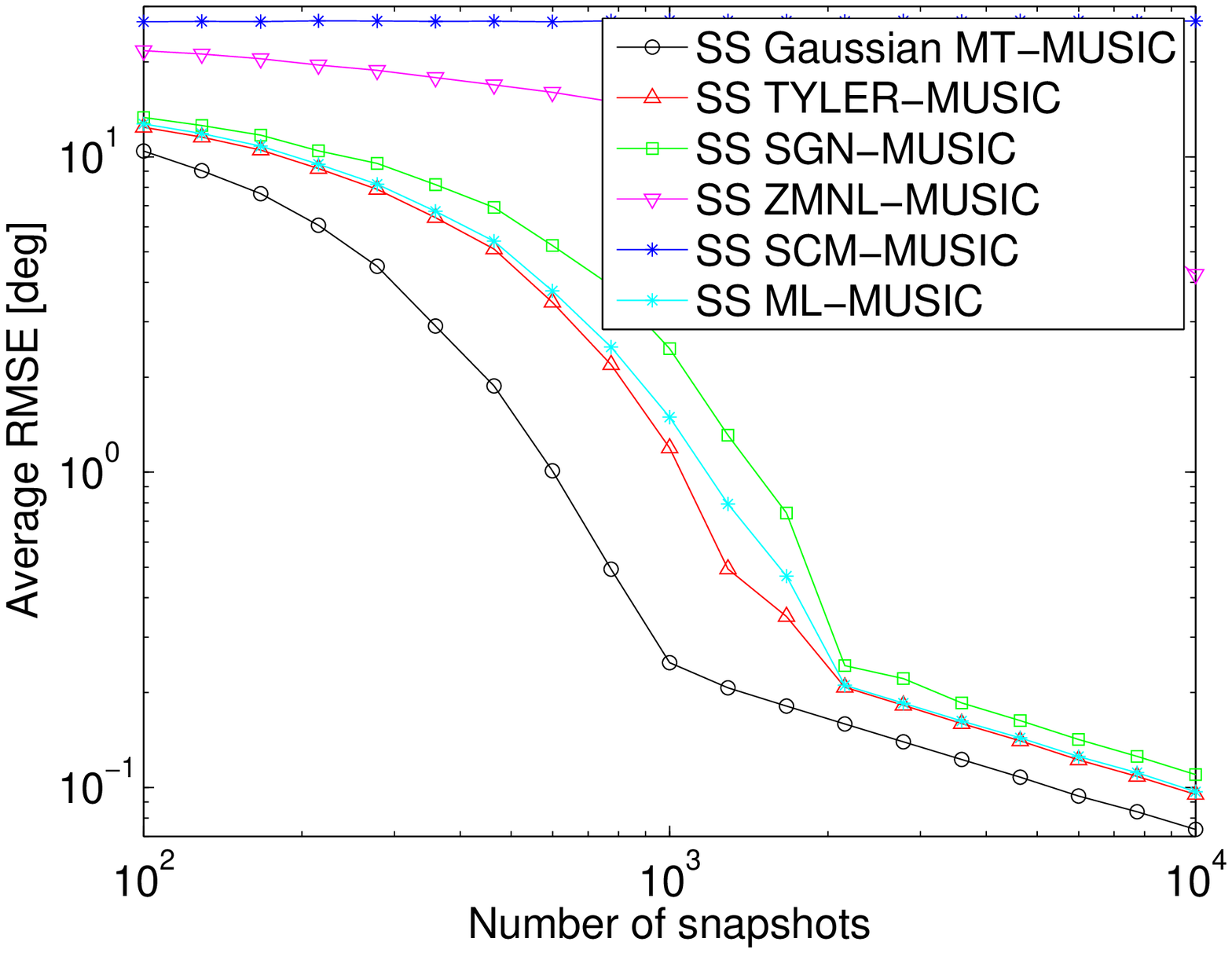}}}}
    {{\subfigure[]{\includegraphics[scale = 0.235]{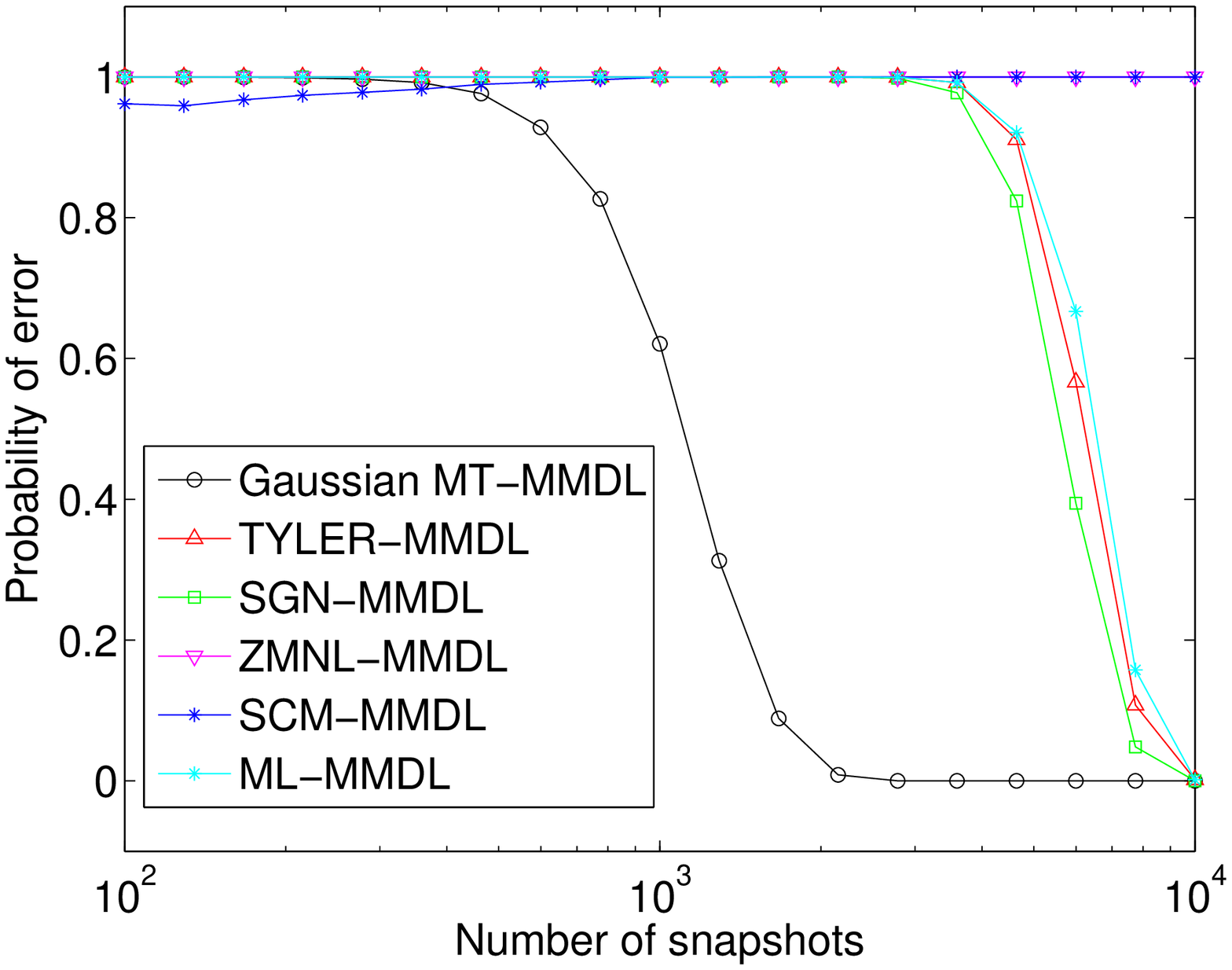}}}}
      \end{center}  
  \caption{\textbf{Coherent signals in Cauchy noise:}
   (a) Average RMSE versus GSNR. (b) Probability of error for estimating the number of signals versus GSNR. 
   (c) Average RMSE versus the number of snapshots.  
   (d) Probability of error for estimating the number of signals versus the number of snapshots. The performance measures versus GSNR were evaluated for $N=1000$ i.i.d. snapshots. The performance measures versus the number of     snapshots were evaluated for ${\rm{GSNR}}=-14$ [dB]. Note that similarly to the non-coherent case, the Gaussian MT-MUSIC estimator has significantly lower GSNR and sample size threshold regions than the other methods.  
  Also notice that the Gaussian MT-MMDL estimator of the number of signals outperforms all other MDL based estimators with significantly lower probability of error at the low GSNR and low sample size regimes.}
\label{CAUCHY_COHERENT}
\end{figure}
\begin{figure}[htp]
  \begin{center}
    {{\subfigure[]{\includegraphics[scale = 0.235]{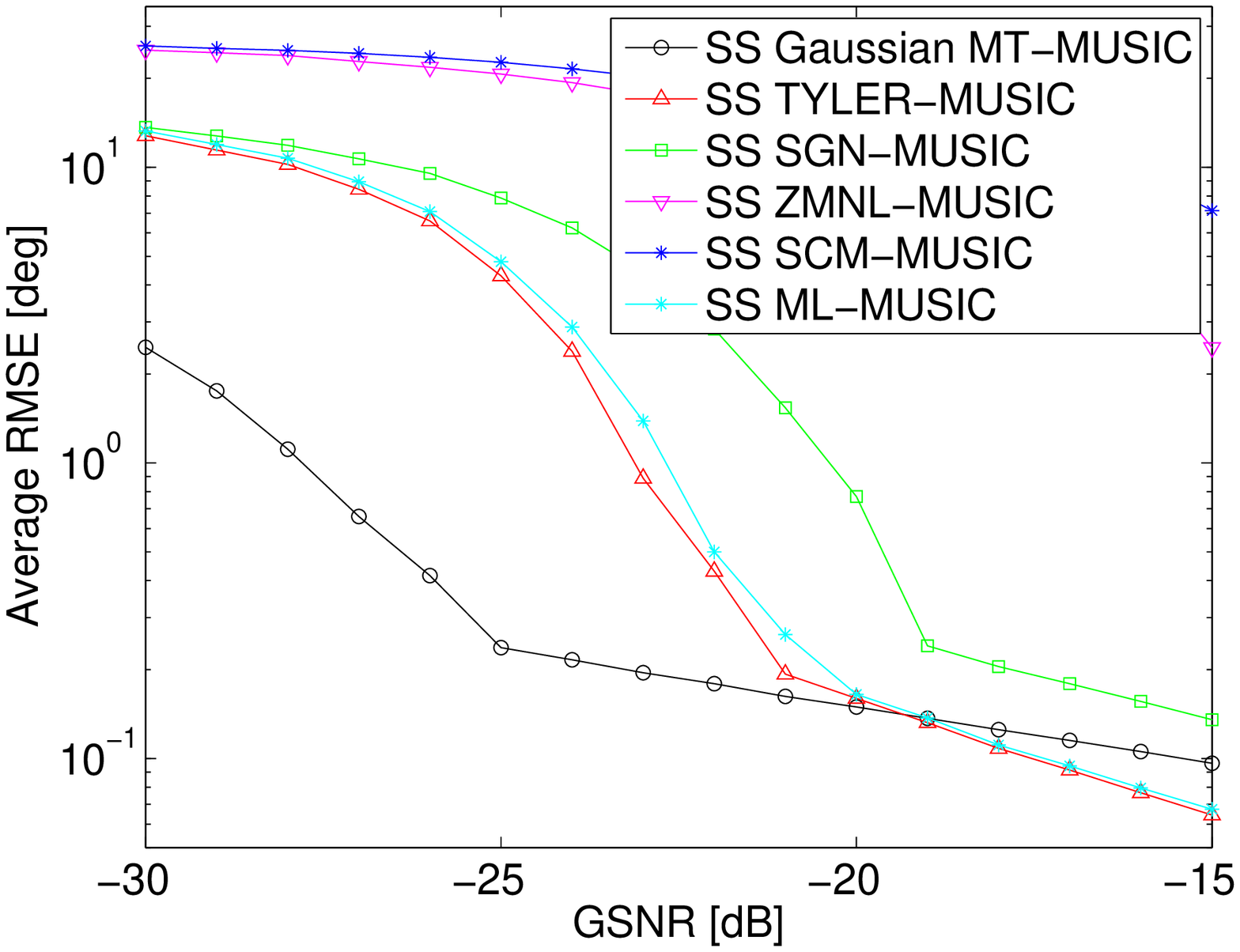}}}}
    {{\subfigure[]{\includegraphics[scale = 0.235]{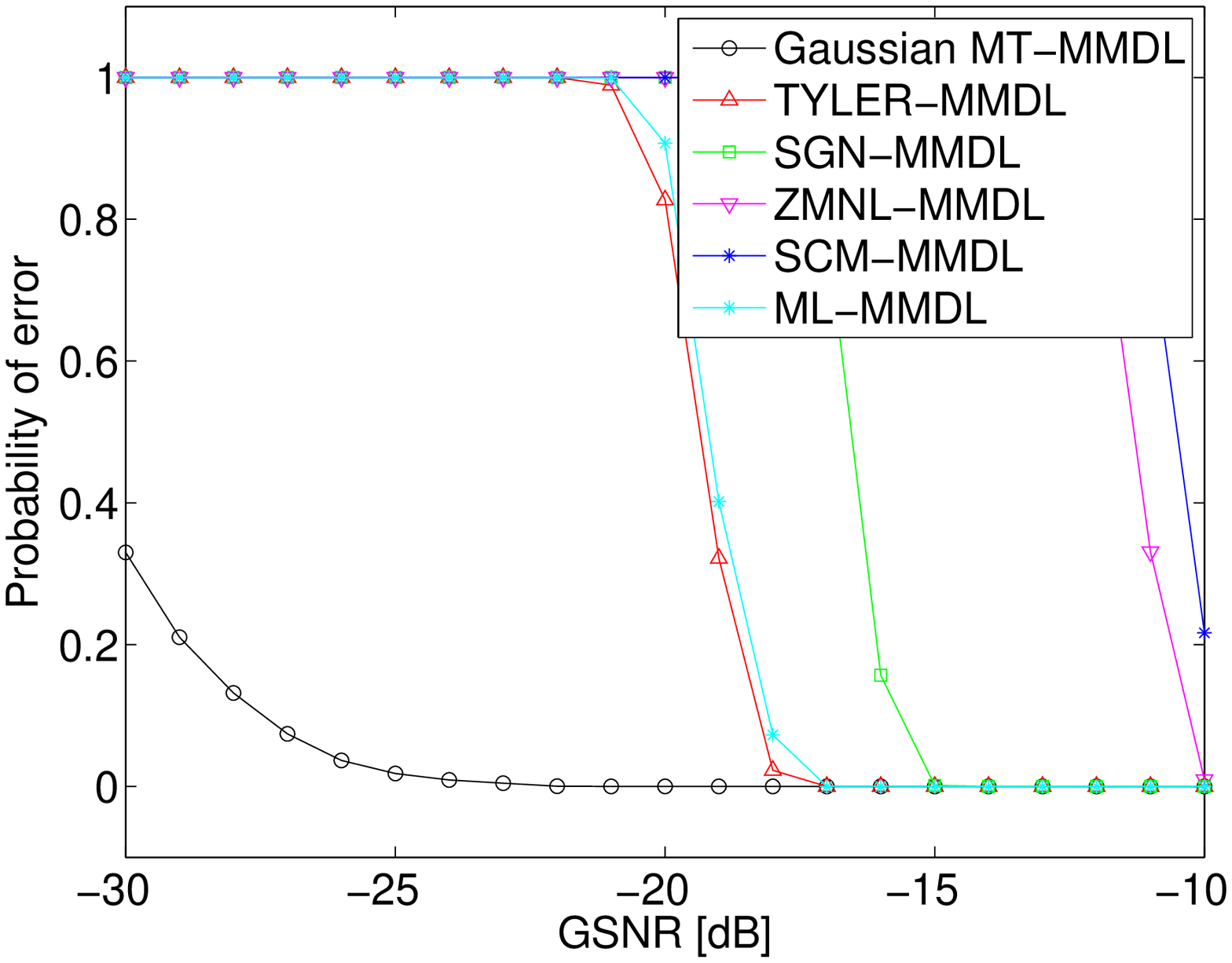}}}}
    {{\subfigure[]{\includegraphics[scale = 0.235]{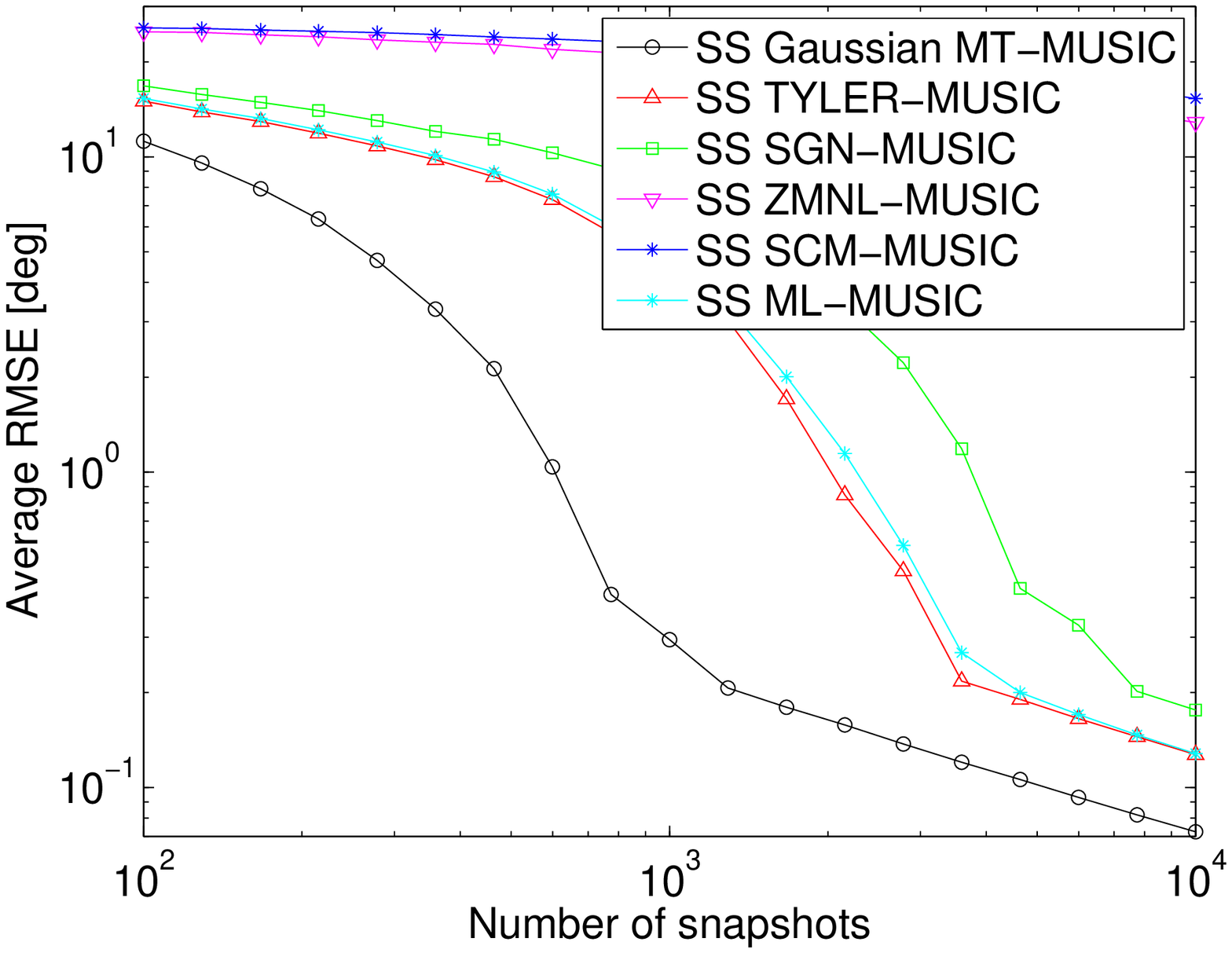}}}}
    {{\subfigure[]{\includegraphics[scale = 0.235]{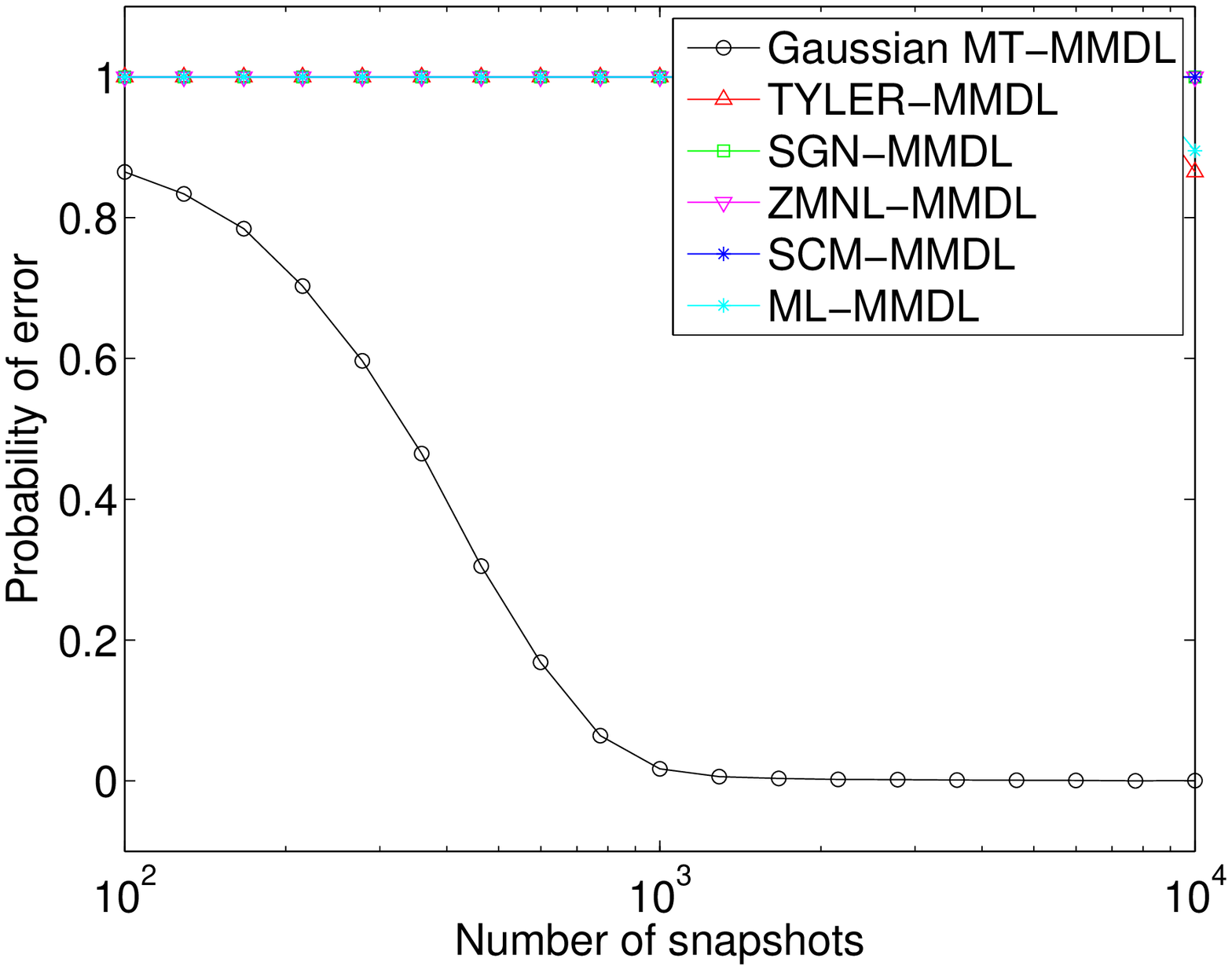}}}}
      \end{center}  
  \caption{\textbf{Coherent signals in K-distributed noise with shape parameter $\nu=0.75$:}
   (a) Average RMSE versus GSNR. (b) Probability of error for estimating the number of signals versus GSNR. 
   (c) Average RMSE versus the number of snapshots.  
   (d) Probability of error for estimating the number of signals versus the number of snapshots. The performance measures versus GSNR were evaluated for $N=1000$ i.i.d. snapshots. The performance measures versus the number of     snapshots were evaluated for ${\rm{GSNR}}=-25$ [dB]. Note that similarly to the non-coherent case, the Gaussian MT-MUSIC estimator has significantly lower GSNR and sample size threshold regions than the other methods.  
  Also notice that the Gaussian MT-MMDL estimator of the number of signals outperforms all other MDL based estimators.}
\label{K_COHERENT}
\end{figure}
\begin{figure}[htp]
  \begin{center}
    {{\subfigure[]{\includegraphics[scale = 0.235]{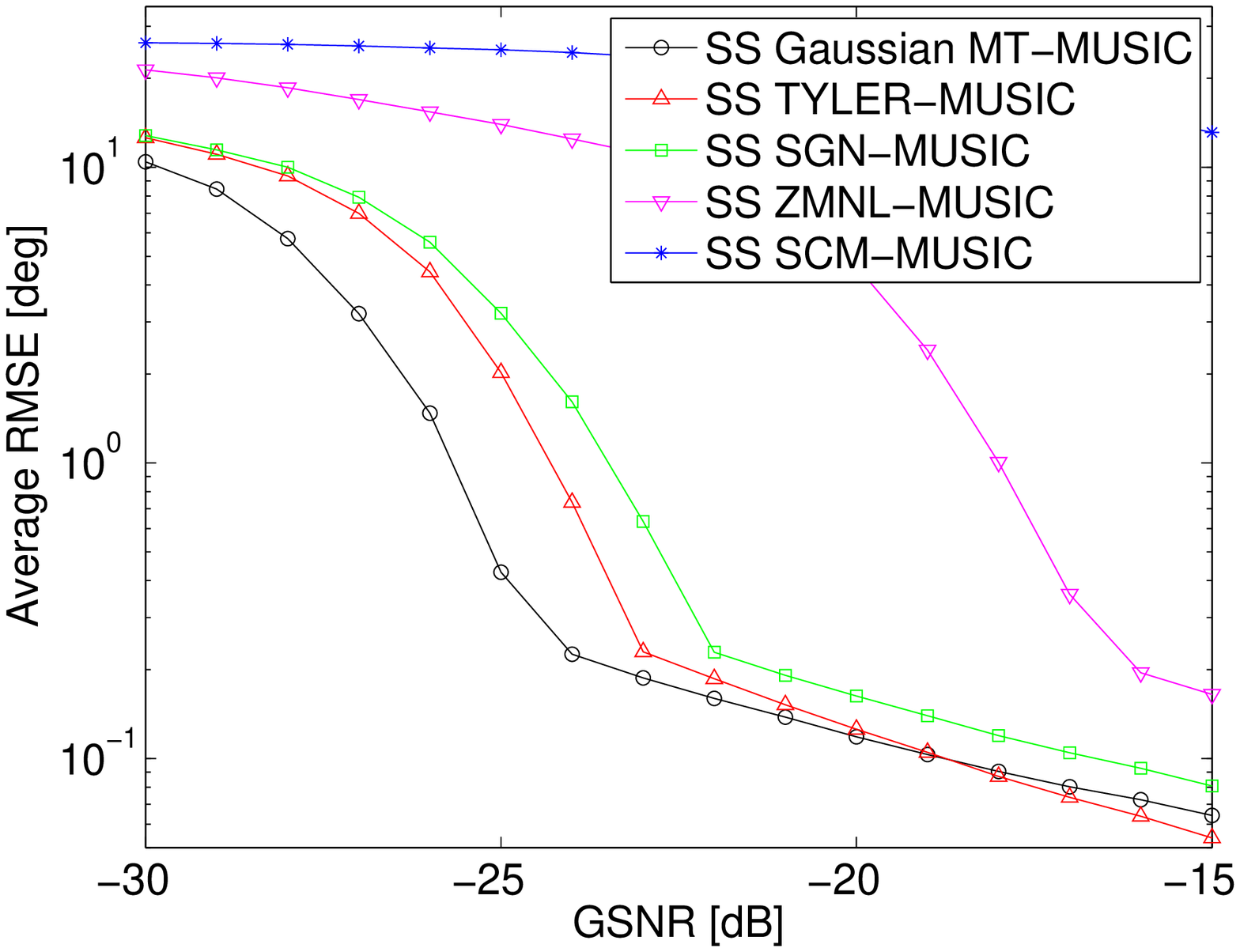}}}}
    {{\subfigure[]{\includegraphics[scale = 0.235]{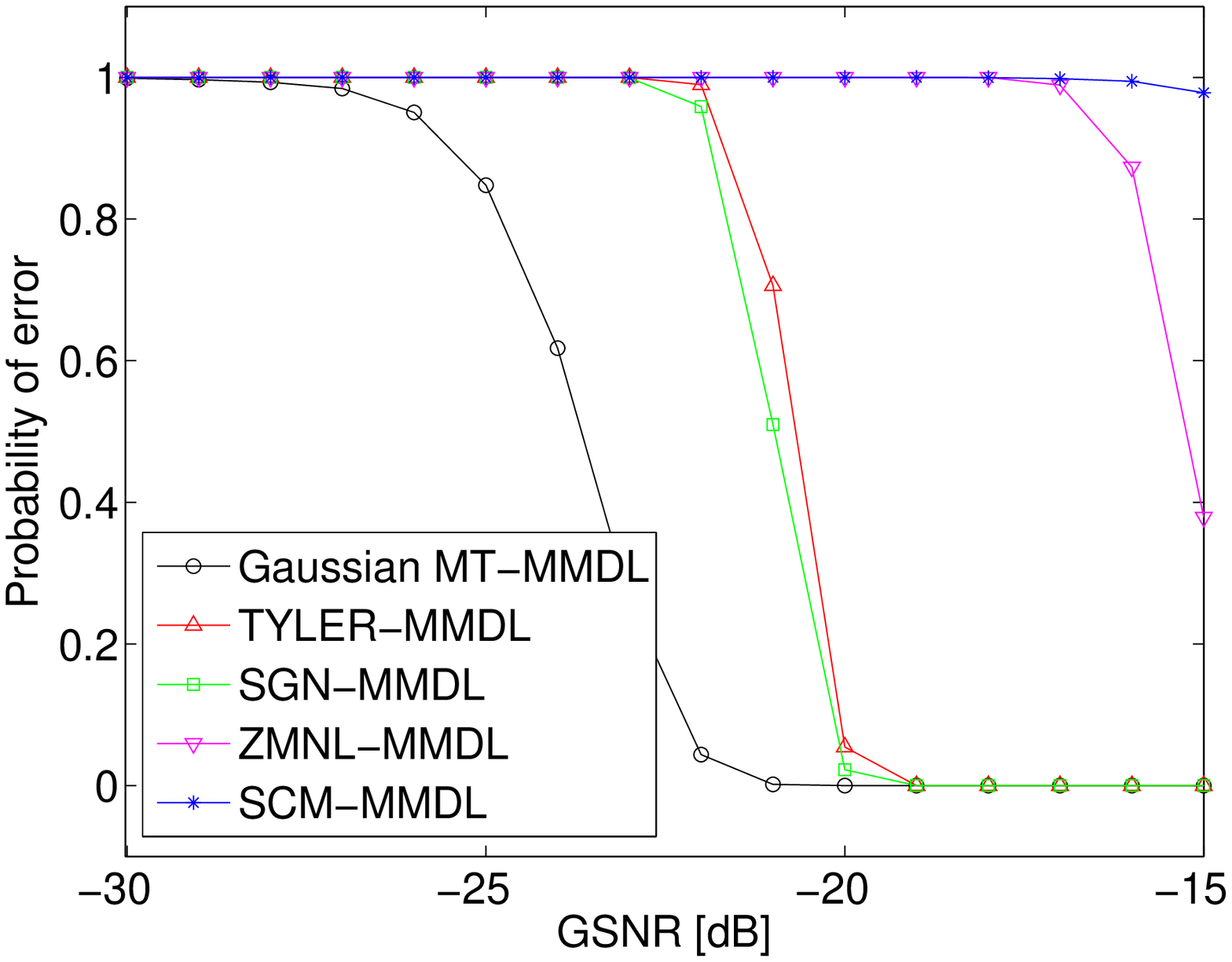}}}}
    {{\subfigure[]{\includegraphics[scale = 0.235]{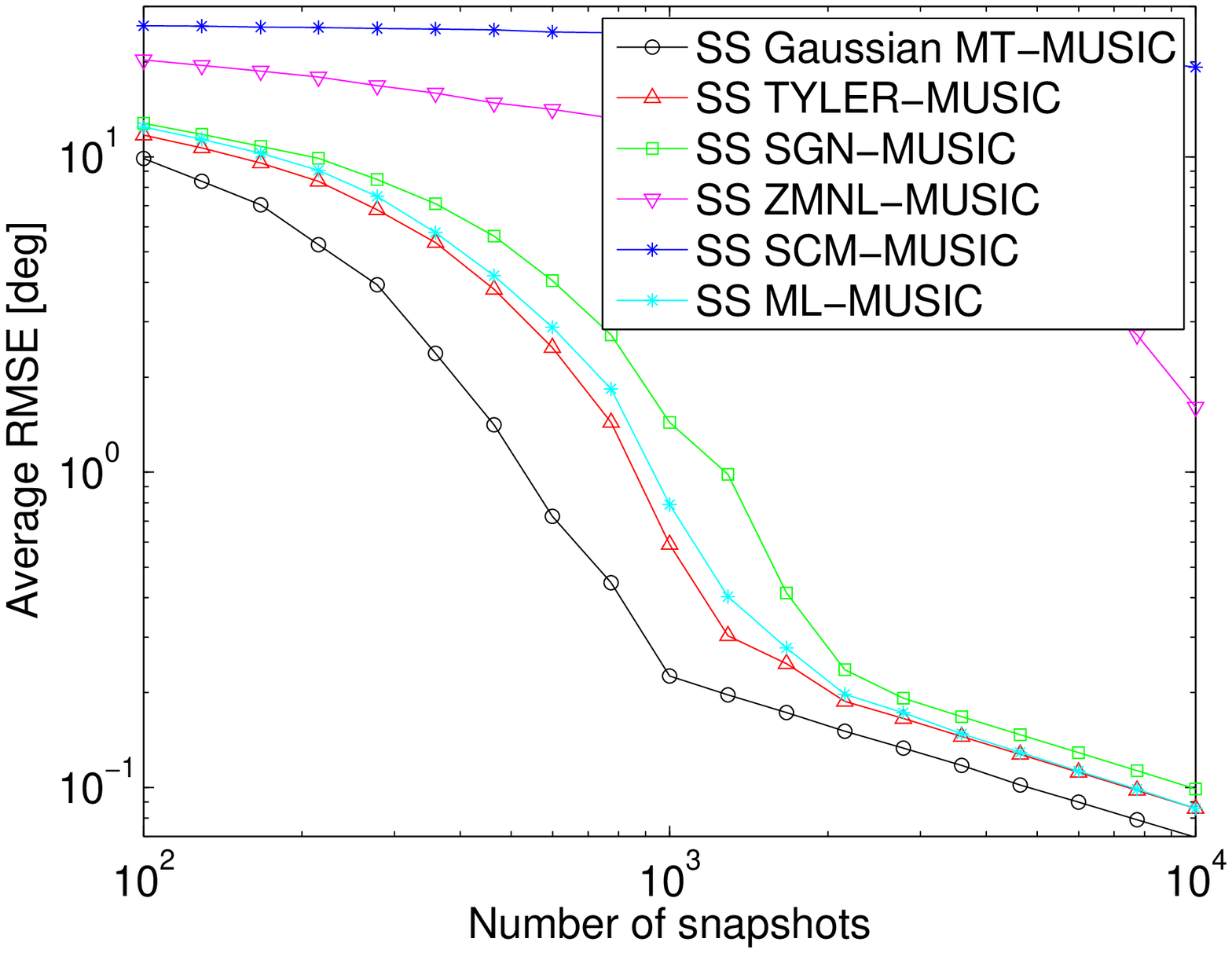}}}}
    {{\subfigure[]{\includegraphics[scale = 0.235]{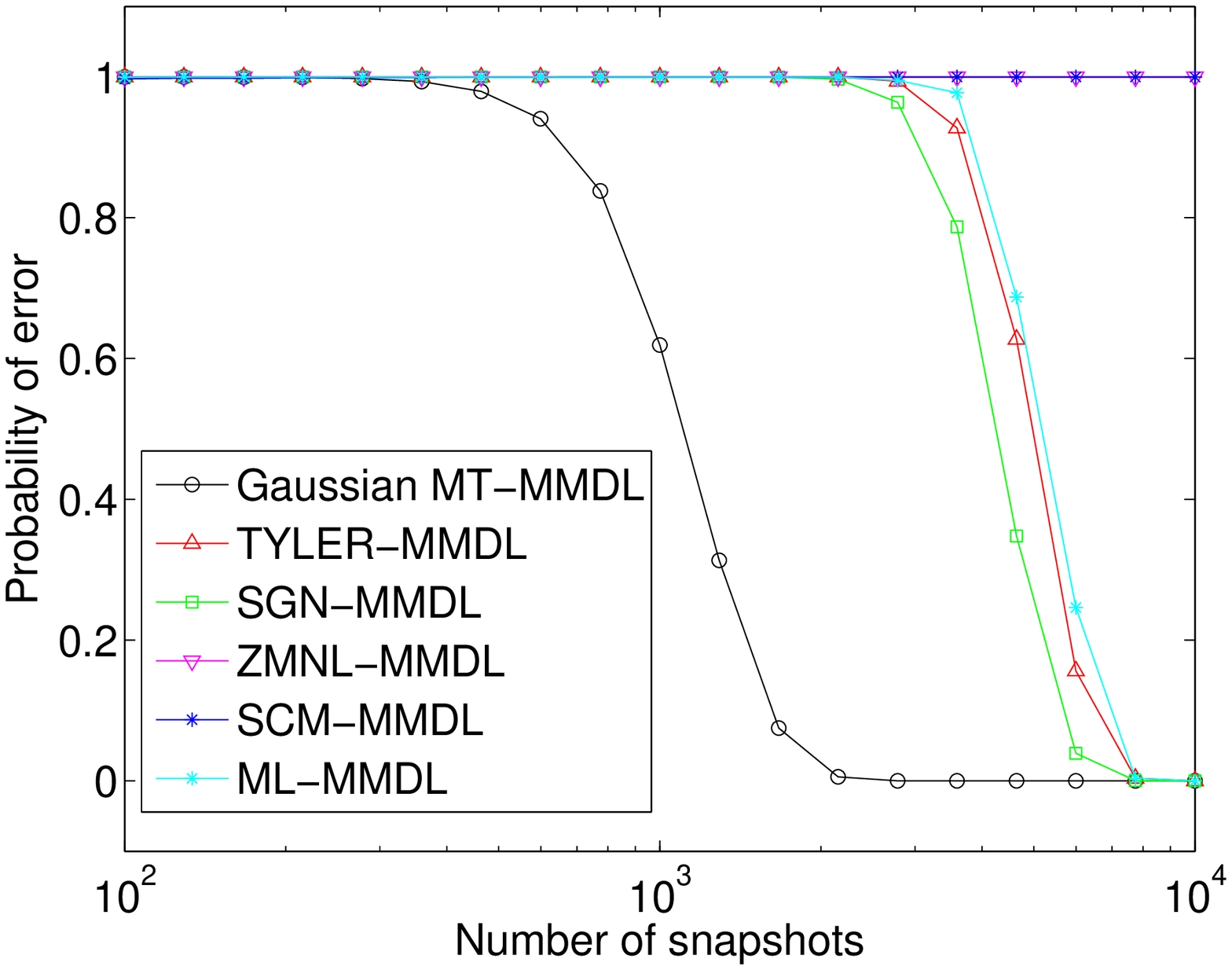}}}}
      \end{center}  
  \caption{\textbf{Coherent signals in spherical compound Gaussian noise with inverse-Gaussian texture and shape parameter $\lambda=0.1$:}
   (a) Average RMSE versus GSNR. (b) Probability of error for estimating the number of signals versus GSNR. 
   (c) Average RMSE versus the number of snapshots.  
   (d) Probability of error for estimating the number of signals versus the number of snapshots. The performance measures versus GSNR were evaluated for $N=1000$ i.i.d. snapshots. The performance measures versus the number of     snapshots were evaluated for ${\rm{GSNR}}=-24$ [dB]. Note that similarly to the non-coherent case, the Gaussian MT-MUSIC estimator has significantly lower GSNR and sample size threshold regions than the other methods.  
  Also notice that the Gaussian MT-MMDL estimator of the number of signals outperforms all other MDL based estimators with significantly lower probability of error at the low GSNR and low sample size regimes.}
\label{IG_COHERENT}  
\end{figure}
\section{Conclusion}
\label{Conclusions}
In this paper, a new framework for robust MUSIC was proposed that applies a transform to the probability distribution of the data prior to forming the sample covariance. 
Under the assumption of spherically contoured noise distribution, a new robust MUSIC algorithm, called Gaussian MT-MUSIC, was presented based on a Gaussian shaped measure transform (MT) function. Furthermore, a new robust generalization of the MDL criterion for estimating the number of signals, called MT-MDL, was derived that is based on replacing the eigenvalues of the SCM with those of the empirical MT-covariance. The proposed Gaussian MT-MUSIC algorithm was extended to the case of coherent signals by applying spatial smoothing to the empirical Gaussian MT-covariance. Exploration of other classes of MT-functions may result in additional robust MUSIC algorithms that have different useful properties. Furthermore, extending the MT-MDL criterion to sample-starved scenarios \cite{R1} or to cases where there is additional information on the sample eigenvalues distribution \cite{R2} are worthwhile topics for future research.
\appendix
\subsection{Proof Proposition \ref{Prop1}:}
\label{Prop1Proof}
\flushleft\textbf{Property \ref{P1}:}\\
Since $\varphi_{u}\left(\xvec\right)$ is nonnegative, then by Corollary 2.3.6 in \cite{MeasureTheory} $\qx$ is a measure on $\mathcal{S}_{\XCalsc}$. Furthermore, $\qx\left(\XCal\right)=1$ so that $\qx$ is a probability measure on $\mathcal{S}_{\XCalsc}$.\\
\textbf{Property \ref{P2}:}\\
Follows from definitions 4.1.1 and 4.1.3 in \cite{MeasureTheory}.\\
\textbf{Property \ref{P3}:}\\
Equivalently, we show that if the covariance of $\gvec\left(\Xmat\right)$ under $\qx$ is singular, then it must be singular under $\px$. According to (\ref{MeasureTransformRadNik}), the covariance of $\gvec\left(\Xmat\right)$ under $\qx$ is given by:
\begin{equation}
\nonumber
\bSigma^{\left(u\right)}_{\gvec(\xvec)}\triangleq{\rm{E}}\left[(\gvec\left(\Xmat\right)-\muvec^{\left(u\right)}_{\gvec(\xvec)})(\gvec(\Xmat)-\muvec^{\left(u\right)}_{\gvec(\xvec)})^{H}\varphi_{u}\left(\Xmat\right);\px\right],
\end{equation}
where $\muvec^{\left(u\right)}_{\gvec\left(\xvec\right)}\triangleq{\rm{E}}\left[\gvec\left(\Xmat\right)\varphi_{u}\left(\Xmat\right);\px\right]$ is the expectation of $\gvec\left(\Xmat\right)$ under $\qx$. Since $\bSigma^{\left(u\right)}_{\gvec\left(\xvec\right)}$ is singular, there exists a non-zero vector $\avec\in\Csp^{m}$ such that
\begin{equation}
\nonumber
\avec^{H}\bSigma^{\left(u\right)}_{\gvec\left(\xvec\right)}\avec={\rm{E}}\left[\left|\avec^{H}\left(\gvec\left(\Xmat\right)
-\muvec^{\left(u\right)}_{\gvec\left(\xvec\right)}\right)\right|^{2}\varphi_{u}\left(\Xmat\right);\px\right]
=0.
\end{equation}
Therefore, by (\ref{ExpDef}), (\ref{VarPhiDef}), the strict positiveness of $u\left(\cdot\right)$ and Proposition 2.3.9 in \cite{MeasureTheory}
\begin{equation}
\label{Zeroae}
{\rm{E}}\left[\left|\avec^{H}\left(\gvec\left(\Xmat\right)
-\muvec^{\left(u\right)}_{\gvec\left(\xvec\right)}\right)\right|^{2};\px\right]=0.
\end{equation}
The covariance of $\gvec\left(\Xmat\right)$ under $\px$ given by
\begin{equation} 
\nonumber
\bSigma_{\gvec\left(\xvec\right)}\triangleq{\rm{E}}\left[\left(\gvec\left(\Xmat\right)-\muvec_{\gvec\left(\xvec\right)}\right)\left(\gvec\left(\Xmat\right)-\muvec_{\gvec\left(\xvec\right)}\right)^{H};\px\right],
\end{equation}
where $\muvec_{\gvec\left(\xvec\right)}\triangleq{\rm{E}}\left[\gvec\left(\Xmat\right);\px\right]$ is the expectation of $\gvec\left(\Xmat\right)$ under $\px$. Hence,
one can verify that
\begin{eqnarray}
\label{SingSigma}
\avec^{H}\bSigma_{\gvec\left(\xvec\right)}\avec&+&\left|\avec^{H}\left(\muvec_{\gvec\left(\xvec\right)}-\muvec^{\left(u\right)}_{\gvec\left(\xvec\right)}\right)\right|^{2}
\\\nonumber&=&{\rm{E}}\left[\left|\avec^{H}\left(\gvec\left(\Xmat\right)-\muvec^{\left(u\right)}_{\gvec\left(\xvec\right)}\right)\right|^{2};\px\right]=0,
\end{eqnarray}
where the last equality stems from (\ref{Zeroae}).
Since the second summand in the l.h.s. of (\ref{SingSigma}) is nonnegative we conclude that $\avec^{H}\bSigma_{\gvec\left(\xvec\right)}\avec=0$, which implies that $\bSigma_{\gvec\left(\xvec\right)}$ is singular.\qed
\subsection{Proof Proposition \ref{ConsistentEst}:}
\label{Prop2Proof}
By the real-imaginary decompositions of the MT-covariance ${\bSigma}^{\left(u\right)}_{\Xmatsc}$ and its empirical version $\hat{\bSigma}^{\left(u\right)}_{\Xmatsc}$, given in Lemmas \ref{RICompSigma} and \ref{RICompSigmaHat} in Appendix \ref{RealImag}, it is sufficient to prove that $\hat{\bSigma}^{\left(g\right)}_{\Zmatsc}\rightarrow{\bSigma}^{\left(g\right)}_{\Zmatsc}$ a.s.  as $N\rightarrow\infty$.  According to (\ref{Rx_u_Est})-(\ref{hat_varphi})
\begin{eqnarray}
\label{SigmaUV_hat_Lim}
\nonumber
\lim\limits_{N\rightarrow\infty}\hat{\bSigma}^{\left(g\right)}_{\Zmatsc}&=&
\lim\limits_{N\rightarrow\infty}\frac{1}{N}\sum\limits_{n=1}^{N}\Zmat\left(n\right)\Zmat^{T}\left(n\right)\hat{\varphi}_{g}\left(\Zmat\left(n\right)\right)
\\&-&\lim\limits_{N\rightarrow\infty}\hat{\muvec}^{\left(g\right)}_{\Zmatsc}
\lim\limits_{N\rightarrow\infty}\hat{\muvec}^{\left(g\right)T}_{\Zmatsc},
\end{eqnarray}
where
\begin{equation}
\label{UV_Lim}
\begin{gathered}
\lim\limits_{N\rightarrow\infty}\frac{1}{N}\sum\limits_{n=1}^{N}\Zmat\left(n\right)\Zmat^{T}\left(n\right)\hat{\varphi}_{g}\left(\Zmat\left(n\right)\right)
\\=\frac{\lim\limits_{N\rightarrow\infty}\frac{1}{N}\sum\limits_{n=1}^{N}\Zmat\left(n\right)\Zmat^{T}\left(n\right)g\left(\Zmat\left(n\right)\right)}
{\lim\limits_{N\rightarrow\infty}\frac{1}{N}\sum\limits_{n=1}^{N}g\left(\Zmat\left(n\right)\right)}
\end{gathered}
\end{equation}
and
\begin{equation}
\label{muU_Lim}
\lim\limits_{N\rightarrow\infty}\hat{\muvec}^{\left(u\right)}_{\Zmatsc}=
\frac{\lim\limits_{N\rightarrow\infty}\frac{1}{N}\sum\limits_{n=1}^{N}\Zmat\left(n\right)g\left(\Zmat\left(n\right)\right)}
{\lim\limits_{N\rightarrow\infty}\frac{1}{N}\sum\limits_{n=1}^{N}g\left(\Zmat\left(n\right)\right)}.
\end{equation}
 
In the following, the limits of the series in the r.h.s. of (\ref{UV_Lim}) and (\ref{muU_Lim}) are obtained. Since $\left\{\Zmat\left(n\right)\right\}_{n=1}^{N}$ is a sequence of i.i.d. samples of $\Zmat$, then the random matrices 
$\left\{\Zmat\left(n\right)\Zmat^{T}\left(n\right)g\left(\Zmat\left(n\right),\Zmat\left(n\right)\right)\right\}_{n=1}^{N}$ in the r.h.s. of (\ref{UV_Lim}) define a sequence of i.i.d. samples of $\Zmat\Zmat^{T}g\left(\Zmat\right)$. Moreover, if the condition in (\ref{Cond11}) is satisfied, then for any pair of entries $Z_{k}$, $Z_{l}$ of $\Zmat$ we have that
\begin{eqnarray}
\nonumber
&&{\rm{E}}\left[\left|Z_{k}Z_{l}g\left(\Zmat\right)\right|;P_{\Zmatsc}\right]
\\\nonumber
&=&
{\rm{E}}\left[\left|Z_{k}g^{1/2}\left(\Zmat\right)\right|\left|Z_{l}g^{1/2}\left(\Zmat\right)\right|;P_{\Zmatsc}\right]
\\\nonumber&\leq&
\left({\rm{E}}\left[\left|Z_{k}\right|^{2}g\left(\Zmat\right);P_{\Zmatsc}\right]
{\rm{E}}\left[\left|Z_{l}\right|^{2}g\left(\Zmat\right);P_{\Zmatsc}\right]\right)^{1/2}\\\nonumber
&\leq&
{\rm{E}}\left[\left\|\Xmat\right\|^{2}_{2}u\left(\Xmat\right);\px\right]<\infty,
\end{eqnarray}
where the first semi-inequality stems from the H\"older inequality for random variables \cite{MeasureTheory}, and the second one stems from the definitions of $\Zmat$ and $g\left(\Zmat\right)$ in Lemma \ref{RICompSigma}  in Appendix \ref{RealImag}. Therefore, by Khinchine's strong law of large numbers (KSLLN) \cite{Folland}
\begin{equation}
\label{NomxyLim}
\lim\limits_{N\rightarrow\infty}\frac{1}{N}\sum\limits_{n=1}^{N}\Zmat\left(n\right)\Zmat^{T}\left(n\right)g\left(\Zmat\left(n\right)\right)
={\rm{E}}\left[\Zmat\Zmat^{T}g\left(\Zmat\right);P_{\Zmatsc}\right]\hspace{0.03cm}{\rm{a.s.}}
\end{equation}
Similarly, it can be shown that if the condition in (\ref{Cond11}) is satisfied, then by the KSLLN 
\begin{equation}
\label{NomxLim}
\lim\limits_{N\rightarrow\infty}\frac{1}{N}\sum\limits_{n=1}^{N}\Zmat\left(n\right)g\left(\Zmat\left(n\right)\right)
={\rm{E}}\left[\Zmat{g}\left(\Zmat\right);P_{\Zmatsc}\right]\hspace{0.2cm}{\rm{a.s.}},
\end{equation}
and
\begin{equation}
\label{DenLim}
\lim\limits_{N\rightarrow\infty}\frac{1}{N}\sum\limits_{n=1}^{N}g\left(\Zmat\left(n\right)\right)
={\rm{E}}\left[g\left(\Zmat\right);P_{\Zmatsc}\right]\hspace{0.2cm}{\rm{a.s.}}
\end{equation}
\begin{Remark}
\label{Remark3}
By (\ref{DenLim}), the definition of $g\left(\Zmat\right)$ in Lemma \ref{RICompSigma}  in Appendix \ref{RealImag}, and the assumption in (\ref{Cond}) the denominator in the r.h.s. of (\ref{UV_Lim}) and (\ref{muU_Lim})  is non-zero almost surely.
\end{Remark}

Therefore, since the sequences in the l.h.s. of (\ref{UV_Lim}) and (\ref{muU_Lim}) are obtained by continuous mappings of the elements of the sequences in their r.h.s., then by  (\ref{NomxyLim})-(\ref{DenLim}), and the Mann-Wald Theorem \cite{MannWald}
\begin{eqnarray}
\label{XY_Lim_Closed}
&&\lim\limits_{N\rightarrow\infty}\frac{1}{N}\sum\limits_{n=1}^{N}\Zmat\left(n\right)\Zmat^{T}\left(n\right)\hat{\varphi}_{g}\left(\Zmat\left(n\right)\right)
\\\nonumber
&=&\frac{{\rm{E}}\left[\Zmat\Zmat^{T}g\left(\Zmat\right);P_{\Zmatsc}\right]}{{\rm{E}}\left[g\left(\Zmat\right);P_{\Zmatsc}\right]}
={\rm{E}}\left[\Zmat\Zmat^{T}\varphi_{g}\left(\Zmat\right);P_{\Zmatsc}\right]\hspace{0.2cm}{\rm{a.s.}}
\end{eqnarray}
and
\begin{equation}
\label{mux_Lim_Closed}
\lim\limits_{N\rightarrow\infty}\hat{\muvec}^{\left(g\right)}_{\Zmatsc}
=\frac{{\rm{E}}\left[\Zmat{g}\left(\Zmat\right);P_{\Zmatsc}\right]}{{\rm{E}}\left[g\left(\Zmat\right);P_{\Zmatsc}\right]}
={\rm{E}}\left[\Zmat\varphi_{g}\left(\Zmat\right);P_{\Zmatsc}\right]\hspace{0.2cm}{\rm{a.s.}},
\end{equation}
where the last equalities in (\ref{XY_Lim_Closed}) and (\ref{mux_Lim_Closed}) follow from the definition of $\varphi_{g}\left(\cdot\right)$ in (\ref{VarPhiDef}).

Thus, since the sequence in the l.h.s. of (\ref{SigmaUV_hat_Lim}) is obtained by continuous mappings of the elements of the sequences in its r.h.s., then by (\ref{XY_Lim_Closed}) and (\ref{mux_Lim_Closed}), the Mann-Wald Theorem, and (\ref{MTCovZ}) it is concluded that $\hat{\bSigma}^{\left(g\right)}_{\Zmatsc}\rightarrow{\bSigma}^{\left(g\right)}_{\Zmatsc}$ a.s. as $N\rightarrow\infty$.\qed
\subsection{Real-imaginary decomposition of the MT-covariance and its empirical estimate}
\label{RealImag}
The real-imaginary decomposition of the MT-covariance of $\Xmat$ under $\qx$ is given in the following Lemma that follows directly from (\ref{VarPhiDef}), (\ref{MTCovZ}) and (\ref{MTMean}).
\begin{Lemma} 
\label{RICompSigma} 
Let $\Umat$ and $\Vmat$ denote the real and imaginary components of $\Xmat$, respectively. Define the real random vector $\Zmat\triangleq\left[\Umat^{T},\Vmat^{T}\right]^{T}$ and the MT-function $g\left(\Zmat\right)\triangleq{u}\left(\Xmat\right)$.
Let $\left[{\bSigma}^{\left(g\right)}_{\Zmatsc}\right]_{1,1}$, $\left[{\bSigma}^{\left(g\right)}_{\Zmatsc}\right]_{1,2}$, $\left[{\bSigma}^{\left(g\right)}_{\Zmatsc}\right]_{2,1}$ and $\left[{\bSigma}^{\left(g\right)}_{\Zmatsc}\right]_{2,2}$ denote the $p\times{p}$ submatrices of the real-valued MT-covariance ${\bSigma}^{\left(g\right)}_{\Zmatsc}$ satisfying
$$
{\bSigma}^{\left(g\right)}_{\Zmatsc}
=\left[{\begin{array}{*{20}c} \left[{\bSigma}^{\left(g\right)}_{\Zmatsc}\right]_{1,1} & \left[{\bSigma}^{\left(g\right)}_{\Zmatsc}\right]_{1,2}   \\
\left[{\bSigma}^{\left(g\right)}_{\Zmatsc}\right]_{2,1} & \left[{\bSigma}^{\left(g\right)}_{\Zmatsc}\right]_{2,2}\end{array}}\right].
$$
The real-imaginary decomposition of the MT-covariance of $\Xmat$ under $\qx$ takes the form:
\begin{eqnarray}
\label{RIMTCov}
{\bSigma}^{\left(u\right)}_{\Xmatsc}&=&\left(\left[{\bSigma}^{\left(g\right)}_{\Zmatsc}\right]_{1,1} + \left[{\bSigma}^{\left(g\right)}_{\Zmatsc}\right]_{2,2}\right)
\\\nonumber
&+& i\left(\left[{\bSigma}^{\left(g\right)}_{\Zmatsc}\right]_{2,1} - \left[{\bSigma}^{\left(g\right)}_{\Zmatsc}\right]_{1,2}\right).
\end{eqnarray}
\end{Lemma}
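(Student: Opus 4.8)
The plan is to obtain (\ref{RIMTCov}) by a direct computation: pass from the complex vector $\Xmat$ to the stacked real vector $\Zmat$, expand the Hermitian outer product in (\ref{MTCovZ}) into its real and imaginary parts, and read off the four $p\times p$ blocks of $\bSigma^{(g)}_{\Zmatsc}$. The first thing I would record is that, since $g(\Zmat)\triangleq u(\Xmat)$ holds identically and the map $\Xmat\mapsto\Zmat$ is a bijection of $\XCal\subseteq\Csp^{p}$ onto (a copy of) $\Rsp^{2p}$, the normalization in (\ref{VarPhiDef}) is unaffected: ${\rm{E}}[g(\Zmat);P_{\Zmatsc}]={\rm{E}}[u(\Xmat);\px]$, so $\varphi_{g}(\Zmat)=\varphi_{u}(\Xmat)$ pointwise, and condition (\ref{Cond}) for $u$ coincides with the corresponding condition for $g$. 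In particular $Q^{(g)}_{\Zmatsc}$ is a legitimate transformed measure whenever $\qx$ is.

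Next I would verify that centering is consistent across the two representations. Writing $\Xmat=\Umat+i\Vmat$ and using linearity of expectation in (\ref{MTMean}) gives $\muvec^{(u)}_{\Xmatsc}={\rm{E}}[\Umat\,\varphi_{g}(\Zmat);P_{\Zmatsc}]+i\,{\rm{E}}[\Vmat\,\varphi_{g}(\Zmat);P_{\Zmatsc}]$, whose real and imaginary parts are precisely the top and bottom $p$-blocks of $\muvec^{(g)}_{\Zmatsc}={\rm{E}}[\Zmat\,\varphi_{g}(\Zmat);P_{\Zmatsc}]$. Hence $\Xmat-\muvec^{(u)}_{\Xmatsc}=\tilde\Umat+i\tilde\Vmat$, where $\tilde\Umat$ and $\tilde\Vmat$ are the top and bottom blocks of $\tilde\Zmat\triangleq\Zmat-\muvec^{(g)}_{\Zmatsc}$.

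The main step is then the expansion $(\Xmat-\muvec^{(u)}_{\Xmatsc})(\Xmat-\muvec^{(u)}_{\Xmatsc})^{H}=(\tilde\Umat+i\tilde\Vmat)(\tilde\Umat^{T}-i\tilde\Vmat^{T})=(\tilde\Umat\tilde\Umat^{T}+\tilde\Vmat\tilde\Vmat^{T})+i(\tilde\Vmat\tilde\Umat^{T}-\tilde\Umat\tilde\Vmat^{T})$. Multiplying by the real, nonnegative weight $\varphi_{u}(\Xmat)=\varphi_{g}(\Zmat)$, taking expectation under $\px$, and recognizing ${\rm{E}}[\tilde\Umat\tilde\Umat^{T}\varphi_{g}(\Zmat);P_{\Zmatsc}]$, ${\rm{E}}[\tilde\Umat\tilde\Vmat^{T}\varphi_{g}(\Zmat);P_{\Zmatsc}]$, ${\rm{E}}[\tilde\Vmat\tilde\Umat^{T}\varphi_{g}(\Zmat);P_{\Zmatsc}]$ and ${\rm{E}}[\tilde\Vmat\tilde\Vmat^{T}\varphi_{g}(\Zmat);P_{\Zmatsc}]$ as the $(1,1)$, $(1,2)$, $(2,1)$ and $(2,2)$ blocks of $\bSigma^{(g)}_{\Zmatsc}$ (this is (\ref{MTCovZ}) applied to $\Zmat$ with MT-function $g$, in the block form displayed in the lemma) delivers (\ref{RIMTCov}) at once. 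There is no genuine obstacle here: the argument is pure bookkeeping and, as the statement says, follows directly from (\ref{VarPhiDef}), (\ref{MTCovZ}) and (\ref{MTMean}). The one point deserving a sentence of care is the pointwise identity $\varphi_{g}(\Zmat)=\varphi_{u}(\Xmat)$ together with the consequent equivalence of the integrability and positivity hypotheses, which is what guarantees that $Q^{(g)}_{\Zmatsc}$ exists and that all the matrix expectations above are finite and well defined.
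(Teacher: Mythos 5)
Your proposal is correct and follows essentially the same route as the paper, which states the lemma as a direct consequence of (\ref{VarPhiDef}), (\ref{MTCovZ}) and (\ref{MTMean}); your expansion of the centered Hermitian outer product $(\tilde\Umat+i\tilde\Vmat)(\tilde\Umat^{T}-i\tilde\Vmat^{T})$ and identification of the four weighted blocks is precisely that direct verification written out, including the needed observations that $\varphi_{g}(\Zmat)=\varphi_{u}(\Xmat)$ pointwise and that the real and imaginary parts of $\muvec^{(u)}_{\Xmatsc}$ are the blocks of $\muvec^{(g)}_{\Zmatsc}$.
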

Similarly, the real-imaginary decomposition of the empirical MT-covariance of $\Xmat$ under $\qx$ is given in the following Lemma that follows directly from (\ref{Rx_u_Est})-(\ref{hat_varphi}).
\begin{Lemma}
\label{RICompSigmaHat}  
Let $\Xmat\left(n\right)$, $n=1,\ldots,N$ denote a sequence of samples from $\px$, and let $\Umat\left(n\right)$, $\Vmat\left(n\right)$ denote the real and imaginary components of $\Xmat\left(n\right)$, respectively.
Define the real random vector $\Zmat\left(n\right)\triangleq\left[\Umat^{T}\left(n\right),\Vmat^{T}\left(n\right)\right]^{T}$ and the MT-function $g\left(\Zmat\left(n\right)\right)\triangleq{u}\left(\Xmat\left(n\right)\right)$.
Let $\left[\hat{\bSigma}^{\left(g\right)}_{\Zmatsc}\right]_{1,1}$, $\left[\hat{\bSigma}^{\left(g\right)}_{\Zmatsc}\right]_{1,2}$, $\left[\hat{\bSigma}^{\left(g\right)}_{\Zmatsc}\right]_{2,1}$ and $\left[\hat{\bSigma}^{\left(g\right)}_{\Zmatsc}\right]_{2,2}$ denote the $p\times{p}$ submatrices of the real-valued empirical MT-covariance $\hat{\bSigma}^{\left(g\right)}_{\Zmatsc}$ satisfying
$$
\hat{\bSigma}^{\left(g\right)}_{\Zmatsc}
=\left[{\begin{array}{*{20}c} \left[\hat{\bSigma}^{\left(g\right)}_{\Zmatsc}\right]_{1,1} & \left[\hat{\bSigma}^{\left(g\right)}_{\Zmatsc}\right]_{1,2}   \\
\left[\hat{\bSigma}^{\left(g\right)}_{\Zmatsc}\right]_{2,1} & \left[\hat{\bSigma}^{\left(g\right)}_{\Zmatsc}\right]_{2,2}\end{array}}\right].
$$
The real-imaginary decomposition of the empirical MT-covariance of $\Xmat$ under $\qx$ takes the form:
\begin{eqnarray}
\label{RIMTCov}
\hat{\bSigma}^{\left(u\right)}_{\Xmatsc}&=&\left(\left[\hat{\bSigma}^{\left(g\right)}_{\Zmatsc}\right]_{1,1} + \left[\hat{\bSigma}^{\left(g\right)}_{\Zmatsc}\right]_{2,2}\right)\\\nonumber
&+& i\left(\left[\hat{\bSigma}^{\left(g\right)}_{\Zmatsc}\right]_{2,1} - \left[\hat{\bSigma}^{\left(g\right)}_{\Zmatsc}\right]_{1,2}\right).
\end{eqnarray}
\end{Lemma}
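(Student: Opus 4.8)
\textbf{Proof plan for Lemma \ref{RICompSigmaHat}.}
The plan is to reproduce, at the level of the empirical quantities (\ref{Rx_u_Est})--(\ref{hat_varphi}), exactly the real/imaginary block algebra that yields Lemma \ref{RICompSigma}; since every identity involved holds term by term in the sample index $n$, no limiting or measure-theoretic argument is needed. First I would observe that, with the identification $g(\Zmat(n)) \triangleq u(\Xmat(n))$, the normalising sum $\sum_{n=1}^{N} u(\Xmat(n))$ in (\ref{hat_varphi}) is unchanged under the passage $\Xmat(n) \mapsto \Zmat(n)$, so the empirical weights satisfy $\hat{\varphi}_{u}(\Xmat(n)) = \hat{\varphi}_{g}(\Zmat(n))$ for every $n$. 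Writing $\Xmat(n) = \Umat(n) + i\Vmat(n)$ in (\ref{Mu_u_Est}) then gives $\hat{\muvec}^{\left(u\right)}_{\Xmatsc} = \hat{\muvec}^{\left(g\right)}_{1} + i\hat{\muvec}^{\left(g\right)}_{2}$, where $\hat{\muvec}^{\left(g\right)}_{1}$ and $\hat{\muvec}^{\left(g\right)}_{2}$ are the upper and lower $p$-subvectors of the real empirical MT-mean $\hat{\muvec}^{\left(g\right)}_{\Zmatsc} \triangleq \sum_{n} \Zmat(n)\hat{\varphi}_{g}(\Zmat(n))$.

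Next I would set $\tilde{\Umat}(n) \triangleq \Umat(n) - \hat{\muvec}^{\left(g\right)}_{1}$ and $\tilde{\Vmat}(n) \triangleq \Vmat(n) - \hat{\muvec}^{\left(g\right)}_{2}$, so that $\Xmat(n) - \hat{\muvec}^{\left(u\right)}_{\Xmatsc} = \tilde{\Umat}(n) + i\tilde{\Vmat}(n)$ while $\Zmat(n) - \hat{\muvec}^{\left(g\right)}_{\Zmatsc} = [\tilde{\Umat}^{T}(n),\tilde{\Vmat}^{T}(n)]^{T}$. Expanding the Hermitian outer product gives
\begin{eqnarray}
\nonumber
&&\bigl(\tilde{\Umat}(n)+i\tilde{\Vmat}(n)\bigr)\bigl(\tilde{\Umat}(n)+i\tilde{\Vmat}(n)\bigr)^{H}
\\\nonumber
&=&\bigl(\tilde{\Umat}(n)\tilde{\Umat}^{T}(n)+\tilde{\Vmat}(n)\tilde{\Vmat}^{T}(n)\bigr)+i\bigl(\tilde{\Vmat}(n)\tilde{\Umat}^{T}(n)-\tilde{\Umat}(n)\tilde{\Vmat}^{T}(n)\bigr).
\end{eqnarray}
Multiplying by $\hat{\varphi}_{g}(\Zmat(n))$ and summing over $n=1,\ldots,N$, I would then identify the four weighted sums $\sum_{n}\tilde{\Umat}(n)\tilde{\Umat}^{T}(n)\hat{\varphi}_{g}(\Zmat(n))$, $\sum_{n}\tilde{\Umat}(n)\tilde{\Vmat}^{T}(n)\hat{\varphi}_{g}(\Zmat(n))$, $\sum_{n}\tilde{\Vmat}(n)\tilde{\Umat}^{T}(n)\hat{\varphi}_{g}(\Zmat(n))$, $\sum_{n}\tilde{\Vmat}(n)\tilde{\Vmat}^{T}(n)\hat{\varphi}_{g}(\Zmat(n))$ with the $(1,1)$, $(1,2)$, $(2,1)$, $(2,2)$ blocks of $\hat{\bSigma}^{\left(g\right)}_{\Zmatsc}$ via the real-valued instance of (\ref{Rx_u_Est}) applied to $\Zmat$ (with $(\cdot)^{H}$ read as $(\cdot)^{T}$), which yields the asserted decomposition.

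I do not anticipate a genuine obstacle: the statement is a change-of-representation bookkeeping identity rather than a substantive result. The only point needing a moment's care is the first step — the invariance of both the empirical normalisation and the empirical mean under $\Xmat \mapsto \Zmat$ — since once those are in place the remainder is the same algebra used to prove Lemma \ref{RICompSigma}, now carried out under the empirical measure $\hat{P}_{\Xmatsc}$ in place of $\px$. Equivalently, one could bypass the computation entirely by invoking (\ref{StatFunc}): $\hat{\bSigma}^{\left(u\right)}_{\Xmatsc}$ is precisely the MT-covariance evaluated with $\hat{P}_{\Xmatsc}$ substituted for $\px$, so Lemma \ref{RICompSigma} applies verbatim to the empirical probability measure. \qed
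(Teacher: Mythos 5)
Your proposal is correct and matches the paper's treatment: the paper offers no separate argument, stating only that the lemma ``follows directly from (\ref{Rx_u_Est})--(\ref{hat_varphi})'', and your block-wise expansion of $\left(\tilde{\Umat}(n)+i\tilde{\Vmat}(n)\right)\left(\tilde{\Umat}(n)+i\tilde{\Vmat}(n)\right)^{H}$ with the shared empirical weights $\hat{\varphi}_{u}\left(\Xmat\left(n\right)\right)=\hat{\varphi}_{g}\left(\Zmat\left(n\right)\right)$ is exactly the bookkeeping that claim presupposes. Your closing remark that one could equally invoke the functional form (\ref{StatFunc}) and apply Lemma \ref{RICompSigma} at the empirical measure is a sound (and arguably cleaner) shortcut, fully consistent with the paper's intent.
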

\subsection{Proof Proposition \ref{RobustnessConditions}:}
\label{InfBoundProof}
The influence function (\ref{MT_COV_INF}) can be written as:
\begin{equation}
\begin{gathered}
\label{IFREW}
{\rm{IF}}_{{\tiny{\Psimat}}_{\xvec}^{(u)}}\left(\yvec;\px\right)={c{u\left(\yvec\right)}\left({\left\|\yvec-\muvec^{\left(u\right)}_{\Xmatsc}\right\|}^{2}_{2}\Gmat\left(\yvec\right) - {\bSigma}^{\left(u\right)}_{\Xmatsc}\right)},
\end{gathered}
\end{equation}
where $c\triangleq{\rm{E}}^{-1}[u(\Xmat);\px]$, $\Gmat\left(\yvec\right)\triangleq\psivec\left(\yvec\right)\psivec^{H}\left(\yvec\right)$ and
\begin{equation}
\nonumber
\psivec\left(\yvec\right)\triangleq\frac{\yvec-\muvec^{\left(u\right)}_{\Xmatsc}}{{\left\|\yvec-\muvec^{\left(u\right)}_{\Xmatsc}\right\|}_{2}}.
\end{equation}
Since ${\left\|\psivec\left(\yvec\right)\right\|}_{2}={1}$ for any $\yvec\in\Csp^{p}$, the real and imaginary components of $\Gmat\left(\yvec\right)$ in (\ref{IFREW}) are bounded. Thus, the influence function ${\rm{IF}}_{{\tiny{\Psimat}}_{\xvec}^{(u)}}\left(\yvec;\px\right)$ is bounded if $u\left(\yvec\right)$ and $u\left(\yvec\right)\left\|\yvec\right\|^{2}_{2}$ are bounded. \qed
\subsection{Proof Proposition \ref{InfFuncLim}:}
\label{InfFuncLimProof}
According to (\ref{MT_COV_INF}) 
\begin{eqnarray}
\label{InfFuncGauss}
&&\left\|{\rm{IF}}_{{\tiny{\Psimat}}_{\xvec}^{(u)}}\left(\yvec;\px\right)\right\|^{2}_{\textrm{Fro}}
\\\nonumber
&=&c{u}^{2}\left(\yvec\right)\Bigg(\left\|\yvec-\muvec^{\left(u\right)}_{\Xmatsc}\right\|_{2}^{4} 
-2\left\|\yvec-\muvec^{\left(u\right)}_{\Xmatsc}\right\|^{2}_{{\bSigmasc^{\left(u\right)}_{\xvec}}}
\\\nonumber&+&{\rm{tr}}\left[\left(\bSigma^{\left(u\right)}_{\Xmatsc}\right)^{2}\right]\Bigg)\\\nonumber
 &\leq&{c}\Bigg(\left(\sqrt{u\left(\yvec\right)}\left\|\yvec\right\|_{2} + \sqrt{u\left(\yvec\right)}\left\|\muvec^{\left(u\right)}_{\Xmatsc}\right\|_{2}\right)^{4}
\\\nonumber&+&{u}^{2}\left(\yvec\right){\rm{tr}}\left[\left(\bSigma^{\left(u\right)}_{\Xmatsc}\right)^{2}\right]\Bigg),
\end{eqnarray}
where $c\triangleq{\rm{E}}^{-2}[u(\Xmat);\px]$, $\left\|\yvec\right\|^{2}_{\bSigmasc}\triangleq\yvec^{H}\bSigma\yvec$, and the semi-inequality follows from the triangle-inequality and the positive semi-definiteness of $\bSigma^{\left(u\right)}_{\Xmatsc}$.
By (\ref{GaussKernel}),  
$$u\left(\yvec\right)=\uGausss\left(\yvec;\tau\right)=\left(\pi\tau^{2}\right)^{-p}\phi\left(r\right)$$ and 
$$u\left(\yvec\right)\left\|\yvec\right\|^{2}_{2}=\left(\pi\tau^{2}\right)^{-p}\left(\frac{\tau^{4}}{4}\frac{\partial^{2}\phi\left(r\right)}{\partial{r^{2}}} + \frac{\tau^{2}}{2}\phi\left(r\right)\right),$$ where $r\triangleq\left\|\yvec\right\|_{2}$ and 
$\phi\left(r\right)\triangleq{\exp}\left({-r^{2}/\tau^{2}}\right)$. Therefore, since for any fixed $\tau$ we have $\phi\left(r\right)\rightarrow{0}$ and $\frac{\partial^{2}\phi\left(r\right)}{\partial{r^{2}}}\rightarrow{0}$ as $r\rightarrow\infty$, we conclude that (\ref{IFLim}) holds.  \qed
\subsection{Proof of Theorem \ref{CompGaussStruct}}
\label{CompGaussStructProof}
According to  (\ref{ArrayModel}) and (\ref{CompGauss}), the conditional probability distribution, $P_{\Xmatsc|\nu,\Smatsc}$, of $\Xmat$ given $\nu$ and $\Smat$  is proper complex normal with location parameter $\muvec_{\Xmatsc|\nu,\Smatsc}=\Amat\Smat$ and covariance matrix $\bSigma_{\Xmatsc|\nu,\Smatsc}=\nu^{2}\Imat$. Therefore, using (\ref{ExpDef}), (\ref{GaussKernel}) and the law of total expectation one can verify that:
\begin{equation}
\label{ExpCopm1}
{\rm{E}}\left[\uGausss\left(\Xmat;\tau\right);\px\right]
={\rm{E}}\left[g\left(\alpha,\Smat;\tau\right);P_{\alpha,\Smatsc}\right]={\rm{E}}\left[h\left(\alpha;\tau\right);P_{\alpha}\right],
\end{equation}
\begin{equation}
\label{ExpCopm2}
{\rm{E}}\left[\Xmat\uGausss\left(\Xmat;\tau\right);\px\right]
=\Amat{\rm{E}}\left[\alpha^{2}\Smat{g}\left(\alpha,\Smat;\tau\right);P_{\alpha,\Smatsc}\right],
\end{equation}
and
\begin{eqnarray}
\label{ExpCopm3}
\nonumber
&&{\rm{E}}\left[\Xmat\Xmat^{H}\uGausss\left(\Xmat;\tau\right);\px\right]\\\nonumber
&=&\Amat{\rm{E}}\left[\alpha^{4}\Smat\Smat^{H}{g}\left(\alpha,\Smat;\tau\right);P_{\alpha,\Smatsc}\right]\Amat^{H}
\\&+&{\rm{E}}\left[\alpha^{2}\nu^{2}h\left(\alpha;\tau\right);P_{\nu}\right]\Imat,
\end{eqnarray}
where $\alpha\triangleq\sqrt{\frac{\tau^{2}}{\tau^{2}+\nu^{2}}}$, $$g\left(\alpha,\Smat;\tau\right)\triangleq{\left(\frac{\pi\tau^{2}}{\alpha^{2}}\right)}^{-p}\exp{(-{\alpha^{2}\|\Amat\Smat\|^{2}_{2}}/{\tau^{2}})},$$ and
$h\left(\alpha;\tau\right)\triangleq{\rm{E}}\left[g\left(\alpha,\Smat;\tau\right);P_{\Smatsc}\right]$.
Notice that by (\ref{CompGauss}) the expectation in the second summand of (\ref{ExpCopm3}) can be rewritten as:
\begin{eqnarray}
\label{ExpCopm4}
{\rm{E}}\left[\alpha^{2}\nu^{2}h\left(\alpha;\tau\right);P_{\nu}\right]&=&{\rm{E}}\left[\left|\alpha{W}\right|^{2}h\left(\alpha,\tau\right);P_{\alpha,W}\right]
\\\nonumber&-&\left|{\rm{E}}\left[\alpha{W}h\left(\alpha,\tau\right);P_{\alpha,W}\right]\right|^{2},
\end{eqnarray}
where the scalar $W$ denotes any of the identically distributed components of the noise vector $\Wmat$. Thus, using (\ref{VarPhiDef}), (\ref{MTCovZ}) and (\ref{ExpCopm1})-(\ref{ExpCopm4}) the relation (\ref{GaussMTCov}) is easily obtained. Finally, since the MT-function $g(\cdot,\cdot;\cdot)$ is strictly positive and $\alpha^{2}\Smat$ has a non-singular covariance under the joint probability measure $P_{\alpha,\Smat}$, by Property \ref{P3} in Proposition \ref{Prop1} the MT-covariance $\bSigma^{(g)}_{\alpha^{2}\Smatsc}(\tau)$ must be non-singular. \qed
\subsection{Proof of Theorem \ref{TauProp}}
\label{TauPropProof}
 Since $P_{\Xmatsc}$ is proper complex normal with location parameter $\muvec_{\Xmatsc}=\zerovec$ and covariance matrix $\bSigma_{\Xmatsc}$,  by (\ref{VarPhiDef}) and (\ref{MeasureTransformRadNik}) the transformed probability distribution $Q^{\left(\uGausss\right)}_{\Xmatsc}$ generated by the Gaussian MT-function (\ref{GaussKernel}) is proper complex normal with location parameter $\muvec_{\Xmatsc}^{\left(\uGausss\right)}=\zerovec$ and covariance matrix 
\begin{equation}
\label{GaussCovGauss}
\bSigma_{\Xmatsc}^{\left(\uGausss\right)}\left(\tau\right)=\left(\bSigma^{-1}_{\Xmatsc}+\tau^{-2}\Imat\right)^{-1}.
\end{equation}

According to the Gaussian Fisher information formula, stated in \cite{Schreier}, the Fisher informations for estimating $\theta_{k}$ under $\px$ and $Q^{\left(\uGausss\right)}_{\Xmatsc}$ are given by:
\begin{equation}
\label{GFIMP}
F\left(\theta_{k};\px\right)={\rm{tr}}\left[\left(\bSigma^{-1}_{\Xmatsc}\frac{\partial\bSigma_{\Xmatsc}}{\partial\theta_{k}}\right)^{2}\right]
\end{equation}
and
\begin{equation}
\label{GFIMQ}
F\left(\theta_{k};Q^{\left(\uGausss\right)}_{\Xmatsc}\right)={\rm{tr}}\left[\left(\left(\bSigma^{\left(\uGausss\right)}_{\Xmatsc}\left(\tau\right)\right)^{-1}\frac{\partial\bSigma^{\left(\uGausss\right)}_{\Xmatsc}\left(\tau\right)}{\partial\theta_{k}}\right)^{2}\right],
\end{equation}
respectively. By (\ref{GaussCovGauss}) and the matrix identity $\frac{\partial{\Cmat^{-1}}}{\partial\alpha}=-\Cmat^{-1}\frac{\partial{\Cmat}}{\partial\alpha}\Cmat^{-1}$ \cite{Cook}, where $\Cmat$ is some invertible complex matrix and $\alpha\in\Rsp$, we have 
\begin{equation}
\label{DbSigma}
\frac{\partial\bSigma_{\Xmatsc}^{\left(\uGausss\right)}}{\partial{\theta_{k}}}=\bSigma_{\Xmatsc}^{\left(\uGausss\right)}\bSigma_{\Xmatsc}^{-1}\frac{\partial\bSigma_{\Xmatsc}}{\partial{\theta_{k}}}\bSigma_{\Xmatsc}^{-1}
\bSigma_{\Xmatsc}^{\left(\uGausss\right)}.
\end{equation}

Therefore, using (\ref{GaussCovGauss})-(\ref{DbSigma}) and Lemma \ref{Lemma1} in Appendix \ref{TraceIneqApp} we obtain that:
\begin{eqnarray}
\label{GFIMQ3}
&&F\left(\theta_{k};Q^{\left(\uGausss\right)}_{\Xmatsc}\right)
\\\nonumber&\leq&
{\rm{tr}}\left[\left(\bSigma^{-1}_{\Xmatsc}\frac{\partial\bSigma_{\Xmatsc}}{\partial\theta_{k}}\right)^{2}\right]\lambda^{2}_{\rm{max}}\left(\bSigma^{-1}_{\Xmatsc}\bSigma^{\left(\uGausss\right)}_{\Xmatsc}\left(\tau\right)\right)
\\\nonumber&=&
F\left(\theta_{k};\px\right)\left(\frac{\tau^{2}}{\tau^{2}+\lambda_{\rm{min}}\left(\bSigma_{\Xmatsc}\right)}\right)^{2}
\end{eqnarray}
and
\begin{eqnarray}
\label{GFIMQ4}
&&F\left(\theta_{k};Q^{\left(\uGausss\right)}_{\Xmatsc}\right)
\\\nonumber&\geq&
{\rm{tr}}\left[\left(\bSigma^{-1}_{\Xmatsc}\frac{\partial\bSigma_{\Xmatsc}}{\partial\theta_{k}}\right)^{2}
\right]\lambda^{2}_{\rm{min}}\left(\bSigma^{-1}_{\Xmatsc}\bSigma^{\left(\uGausss\right)}_{\Xmatsc}\left(\tau\right)\right)
\\\nonumber&=&
F\left(\theta_{k};\px\right)\left(\frac{\tau^{2}}{\tau^{2}+\lambda_{\rm{max}}\left(\bSigma_{\Xmatsc}\right)}\right)^{2}.
\end{eqnarray}
The relations in (\ref{FIMRatio}) are obtained from (\ref{GFIMQ3}) and (\ref{GFIMQ4}).\qed
\subsection{Some useful trace inequalities}
\label{TraceIneqApp}  
\begin{Lemma}
\label{Lemma1}
Let $\Amat$, $\Bmat$ and $\Cmat$ denote Hermitian matrices with the same dimensions, and assume that $\Amat$ and $\Cmat$ are positive-definite and $\Bmat$ is positive-semidefinite.  
\begin{equation}
\label{TrIneq}
{{\rm{tr}}\left[\Amat\Bmat\right]}\lambda_{\rm{min}}\left(\Amat\Cmat\right){\leq}{\rm{tr}}\left[\Amat\Bmat\Amat\Cmat\right]\leq{\rm{tr}}\left[\Amat\Bmat\right]\lambda_{\rm{max}}\left(\Amat\Cmat\right),
\end{equation}
where $\lambda_{\rm{min}}\left(\cdot\right)$ and $\lambda_{\rm{max}}\left(\cdot\right)$ denote the minimal and maximal eigenvalues, respectively.
\end{Lemma}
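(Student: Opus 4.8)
The plan is to absorb the positive-definite factor $\Amat$ symmetrically into $\Bmat$ and $\Cmat$, thereby reducing the two-sided bound to the elementary fact that the trace of the product of a positive-semidefinite matrix with a Hermitian matrix lies between the smallest and the largest eigenvalue of the latter, each multiplied by the trace of the former. Concretely, since $\Amat$ is positive definite its Hermitian positive-definite square root $\Amat^{1/2}$ exists, and by the cyclic invariance of the trace
\begin{equation}
\nonumber
{\rm{tr}}\left[\Amat\Bmat\Amat\Cmat\right]={\rm{tr}}\left[\left(\Amat^{1/2}\Bmat\Amat^{1/2}\right)\left(\Amat^{1/2}\Cmat\Amat^{1/2}\right)\right]={\rm{tr}}\left[\Tmattilde\Tmatcheck\right],
\end{equation}
where $\Tmattilde\triangleq\Amat^{1/2}\Bmat\Amat^{1/2}$ and $\Tmatcheck\triangleq\Amat^{1/2}\Cmat\Amat^{1/2}$. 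Being congruences of $\Bmat$ and $\Cmat$ by the Hermitian matrix $\Amat^{1/2}$, the matrix $\Tmattilde$ is positive semidefinite and $\Tmatcheck$ is Hermitian positive definite.

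Next I would establish the elementary inequality for $\Tmattilde$ and $\Tmatcheck$. Using the spectral decomposition $\Tmattilde=\sum_{i}\mu_{i}\uvec_{i}\uvec_{i}^{H}$ with $\mu_{i}\geq0$ and $\{\uvec_{i}\}$ orthonormal, one has ${\rm{tr}}[\Tmattilde\Tmatcheck]=\sum_{i}\mu_{i}\,\uvec_{i}^{H}\Tmatcheck\uvec_{i}$; bounding each Rayleigh quotient $\uvec_{i}^{H}\Tmatcheck\uvec_{i}$ between $\lambda_{\rm{min}}(\Tmatcheck)$ and $\lambda_{\rm{max}}(\Tmatcheck)$ and using $\sum_{i}\mu_{i}={\rm{tr}}[\Tmattilde]$ gives
\begin{equation}
\nonumber
\lambda_{\rm{min}}(\Tmatcheck)\,{\rm{tr}}[\Tmattilde]\leq{\rm{tr}}[\Tmattilde\Tmatcheck]\leq\lambda_{\rm{max}}(\Tmatcheck)\,{\rm{tr}}[\Tmattilde].
\end{equation}

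Finally I would translate back: by cyclicity ${\rm{tr}}[\Tmattilde]={\rm{tr}}[\Amat^{1/2}\Bmat\Amat^{1/2}]={\rm{tr}}[\Amat\Bmat]$, and since $\Amat\Cmat=\Amat^{1/2}\Tmatcheck\Amat^{-1/2}$ is similar to $\Tmatcheck$, the product $\Amat\Cmat$ has exactly the eigenvalues of $\Tmatcheck$, which are real and positive, so that $\lambda_{\rm{min}}(\Tmatcheck)=\lambda_{\rm{min}}(\Amat\Cmat)$ and $\lambda_{\rm{max}}(\Tmatcheck)=\lambda_{\rm{max}}(\Amat\Cmat)$. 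Substituting these identities into the displayed squeeze yields (\ref{TrIneq}). The only step requiring care — and the one I would spell out explicitly — is this last identification: because $\Amat\Cmat$ is not Hermitian, the symbols $\lambda_{\rm{min}}(\Amat\Cmat)$ and $\lambda_{\rm{max}}(\Amat\Cmat)$ must be given meaning, and it is precisely the similarity $\Amat^{-1/2}(\Amat\Cmat)\Amat^{1/2}=\Amat^{1/2}\Cmat\Amat^{1/2}$ that exhibits them as the extreme eigenvalues of a Hermitian positive-definite matrix; everything else is routine.
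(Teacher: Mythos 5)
Your proof is correct and follows essentially the same route as the paper: both absorb $\Amat$ symmetrically via its Hermitian square root, use cyclic invariance of the trace, and identify $\lambda_{\rm{min}}(\Amat\Cmat)$ and $\lambda_{\rm{max}}(\Amat\Cmat)$ through the similarity $\Amat^{-1/2}(\Amat\Cmat)\Amat^{1/2}=\Amat^{1/2}\Cmat\Amat^{1/2}$. The only difference is that where the paper cites the trace inequalities of Patel and Toda for the intermediate bound on ${\rm{tr}}[\Tmattilde\Tmatcheck]$, you prove that elementary step directly via the spectral decomposition and Rayleigh-quotient bounds, which makes the argument self-contained.
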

\begin{proof}
By the invariance of the trace operator to multiplication order of two matrices, inequalities (I) and (II) in \cite{TraceIneq}, and the fact that $\Amat^{1/2}\Cmat\Amat^{1/2}$ is Hermitian and similar to $\Amat\Cmat$
we conclude that 
\begin{eqnarray}
\label{UpBound}
{\rm{tr}}\left[\Amat\Bmat\Amat\Cmat\right]
&\leq&
{\rm{tr}}\left[\Amat^{1/2}\Bmat\Amat^{1/2}\right]\left\|\Amat^{1/2}\Cmat\Amat^{1/2}\right\|_{S}
\\\nonumber&=&
{\rm{tr}}\left[\Amat\Bmat\right]\lambda_{\rm{max}}\left(\Amat\Cmat\right)
\end{eqnarray}
and
\begin{eqnarray}
\label{LowBound}
\nonumber{\rm{tr}}\left[\Amat\Bmat\Amat\Cmat\right]
&\geq&
{\rm{tr}}\left[\Amat^{1/2}\Bmat\Amat^{1/2}\right]\left\|\left(\Amat^{1/2}\Cmat\Amat^{1/2}\right)^{-1}\right\|^{-1}_{S}
\\&=&
{\rm{tr}}\left[\Amat\Bmat\right]\lambda_{\rm{min}}\left(\Amat\Cmat\right),
\end{eqnarray}
where $\left\|\cdot\right\|_{S}$ denotes the spectral-norm \cite{Horn}. 
\end{proof}
\subsection{Proof of Theorem \ref{Th2}}
\label{Th2Proof}
Assume that 
\begin{equation}
\label{AsProp}
\hat{\bSigma}^{\left(g\right)}_{\Zmatsc}-\bSigma^{\left(g\right)}_{\Zmatsc}=O\left(\sqrt{N^{-1}\log\log{N}}\right)\hspace{0.2cm}{\rm{a.s.}},
\end{equation}
where $\bSigma^{\left(g\right)}_{\Zmatsc}$ and $\hat{\bSigma}^{\left(g\right)}_{\Zmatsc}$ are defined in the real-imaginary decompositions of $\bSigma^{\left(u\right)}_{\Xmatsc}$ and $\hat{\bSigma}^{\left(u\right)}_{\Xmatsc}$, respectively, in Appendix  \ref{RealImag}. Then, $\hat{\bSigma}^{\left(u\right)}_{\Xmatsc}-\bSigma^{\left(u\right)}_{\Xmatsc}=O\left(\sqrt{N^{-1}\log\log{N}}\right)$ a.s., and therefore, by Lemma 3.2 in \cite{Zhao}
we obtain that the descendingly ordered  eigenvalues of  $\hat{\bSigma}^{\left(u\right)}_{\Xmatsc}$ and $\bSigma^{\left(u\right)}_{\Xmatsc}$ satisfy $\hat{\lambda}^{\left(u\right)}_{k}-\lambda^{\left(u\right)}_{k}=O\left(\sqrt{N^{-1}\log\log{N}}\right)$ a.s. for $k=1,\ldots,p$. Using this result and applying that  $\lambda^{\left(u\right)}_{k}$, $k=1,\ldots,p$ satisfy (\ref{EigCond}), the strong consistency of $\hat{q}$ follows directly from the proof of Theorem 3.1 in \cite{Zhao}.
Therefore, in order to complete the proof,  we show that under the condition (\ref{Cond3}) the assumption in (\ref{AsProp}) is satisfied. 

Similarly to (\ref{StatFunc}), the empirical MT-covariance $\hat{\bSigma}^{\left(g\right)}_{\Zmatsc}$ can be written as a statistical functional ${\Psimat}_{\Zmatsc}^{(g)}[\hat{P}_{\Zmatsc}]$ of the empirical probability measure $\hat{P}_{\Zmatsc}=\frac{1}{N}\sum\limits_{n=1}^{N}\delta_{\Zmatsc_{n}}$, 
where ${\Psimat}_{\Zmatsc}^{(g)}[P_{\Zmatsc}]=\bSigma^{(g)}_{\Zmatsc}$. The Taylor expansion of ${\Psimat}_{\Zmatsc}^{(g)}[\hat{P}_{\Zmatsc}]$ about ${P}_{\Zmatsc}$ is given by  \cite{Serfling}:
\begin{equation}
\label{TaylorExp}
{\Psimat}_{\Zmatsc}^{(g)}[\hat{P}_{\Zmatsc}]={\Psimat}_{\Zmatsc}^{(g)}[{P}_{\Zmatsc}]+\left.\frac{\partial{\Psimat}_{\Zmatsc}^{(g)}[(1-\epsilon){P}_{\Zmatsc}+\epsilon\hat{P}_{\Zmatsc}]}{\partial\epsilon}\right|_{\epsilon=0} + \Rmat^{\left(g\right)}_{\Zmatsc},
\end{equation}
where $\Rmat^{\left(g\right)}_{\Zmatsc}$ denotes the reminder term. 

Using (\ref{MTCovZ})-(\ref{Mu_u_Est}) and (\ref{StatFunc}), one can verify that 
\begin{eqnarray}
\label{Grad}
\nonumber
&&\left.\frac{\partial{\Psimat}_{\Zmatsc}^{(g)}[(1-\epsilon){P}_{\Zmatsc}+\epsilon\hat{P}_{\Zmatsc}]}{\partial\epsilon}\right|_{\epsilon=0}
\\\nonumber&=&\frac{{\rm{E}}\left[g\left(\Zmat\right);\hat{P_{\Zmatsc}}\right]}{{\rm{E}}\left[g\left(\Zmat\right);{P_{\Zmatsc}}\right]}
\Bigg(\hat{\bSigma}^{\left(g\right)}_{\Zmatsc}-{\bSigma}^{\left(g\right)}_{\Zmatsc}
\\\nonumber&+&\left(\hat{\muvec}^{\left(g\right)}_{\Zmatsc}-\muvec^{\left(g\right)}_{\Zmatsc}\right)\left(\hat{\muvec}^{\left(g\right)}_{\Zmatsc}-\muvec^{\left(g\right)}_{\Zmatsc}\right)^{T}\Bigg)
=
\frac{1}{N}\sum\limits_{n=1}^{N}\Hmat\left(\Zmat\left(n\right)\right)
\end{eqnarray}
is a V-statistic \cite{Serfling} with zero-mean kernel $\Hmat(\Zmat)\triangleq\frac{g(\Zmat)}{{\rm{E}}[g(\Zmat);P_{\Zmatsc}]}((\Zmat-\muvec^{(g)}_{\Zmatsc})(\Zmat-\muvec^{(g)}_{\Zmatsc})^{T}-\bSigma^{(g)}_{\Zmatsc})$. By condition (\ref{Cond3}) and the definitions of $\Zmat$ and $g\left(\cdot\right)$ in Lemma {\ref{RICompSigma}} in Appendix \ref{RealImag}, $\Hmat\left(\Zmat\right)$ must have finite variance entries. 

By (\ref{TaylorExp}) and (\ref{Grad}) we have that 
\begin{eqnarray}
\label{Reminder}
\Rmat^{\left(g\right)}_{\Zmatsc}&=&\left(1-\frac{{\rm{E}}\left[g\left(\Zmat\right);\hat{P_{\Zmatsc}}\right]}{{\rm{E}}\left[g\left(\Zmat\right);{P_{\Zmatsc}}\right]}\right)\left(\hat{\bSigma}^{\left(g\right)}_{\Zmatsc}-{\bSigma}^{\left(g\right)}_{\Zmatsc}\right)
\\\nonumber&-&\frac{{\rm{E}}\left[g\left(\Zmat\right);\hat{P_{\Zmatsc}}\right]}{{\rm{E}}\left[g\left(\Zmat\right);{P_{\Zmatsc}}\right]}\left(\hat{\muvec}^{\left(g\right)}_{\Zmatsc}-\muvec^{\left(g\right)}_{\Zmatsc}\right)\left(\hat{\muvec}^{\left(g\right)}_{\Zmatsc}-\muvec^{\left(g\right)}_{\Zmatsc}\right)^{T}.
\end{eqnarray}
Therefore, each entry of $\Rmat^{\left(g\right)}_{\Zmatsc}$ satisfies:
\begin{eqnarray}
\label{RemIneq}
\left|\left[\Rmat^{\left(g\right)}_{\Zmatsc}\right]_{k,l}\right|&\leq&{\left|C\right|}\left|\left[\hat{\bSigma}^{\left(g\right)}_{\Zmatsc}\right]_{k,l}-\left[{\bSigma}^{\left(g\right)}_{\Zmatsc}\right]_{k,l}\right|
\\\nonumber&+&\left|D\right|\left| \left[\hat{\muvec}^{\left(g\right)}_{\Zmatsc}\right]_{l}-\left[\muvec^{\left(g\right)}_{\Zmatsc}\right]_{l} \right|\hspace{0.2cm}{\rm{a.s.}},
\end{eqnarray}
where
\begin{eqnarray}
\label{Cdef}
C&\triangleq&1-\frac{{\rm{E}}\left[g\left(\Zmat\right);\hat{P_{\Zmatsc}}\right]}{{\rm{E}}\left[g\left(\Zmat\right);{P_{\Zmatsc}}\right]}
\\\nonumber&=&c\frac{1}{N}\sum\limits_{n=1}^{N}\left(g\left(\Zmat\left(n\right)\right) - {\rm{E}}\left[g\left(\Zmat\right);{P_{\Zmatsc}}\right]\right),
\end{eqnarray}
\begin{eqnarray}
\label{Ddef}
D&\triangleq&\frac{{\rm{E}}\left[g\left(\Zmat\right);\hat{P_{\Zmatsc}}\right]}{{\rm{E}}\left[g\left(\Zmat\right);{P_{\Zmatsc}}\right]}\left(\left[\hat{\muvec}^{\left(g\right)}_{\Zmatsc}\right]_{k}-\left[\muvec^{\left(g\right)}_{\Zmatsc}\right]_{k} \right)
\\\nonumber&=&c\frac{1}{N}\sum\limits_{n=1}^{N}\left(Z_{k}\left(n\right) - \frac{{\rm{E}}\left[Z_{k}g\left(\Zmat\right);P_{\Zmatsc}\right]}{{\rm{E}}\left[g\left(\Zmat\right);P_{\Zmatsc}\right]}\right)g\left(\Zmat\left(n\right)\right),
\end{eqnarray}
$c\triangleq{\rm{E}}^{-1}[g(\Zmat);\pz]$,  and second equality in (\ref{Ddef}) stems from  (\ref{MTMean}) and (\ref{Mu_u_Est}), with $Z_{k}$ and $Z_{k}(n)$ denoting the $k$-th entry of $\Zmat$ and $\Zmat(n)$, respectively.

Under the condition (\ref{Cond3}), and the definitions of $\Zmat$, $\Zmat\left(n\right)$ and $g\left(\cdot\right)$ in  Lemmas {\ref{RICompSigma}}, {\ref{RICompSigmaHat}} in Appendix \ref{RealImag}, the summands in (\ref{Cdef}) and (\ref{Ddef}) have finite variances. Therefore, by the i.i.d. assumption and the law of iterated logarithm (LIL) \cite{Serfling} we have that
\begin{equation}
\label{CO}
C=O\left(\sqrt{N^{-1}\log\log{N}}\right)\hspace{0.2cm}{\rm{a.s.}}
\end{equation}
and
\begin{equation}
\label{DO}
D=O\left(\sqrt{N^{-1}\log\log{N}}\right)\hspace{0.2cm}{\rm{a.s.}}
\end{equation}
Furthermore, under condition (\ref{Cond11}) that follows from (\ref{Cond3}) $\hat{\muvec}^{\left(u\right)}_{\Zmatsc}$ and $\hat{\bSigma}^{\left(u\right)}_{\Zmatsc}$ are strongly consistent, as shown in Appendix \ref{Prop2Proof}. Hence, by (\ref{RemIneq}), (\ref{CO}) and (\ref{DO}), we conclude that the reminder (\ref{Reminder}) satisfies $\Rmat^{\left(g\right)}_{\Zmatsc}=o\left(\sqrt{N^{-1}\log\log{N}}\right)$ a.s. Thus, by Theorem 6.4.2 in \cite{Serfling} we conclude that (\ref{AsProp}) holds.\qed
\subsection{Proof of Proposition \ref{FBSTh}}
\label{FBSThProof}
Similarly to the proof of Theorem \ref{CompGaussStruct}, it can be shown that under the array model (\ref{ArrayModel}), the coherent signals model (\ref{CoherentModel}), and the compound Gaussian noise assumption (\ref{CompGauss}),
the Gaussian MT-covariance matrix of the array output takes the form:
\begin{equation}
\label{SigmaCorr}
\bSigma^{(\uGausss)}_{\Xmatsc}\left(\tau\right)=\dvec\dvec^{H}\sigma^{2(g)}_{\alpha^{2}{s}}\left(\tau\right)+\sigma^{2(h)}_{\alpha\Wmatsc}(\tau)\Imat,
\end{equation}
where $\dvec\triangleq\Amat\bxi$ and $\sigma^{2(g)}_{\alpha^{2}{s}}\left(\tau\right)$ is the variance of $\alpha^{2}\left(n\right)s\left(n\right)$, $\alpha\triangleq\sqrt{\frac{\tau^{2}}{\tau^{2}+\nu^{2}}}$, under the transformed joint probability measure $Q^{\left(g\right)}_{\alpha,s}$ with the MT-function $$g\left(\alpha,s;\tau\right)\triangleq{\left(\frac{\pi\tau^{2}}{\alpha^{2}}\right)}^{-p}\exp{(-{\alpha^{2}\|\dvec\|^{2}_{2}}\left|s\right|^{2}/{\tau^{2}})}.$$ The term  $\sigma^{2(h)}_{\alpha{\Wmatsc}}(\tau)$ is the variance of $\alpha\left(n\right)\Wmat\left(n\right)$ under the transformed joint probability measure $Q^{\left(h\right)}_{\alpha,\Wmat}$ with the MT-function $h\left(\alpha;\tau\right)\triangleq{\rm{E}}\left[g\left(\alpha,s;\tau\right);P_{s}\right]$. 

The Gaussian MT-covariance (\ref{SigmaCorr}) is structured similarly to the standard covariance $\bSigma_{\Xmatsc}$ for coherent signals \cite{Pillai}. Therefore, by the ULA assumption and (\ref{FOR})-(\ref{FBGMTC}), the proof follows using the same argumentations in \cite{Pillai} for the spatially smoothed version of $\bSigma_{\Xmatsc}$. \qed

\bibliographystyle{IEEEbib}

\bibliography{strings,refs}

\end{document}